\documentclass[twocolumn,secnumarabic,amssymb, nobibnotes, aps, prd, superscriptaddress,longbibliography]{revtex4-2}
\usepackage{booktabs}
\usepackage{xcolor}
\usepackage{tikz}
\usepackage{tikz-3dplot}
\usetikzlibrary{3d}
\usepackage{float}
\usepackage{microtype}
\usepackage{ragged2e}
\usepackage{amsmath,amssymb,bm,amsfonts}
\usepackage{amsthm}
\usepackage{bm}
\usepackage{braket}
\usepackage{placeins}
\usepackage{graphicx}
\usepackage{epstopdf}
\usepackage{booktabs}
\usepackage{enumitem}
\usepackage{xcolor}

\usepackage{float}
\usepackage[justification=centering]{caption}
\usepackage{subcaption}

\usepackage[ruled,vlined]{algorithm2e}
\usepackage[title]{appendix}

\usepackage{adjustbox}
\usepackage{textcomp}
\usepackage{mathrsfs}

\usepackage[
bookmarks=true,
colorlinks=true,
linkcolor=blue,
urlcolor=blue,
citecolor=blue,
bookmarks=true,
hyperindex=true
]{hyperref}

\makeatletter

\newcommand{\Rmnum}[1]{\expandafter\@slowromancap\romannumeral #1@}
\makeatother

\begin{document}
	
	\title{Local distinguishability of five quadripartite orthogonal product states}
	
  \author{Guang-Bao  \surname{Xu}}%
   \affiliation{College of Computer Science and Engineering, Shandong University of Science and Technology, Qingdao, 266590, China}
   \affiliation{Shandong Key Laboratory of Smart Mine Information Technology,  Qingdao, 266590, China}

	\author{Jie \surname{Zhan}}%
    \affiliation{College of Computer Science and Engineering, Shandong University of Science and Technology, Qingdao, 266590, China}
    
   \author{Yu-Guang \surname{Yang}}%
   \affiliation{College of Cyberspace Science and Technology, Beijing University of Technology, Beijing, 100124, China}

   \author{Dong-Huan \surname{Jiang}}
   \email{donghuan\_jiang@163.com}
   \affiliation{College of Mathematics and Systems Science, Shandong University of Science and Technology, Qingdao, 266590, China}
	
	\date{\today}
	
	\begin{abstract}
     The local distinguishability of five bipartite and tripartite orthogonal product states (OPSs) has been effectively resolved, while that of five quadripartite OPSs has not yet been effectively addressed. In this paper, we give the local distinguishability of five quadripartite OPSs, where any two of these five states are orthogonal on exactly one subsystem.
     Based on the vectors of the numbers of pairwise orthogonality relations, we classify different sets of five quadripartite OPSs into nine categories, which correspond to 371 distinct orthogonality graphs. We show that the sets of five quadripartite OPSs corresponding to 355 of the 371 cases can be perfectly distinguished by local operations and classical communication (LOCC), whereas the sets of five quadripartite OPSs for the remaining 16 cases cannot be exactly distinguished by LOCC. Our results enrich the theory of nonlocality for sets of OPSs.
     
     \end{abstract}
	
	\maketitle

    \section{\label{sec1}Introduction\protect}
    
    The local distinguishability of orthogonal product states (OPSs) constitutes a fundamental problem in quantum information theory. The core question is whether a set of OPSs can be perfectly distinguished by spatially separated parties under local operations and classical communication (LOCC). The phenomenon that a set of OPSs cannot be perfectly distinguished by LOCC was first revealed by Bennett \emph{et al.}  \cite{ref1} in 1999. This phenomenon, commonly known as nonlocality without entanglement, shows that entanglement is not the sole reason why a set of orthogonal quantum states cannot be locally discriminated. In the same year, unextendible product bases were proved that cannot be perfectly distinguished by LOCC \cite{ref2}. All these works have aroused considerable interest \cite{ref3,ref4,ref5}.
    
    Inspired by the works of Bennett et al., many researchers have started to investigate quantum nonlocality without entanglement, and a great number of results have been proposed. These research results fall mainly into the following categories. The first category concerns the theory of constructing sets of OPSs that exhibit nonlocality. Zhang \emph{et al.}  \cite{ref6, ref8} and Wang  \emph{et al.} \cite{ref7} proposed different methods to construct a nonlocal set of bipartite OPSs. Xu \emph{et al.} \cite{ref12} gave a method to construct a nonlocal set of multipartite OPSs. With further advances in research, an increasing number of results \cite{ref9,ref13,ref14,ref15,ref16,ref17,ref18} have been obtained. The second category focuses on research related to strong nonlocality. In 2019, Halder \emph{et al.} \cite{ref19} proposed the concept of strong quantum nonlocality. Subsequently, many results \cite{ref20,ref21,ref22,ref23,ref24,ref25} on the construction methods of strongly nonlocal sets have been proposed. The third category deals with the minimal number of elements required for a nonlocal set to retain its nonlocality, namely minimal nonlocality. Zhu \emph{et al.} \cite{ref26} proposed the concept of minimal nonlocality, and gave a method to construct a nonlocal set of OPSs with minimal nonlocality in bipartite system. Xu \emph{et al.} \cite{ref27} gave a method to construct a nonlocal set of orthogonal quantum states in any multipartite quantum system. All the above‑mentioned works have greatly enriched the theory of local distinguishability of quantum states.
    
    Recently, the local distinguishability of orthogonal product states with a given number has attracted considerable research interest. Walgate \emph{et al.}  \cite{ref3} proved that every pair of orthogonal pure states can be exactly distinguished by LOCC. DiVincenzo \emph{et al.} \cite{ref4} proved that any multipartite OPSs set with at most
    three members, as well as any bipartite OPSs set with at most four members, is locally distinguishable. Xu \emph{et al.} gave the local distinguishability of four OPSs in a multipartite quantum system \cite{ref31}, and that of five OPSs in bipartite and tripartite systems \cite{ref32}. These works have perfected and enriched the nonlocality theory of sets of OPSs. 
    
    Different from the previous results, we investigate the local distinguishability of five  OPSs in this paper, where any two of those five OPSs are orthogonal on exactly one subsystem. By analyzing the distribution of pairwise orthogonality relations across four subsystems, we classify the corresponding orthogonality structures into 371 cases. We prove that a set of five  OPSs corresponding to any of the 355 cases is locally distinguishable while that corresponding to the remaining 16 cases cannot be perfectly distinguished by LOCC. The rest of the paper is structured as follows. In Sec.~\ref{sec2}, some  necessary definitions and preliminary results are introduced. In Sec.~\ref{sec3} we give the local distinguishability of five  OPSs. In Sec.~\ref{sec4}, the 16 cases that cannot be perfectly distinguished by LOCC are discussed. A brief conclusion is given in Sec.~\ref{sec5}. 
    
    \section{\label{sec2}Preliminaries}
	\theoremstyle{remark}
	\newtheorem{definition}{\indent Definition}
	\newtheorem{lemma}{\indent Lemma}
	\newtheorem{theorem}{\indent Theorem}
	\newtheorem{corollary}{\indent Corollary}
	\def\QEDclosed{\mbox{\rule[0pt]{1.3ex}{1.3ex}}}
	\def\QED{\QEDclosed}
	\def\proof{\indent{\em Proof}.}
	\def\endproof{\hspace*{\fill}~\QED\par\endtrivlist\unskip}
	
	In this section, several notions used in the subsequent analysis are introduced.
	
	To characterize the orthogonality structure of a set of OPSs, we first recall the notion of an orthogonality graph.
	
	\begin{definition}
		\label{def1} \cite{ref4}
		Let $\mathcal{H}=\bigotimes^{m}_{i=1}\mathcal{H}_{i}$ be an $m$-partite Hilbert space with dim $\mathcal{H}_{i}=d_{i}$. Let 
$S=\{\vert \psi_{j}\rangle\equiv\bigotimes^{m}_{i=1}\vert \varphi_{i,j}\rangle|j=1,2,...,n\}$ be an orthogonal product basis in $\mathcal{H}$. We represent $S$ as a graph $G = (V, E_{1}\cup E_{2}\cup \cdots \cup E_{m})$, where the set of edges $E_{i}$ have color $i$. The states $\vert \psi_{j} \rangle \in S$ are represented as the vertices $V$. There exists an edge $e$ of color $i$ between the vertices $v_{k}$ and $v_{l}$, i.e., $e \in E_{i}$, when states $\vert \psi_{k}\rangle$ and $\vert \psi_{l}\rangle$ are orthogonal on $\mathcal{H}_{i}$. Since all the states in the product basis are mutually orthogonal, every vertex is connected to all the other vertices by at least one edge of some color. The graph $G$ is called the orthogonality graph of the product basis.
	\end{definition}
	
	In this paper, we investigate the local distinguishability of five  OPSs, where any two of those five states are orthogonal on exactly one subsystem. Hence, the corresponding orthogonality graph is a complete graph with five vertices, and every edge has a unique color.
	
	\begin{definition}
		\label{def2}
	For five  OPSs, a quadruple $(a,b,c,d)$ is called the vector of the numbers of pairwise orthogonality relations, where $a$ denotes the number of pairwise orthogonality relations on the first subsystem, while $b$, $c$, and $d$ denote the corresponding numbers on the second, third, and fourth subsystems, respectively.
	\end{definition}
	
	\begin{definition}
		\label{def3}\cite{ref33}  
    Let $G=(V,\,E_1\cup E_2\cup E_3\cup E_4)$ be the orthogonality graph of five  OPSs, and let $v_j\in V$ be a vertex of $G$. The degree of the $i$-th party of $v_j$, denoted 
  by $\deg_i(v_j)$, is the number of edges incident with $v_j$ that belong to $E_i$.
	\end{definition}
	
	Note that $\deg_i(v_j)$ represents the number of product states that are
	orthogonal to the state corresponding to $v_j$ on the $i$-th subsystem.
	
	\begin{definition}
		\label{def:local-neighborhood}
	Let $v$ be a vertex of the orthogonality graph $G=(V,\,E_1\cup E_2\cup E_3\cup E_4)$. For each subsystem $p\in\{1,2,3,4\}$, let $N_p(v)=\{u\in V\setminus\{v\}\mid \{u,v\}\in E_p\}$ denote the set of vertices whose corresponding states are orthogonal to the state corresponding to $v$ on the $p$-th subsystem. Clearly, $|N_p(v)|=\deg_p(v)$.
	\end{definition}
	
	\begin{definition} \cite{ref4,ref31}
		\label{def4}
		Let $\mathcal{H}=\bigotimes_{i=1}^{m}\mathcal{H}_i$ be an $m$-partite Hilbert
		space with $\dim\mathcal{H}_i=d_i$, and let
		$\mathcal{S}=\{\ket{\psi_j}\equiv\bigotimes_{i=1}^{m}\ket{\phi_{i,j}}
		\mid j=1,2,\ldots,n\}$ be a set of OPSs in $\mathcal{H}$. The local rank of
		$\mathcal{S}$ on the $i$-th subsystem, denoted by $r_i$, is defined as the rank
		of the vector set $\{\ket{\phi_{i,j}}\mid j=1,2,\ldots,n\}$.
	\end{definition}
	
	\begin{lemma}\cite{ref4}
		\label{lem:four-bipartite-ops}
		Let $\mathcal{S}$ be a set of bipartite OPSs with four or fewer members in any dimension where such a set can exist. Then $\mathcal{S}$ can be distinguished by local incomplete von Neumann measurements and classical communication.
	\end{lemma}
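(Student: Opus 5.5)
The plan is to prove that some party, Alice (A) or Bob (B), can always perform a nontrivial orthogonality-preserving projective measurement. A nontrivial measurement splits the set into strictly smaller subsets, each still consisting of orthogonal product states, so we can then argue by induction on $n=|\mathcal{S}|\le 4$. The base cases are immediate: for $n=1$ there is nothing to do, and for $n=2$ the two states are orthogonal on some subsystem, so the party holding it measures in a basis containing the two local vectors. (Walgate \emph{et al.} \cite{ref3} covers pairs in general, but in the product case this is trivial.) Each measurement outcome leaves a set of at most $n$ states, and we require at least two branches to be nonempty, so in every branch the set strictly shrinks and induction closes the argument.

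The key tool is a partition lemma. Suppose party A's local vectors $\{\ket{\alpha_j}\}$ can be split into nonempty groups $\mathcal{G}_1,\dots,\mathcal{G}_k$ with $k\ge2$ such that vectors in different groups are orthogonal, i.e.\ their spans are mutually orthogonal subspaces $V_1,\dots,V_k$. Then A measures with the projectors onto $V_1,\dots,V_k$ (completed by the projector onto the orthogonal complement). This is an incomplete von Neumann measurement. It leaves every state with local vector in $\mathcal{G}_t$ unchanged on outcome $t$ and eliminates all the others. The same holds for B. So it suffices to show that, for $n\in\{3,4\}$, either A's or B's orthogonality graph restricted to its own color is disconnected in a way that yields such a partition. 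More precisely, consider the graph on the states in which two states are joined when their A-vectors are \emph{not} orthogonal, and use the connected components of this graph. If A's vectors are not all pairwise non-orthogonal, we also want to allow grouping equal or parallel vectors together.

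The main step is a combinatorial case analysis using the fact that every pair is orthogonal on A or on B. Let $G_A$ be the "non-orthogonal on A" graph and $G_B$ the analogous graph for B. Orthogonality of the states means every pair lies outside $G_A$ or outside $G_B$, so $G_A$ and $G_B$ are edge-disjoint. If $G_A$ is disconnected, A measures by components; if $G_B$ is disconnected, B does. The remaining case is when both are connected spanning subgraphs of $K_n$ that are edge-disjoint. This needs $2(n-1)\le \binom n2$, which fails for $n=3$ ($4>3$) but is possible for $n=4$: two edge-disjoint spanning paths, e.g.\ $P_4$ and its complement, which is again $P_4$.

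This last configuration is the main obstacle. Here $G_A$ is the path $1\!-\!2\!-\!3\!-\!4$, so A has $\ket{\alpha_1}\perp\ket{\alpha_3}$, $\ket{\alpha_1}\perp\ket{\alpha_4}$, and $\ket{\alpha_2}\perp\ket{\alpha_4}$, while B has $G_B$ equal to $2\!-\!4\!-\!1\!-\!3$. For this case I would not use component splitting. Instead, A measures along $\ket{\alpha_1}$: the projectors are $P=\ket{\alpha_1}\bra{\alpha_1}$ and $I-P$. State 1 survives only under $P$. States 3 and 4 survive only under $I-P$. State 2 may survive under both, but after A's measurement its A-part becomes $P\ket{\alpha_2}$ or $(I-P)\ket{\alpha_2}$ (renormalized). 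On outcome $P$ the remaining states are $\{1,2\}$, and they are still orthogonal on B because the pair $1,2$ is an A-edge of $G_A$, hence orthogonal on B. On outcome $I-P$ the remaining states are $\{2,3,4\}$. The modified $\ket{\alpha_2}$ is still orthogonal to $\ket{\alpha_4}$, since $\ket{\alpha_4}$ lies in the range of $I-P$. The pair $2,3$ is orthogonal on B. So the sets remain orthogonal, with at most three states each, and the induction hypothesis finishes the argument. I would double-check that this measurement preserves orthogonality of every surviving pair, since that is the delicate point, and note that the resulting protocol uses only local (incomplete) von Neumann measurements, as the statement requires.
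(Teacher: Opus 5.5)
Your proof is correct, but it takes a different route from the natural one. The paper does not reprove this lemma; it imports it from \cite{ref4}. The mechanism the paper uses throughout, namely the projection in Theorem~\ref{thm:22-reduction}, already gives a uniform one-step proof. For four states, each vertex has three edges and there are only two colors (assign a doubly orthogonal pair either color). So any vertex $v$ has two edges of one color, say A-edges to $u_1,u_2$; let $w$ be the remaining state. Alice measures $\{P,I-P\}$ with $P=\ket{\alpha_v}\bra{\alpha_v}$.
\begin{itemize}
\item Outcome $P$ leaves at most $\{v,w\}$, and $w$ survives only if $vw$ is a B-edge.
\item Outcome $I-P$ leaves $\{u_1,u_2,w\}$. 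Every pair there involves some $u_i$, and since $\braket{\alpha_{u_i}|\alpha_v}=0$ we get $\bra{\alpha_{u_i}}(I-P)\ket{\alpha_y}=\braket{\alpha_{u_i}|\alpha_y}$.
\end{itemize}
Lemma~\ref{lem:three-ops} then finishes. Your treatment of the complementary-path case is exactly this measurement with $v=1$, $\{u_1,u_2\}=\{3,4\}$, $w=2$. So the component analysis is not logically needed: the single-vertex projection works in every configuration.

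What your route adds is structural information. Outside one configuration, some party can split the states into blocks with mutually orthogonal local supports, so the first measurement disturbs no state. Your count of edge-disjoint connected spanning subgraphs singles out the complementary pair of paths in $K_4$ as the only configuration in which a party must genuinely modify a local vector. The uniform argument is shorter, and it is the one that extends to the multipartite reductions of Theorems~\ref{thm:22-reduction} and~\ref{thm:211-reduction}. Your counting relies on $G_A$ and $G_B$ being edge-disjoint, which fails with three or more parties: a pair orthogonal on a single party lies in every other party's non-orthogonality graph.

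Two small points to tighten. First, replace the ``e.g.'' by an argument that the complementary paths are forced. If $G_A$ and $G_B$ are both connected, they are complementary spanning trees of $K_4$. The complement of the star $K_{1,3}$ is a triangle that isolates the center, so both must be copies of $P_4$.

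Second, your opening claim that having at least two nonempty branches makes every branch strictly smaller is false in general. A state can survive in several branches, as your state $2$ does. What the induction needs, and what you do verify for both of your measurements, is that each branch separately contains fewer than $n$ states.

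The remark about grouping parallel vectors is redundant, since non-orthogonal vectors automatically land in the same component.
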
	
	\begin{lemma}\cite{ref4}  
		\label{lem:three-ops}
		Any set of multipartite OPSs with three or fewer members is distinguishable by local incomplete von Neumann measurements and classical communication.
	\end{lemma}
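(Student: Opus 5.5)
The plan is to prove the statement constructively, by induction on the number of states. The key point is that a single local projective measurement on one subsystem, chosen along a pair of locally orthogonal vectors, either eliminates states or leaves their pairwise orthogonality intact. All measurements used will be incomplete von Neumann measurements, i.e. families of mutually orthogonal local projectors summing to the identity. Because each projector acts on a single factor, every post-measurement state is again a product state.

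\emph{One or two states.} A single state needs no measurement. For two states $\ket{\psi_1},\ket{\psi_2}$, orthogonality of product states forces some subsystem $i$ with $\langle\phi_{i,1}|\phi_{i,2}\rangle=0$. Party $i$ measures $\{P=\ket{\phi_{i,1}}\bra{\phi_{i,1}},\,I-P\}$. Outcome $P$ annihilates $\ket{\psi_2}$, and outcome $I-P$ annihilates $\ket{\psi_1}$, so the outcome identifies the state.

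\emph{Three states.} Let the states be $\ket{\psi_1},\ket{\psi_2},\ket{\psi_3}$, and choose a subsystem $i$ on which $\ket{\psi_1}$ and $\ket{\psi_2}$ are orthogonal. Party $i$ performs the three-outcome measurement
\[
P_1=\ket{\phi_{i,1}}\bra{\phi_{i,1}},\quad P_2=\ket{\phi_{i,2}}\bra{\phi_{i,2}},\quad P_3=I-P_1-P_2 .
\]
The case analysis runs as follows.

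\begin{enumerate}
\item Outcome $P_3$ kills $\ket{\psi_1}$ and $\ket{\psi_2}$, so only $\ket{\psi_3}$ can remain.
\item Outcome $P_1$ kills $\ket{\psi_2}$. The state $\ket{\psi_1}$ is unchanged, and $\ket{\psi_3}$ becomes $\ket{\psi_3'}$, whose $i$-th factor is replaced by a multiple of $\ket{\phi_{i,1}}$ and whose other factors are unchanged (or $\ket{\psi_3'}$ vanishes). Now $\ket{\psi_1}\perp\ket{\psi_3}$ on some subsystem $j$.
\begin{enumerate}
\item If $j\neq i$, the $j$-th factors are untouched, so $\ket{\psi_1}\perp\ket{\psi_3'}$ still holds on subsystem $j$.
\item If $j=i$, then $\phi_{i,3}\perp\phi_{i,1}$, so $P_1$ annihilates $\ket{\psi_3}$ and only $\ket{\psi_1}$ remains.
\end{enumerate}
In either sub-case at most two orthogonal product states survive, and the two-state procedure finishes the discrimination.
\item Outcome $P_2$ is symmetric to outcome $P_1$.
\end{enumerate}

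All steps are local projective measurements with classical communication of outcomes, which proves the lemma.

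The only delicate point is that a measurement on party $i$ must not destroy the orthogonality between the surviving states. This is exactly what the case split on whether the witness subsystem $j$ equals $i$ secures. It works because the chosen projectors are rank one along the very vectors $\phi_{i,1}$ and $\phi_{i,2}$, so any state orthogonal to one of them only on party $i$ is annihilated outright.
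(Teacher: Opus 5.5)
Your proof is correct. The paper gives no proof of this lemma of its own: it imports the statement from DiVincenzo \emph{et al.}~\cite{ref4}, so there is no in-paper argument to compare against step by step, and what you wrote is a valid self-contained argument.

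Your mechanism is the same one the paper uses in its proof of Theorem~\ref{thm:22-reduction}. There, one party projects onto the local vector of a state on a subsystem where that vector is orthogonal to others, together with the complement. Orthogonality of the surviving states is then checked by asking whether the witnessing subsystem is the measured one.

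Your third projector $P_3$ is not actually needed; the two-outcome measurement $\{P_1, I-P_1\}$ already suffices. In the $I-P_1$ branch, $\ket{\psi_1}$ is removed and $\ket{\psi_2}$ is unchanged. Moreover, $\langle\phi_{i,2}|(I-P_1)|\phi_{i,3}\rangle=\langle\phi_{i,2}|\phi_{i,3}\rangle$ because $\langle\phi_{i,2}|\phi_{i,1}\rangle=0$, so an orthogonality between $\ket{\psi_2}$ and $\ket{\psi_3}$ on subsystem $i$ survives without being annihilated. What your three-outcome version buys is an immediate identification of $\ket{\psi_3}$ in the $P_3$ branch.
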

	It should be noted that a von Neumann measurement is a projection measurement, which can be regarded as a special case of a POVM.
	
	According to the vector of the numbers of pairwise orthogonality relations, four tripartite OPSs, any two of which are orthogonal on exactly one subsystem, can
	be divided into three categories, i.e., $(4,1,1)$, $(3,2,1)$, and $(2,2,2)$. The following results will be used in what follows.
	\begin{lemma} \cite{ref31}
		\label{lem:four-tripartite-411-321}
		Four tripartite OPSs with the vector of the numbers of pairwise orthogonal
		relations $(4,1,1)$ or $(3,2,1)$ can be perfectly distinguished by LOCC.
	\end{lemma}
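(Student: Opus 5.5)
The plan is to reduce to Lemma~\ref{lem:three-ops} with a single local projective measurement. The key observation is that both vectors $(4,1,1)$ and $(3,2,1)$ contain an entry equal to $1$. So in either case there is a subsystem $i$ (the second or third party for $(4,1,1)$, the third party for $(3,2,1)$) on which exactly one pair of the four states is orthogonal. Label the states $\ket{\psi_1},\dots,\ket{\psi_4}$ so that this unique pair is $\{\ket{\psi_k},\ket{\psi_l}\}$. Write $\ket{x_j}$ for the local (normalized) vector of $\ket{\psi_j}$ on subsystem $i$. Then $\langle x_k|x_l\rangle=0$, and no other pair of the four states is orthogonal on subsystem $i$. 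Since any two states are orthogonal on exactly one subsystem, every other pair is orthogonal on some subsystem different from $i$.

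\emph{Step 1 (the measurement).} Party $i$ performs the two-outcome von Neumann measurement $\{P=\ket{x_k}\bra{x_k},\ I-P\}$. Because $\ket{x_l}\perp\ket{x_k}$, the outcome $P$ annihilates $\ket{\psi_l}$. Because $(I-P)\ket{x_k}=0$, the outcome $I-P$ annihilates $\ket{\psi_k}$. Hence in either branch at most three of the states survive, namely $\{\ket{\psi_j}: j\neq l\}$ or $\{\ket{\psi_j}: j\neq k\}$. Some further states may also be annihilated, which only helps.

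\emph{Step 2 (orthogonality survives).} We must check that the surviving post-measurement states are still pairwise orthogonal product states. The measurement acts only on subsystem $i$, so each surviving state stays a product state, with its $i$-th factor replaced by $P\ket{x_j}$ or $(I-P)\ket{x_j}$ (nonzero by survival). Any surviving pair is not the pair $\{k,l\}$, since one of these two has been eliminated. By the uniqueness of the color-$i$ edge, such a pair is orthogonal on some subsystem $p\neq i$, and its $p$-th factors are untouched. Thus the survivors form a set of at most three mutually orthogonal multipartite product states.

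\emph{Step 3.} The parties announce the outcome and then apply Lemma~\ref{lem:three-ops} to the surviving set. This distinguishes them perfectly by local (incomplete) von Neumann measurements and classical communication, and completes the protocol.

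The only point needing care is Step~2: we must make sure that projecting on subsystem $i$ cannot destroy an orthogonality that the remaining states rely on. This is exactly why the unique-edge color is chosen. For a color with two or more edges the argument could fail, because a surviving pair might be orthogonal only on subsystem $i$, and the projection could ruin that orthogonality. This also explains why the argument does not cover $(2,2,2)$.
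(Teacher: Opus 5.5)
Your proof is correct. The paper gives no proof of this lemma; it imports it from Ref.~\cite{ref31}. Your protocol is, however, the same unique-edge splitting measurement that the paper uses in its own case analysis, e.g.\ cases $(4\text{-}29)$, $(5\text{-}41)$, $(7\text{-}79)$ and $(9\text{-}30)$--$(9\text{-}32)$. The party whose color class contains the single edge $\{k,l\}$ measures $\{|x_k\rangle\langle x_k|,\, I-|x_k\rangle\langle x_k|\}$. Each branch removes one endpoint of that edge, and the survivors remain orthogonal on parties that were never touched.

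Your use of ``exactly one subsystem'' is also justified by counting. The vector sums to $6=\binom{4}{2}$, so each pair carries exactly one edge.

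What your version buys is a uniform argument with no graph enumeration, since four states land directly in Lemma~\ref{lem:three-ops}. It also never uses the number of parties. It therefore proves Lemma~\ref{lem:four-fourpartite-3111-2211} (the $(3,1,1,1)$ and $(2,2,1,1)$ cases) by the same argument. Moreover, $(2,2,2)$ is the only partition of $6$ into at least three parts with no part equal to $1$. So your argument covers every four-state category with at least three active parties except $(2,2,2)$, which matches your closing remark.
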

	For the category $(2,2,2)$, there are three different orthogonality graphs up to vertex relabeling and party permutation. We denote them by cases $(1\text{-}1)$, $(1\text{-}2)$, and $(1\text{-}3)$, as shown in Fig.~\ref{fig:four-tripartite-222}.
	\begin{figure}[t]
		\centering
		\includegraphics[width=0.7\columnwidth]{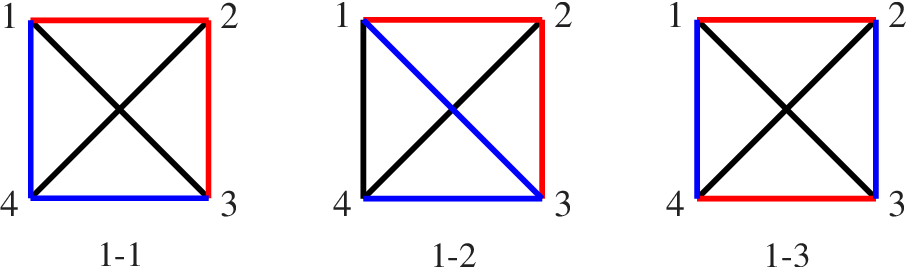}
		\caption{The feasible orthogonality graphs of four tripartite OPSs with the vector of the numbers of pairwise orthogonal relations $(2,2,2)$.}
		\label{fig:four-tripartite-222}
	\end{figure}
	\begin{lemma}\cite{ref31}
		\label{lem:four-tripartite-222-dist}
		Four tripartite OPSs with the vector of the numbers of pairwise orthogonality relations $(2,2,2)$ can be perfectly distinguished by LOCC if their orthogonality graph corresponds to case $(1\text{-}1)$ or case $(1\text{-}2)$ up to vertex relabeling and party permutation, as shown in Fig.~\ref{fig:four-tripartite-222}.
	\end{lemma}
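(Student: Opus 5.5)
The plan is to prove the lemma by exhibiting, for each of the two orthogonality graphs $(1\text{-}1)$ and $(1\text{-}2)$, an explicit LOCC protocol. Each protocol consists of one local orthogonality-preserving projective measurement on a suitable party, followed by an appeal to Lemma~\ref{lem:three-ops} or Lemma~\ref{lem:four-bipartite-ops}. Label the states $\ket{\psi_1},\dots,\ket{\psi_4}$. In the $(2,2,2)$ category the six edges of $K_4$ carry each of the three colours exactly twice. Up to relabeling, the two colour-$i$ edges either form a perfect matching or share a vertex (a path of length two). The three cases $(1\text{-}1)$, $(1\text{-}2)$, $(1\text{-}3)$ are distinguished by how many colour classes are matchings versus paths. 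I would first read off from Fig.~\ref{fig:four-tripartite-222} which configuration each case is. My working guess is that $(1\text{-}3)$ is the fully symmetric one, in which every colour class is a perfect matching, i.e. the three matchings of $K_4$. The other two cases then contain at least one colour class forming a path $v_a - v_b - v_c$.

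The key step is to locate a party $i$ whose colour class is a path centred at some vertex, say $\ket{\psi_1}$ orthogonal on party $i$ to $\ket{\psi_2}$ and $\ket{\psi_3}$. Then $\deg_i(v_1)=2$, and $\ket{\phi_{i,1}}$ is orthogonal to $\ket{\phi_{i,2}}$ and $\ket{\phi_{i,3}}$. Party $i$ performs the two-outcome projective measurement $\{P=\ket{\phi_{i,1}}\bra{\phi_{i,1}},\,I-P\}$. This measurement preserves orthogonality of all four states on the other parties, because it acts only on party $i$ and each post-measurement state remains a product state. Moreover, outcome $P$ annihilates $\ket{\psi_2}$ and $\ket{\psi_3}$, leaving at most $\ket{\psi_1},\ket{\psi_4}$. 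Outcome $I-P$ annihilates $\ket{\psi_1}$, leaving at most $\ket{\psi_2},\ket{\psi_3},\ket{\psi_4}$. The remaining sets must still be mutually orthogonal. This holds because the pairs among $\{2,3,4\}$ are not orthogonal on party $i$ (each pair is orthogonal on exactly one party, and the colour-$i$ edges are $12$ and $13$), so their orthogonality lives on other parties and is untouched. Likewise pair $14$ is orthogonal on a party other than $i$. Lemma~\ref{lem:three-ops} then finishes both branches.

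It remains to check a degenerate possibility. If $\ket{\phi_{i,4}}$ is proportional to $\ket{\phi_{i,1}}$ or orthogonal to it, one branch is simply smaller, which is harmless. If it is in general position, then $\ket{\psi_4}$ survives in both branches with a renormalized, still nonzero, state, and the argument above covers it.

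The main obstacle is the case analysis: confirming from the figure that cases $(1\text{-}1)$ and $(1\text{-}2)$ each possess a colour class forming a path. If instead they are all matchings, the single-projector trick does not split the set, and I would fall back on treating two parties jointly. That is, I would group parties $j,k$ into one system, which makes the four states bipartite OPSs, and apply Lemma~\ref{lem:four-bipartite-ops}. This requires the protocol from that lemma to be implementable with the grouped party's measurement itself split between $j$ and $k$. I would verify that implementability by checking that the required measurement on the grouped system is a projector onto a product vector's local factor, so it can be done by one of the two parties alone.
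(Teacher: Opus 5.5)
Your argument is correct. The paper does not prove this lemma itself; it imports it from Ref.~\cite{ref31}. Your protocol is, however, exactly the mechanism of the paper's Theorem~\ref{thm:22-reduction}. When every colour-$i$ edge is incident to $v_1$, the measurement $\{\ket{\phi_{i,1}}\bra{\phi_{i,1}},\,I-\ket{\phi_{i,1}}\bra{\phi_{i,1}}\}$ on party $i$ leaves no colour-$i$ edge among the surviving states in either branch. So orthogonality is preserved, and Lemma~\ref{lem:three-ops} finishes.

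The identification you leave as a working guess can be settled without the figure. If two colour classes are perfect matchings, the third is the remaining perfect matching of $K_4$. So the $(2,2,2)$ colourings have exactly $3$, $1$ or $0$ matching classes, and each count gives a single configuration up to relabeling and party permutation. The all-matching one is the unique colouring in which every vertex has degree $1$ in each party. That is the configuration the paper calls $(1\text{-}3)$; compare exceptional type (4) of Theorem~\ref{thm:main}. Your own argument in fact forces this identification: any colouring with a path class is LOCC distinguishable, whereas Lemma~\ref{lem:four-tripartite-222-exception} says $(1\text{-}3)$ need not be.

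Hence $(1\text{-}1)$ and $(1\text{-}2)$ always contain a path class, and your fallback is never invoked. That is just as well, because the fallback would not work. Merging parties $j$ and $k$ is not a local operation. Moreover, for the all-matching colouring with $r_1=r_2=r_3=2$ (e.g.\ the SHIFTS set), no LOCC protocol exists at all.
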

	\begin{lemma} \cite{ref31}
		\label{lem:four-tripartite-222-exception}
		For four tripartite OPSs with the vector of the numbers of pairwise orthogonality relations $(2,2,2)$, if their orthogonality graph corresponds to case $(1\text{-}3)$ in 		Fig.~\ref{fig:four-tripartite-222} and the local ranks satisfy $r_1=r_2=r_3=2$, then they cannot be perfectly distinguished by LOCC. In the other local-rank cases, such a set can be distinguished by LOCC with a nonzero
		probability.
	\end{lemma}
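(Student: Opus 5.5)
The plan is to fix the structure of case $(1\text{-}3)$ and then reduce everything to a statement about positive operators on a single two-dimensional subspace. I take case $(1\text{-}3)$ to be the configuration in which each color class of the complete graph $K_4$ is a perfect matching; I am reading this off Fig.~\ref{fig:four-tripartite-222}, so it should be checked there. Label the states $\ket{\psi_j}=\ket{\alpha_j}\ket{\beta_j}\ket{\gamma_j}$, $j=1,\dots,4$, so that the orthogonality relations are
$$
\alpha_1\perp\alpha_2,\ \alpha_3\perp\alpha_4,\quad \beta_1\perp\beta_3,\ \beta_2\perp\beta_4,\quad \gamma_1\perp\gamma_4,\ \gamma_2\perp\gamma_3 .
$$
Because every pair is orthogonal on exactly one party, all other local overlaps are nonzero. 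For instance, $\langle\alpha_1|\alpha_3\rangle\neq0$ and $\langle\alpha_1|\alpha_4\rangle\neq0$.

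\emph{Nonlocality when $r_1=r_2=r_3=2$.} I would use the standard fact that perfect LOCC discrimination requires some party to perform first a nontrivial measurement (not all elements proportional to the identity on the support) that keeps the post-measurement states mutually orthogonal. Consider party 1 with a POVM element $E=M^\dagger M$ restricted to $V_1=\mathrm{span}\{\alpha_j\}$, which is two-dimensional. Pairs that are orthogonal on parties 2 or 3 stay orthogonal automatically. The pairs $\{1,2\}$ and $\{3,4\}$ stay orthogonal only if $\langle\alpha_1|E|\alpha_2\rangle=0$ and $\langle\alpha_3|E|\alpha_4\rangle=0$. Hence $E|_{V_1}$ is diagonal in the orthonormal basis $\{\alpha_1,\alpha_2\}$ and also in the orthonormal basis $\{\alpha_3,\alpha_4\}$.

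These two bases are genuinely different, even up to phases and reordering, because $\alpha_3$ overlaps nontrivially with both $\alpha_1$ and $\alpha_2$. A $2\times2$ Hermitian matrix that is diagonal in two such bases must be a multiple of the identity: otherwise its eigenspaces are one-dimensional and would force $\alpha_3$ to be parallel to $\alpha_1$ or to $\alpha_2$. So party 1 has only trivial orthogonality-preserving measurements. The same argument applies verbatim to parties 2 and 3, because the structure is symmetric. Therefore no LOCC protocol can get started, and the set is not perfectly distinguishable.

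\emph{Other local ranks.} No $r_i$ can equal 1, since each party carries two orthogonal pairs. So in the remaining cases some party, say party 1, has $r_1\ge3$. Then there is a nonzero $w\in V_1$ with $w\perp\alpha_1$ and $w\perp\alpha_3$. Let party 1 measure $\{\ket{w}\bra{w}/\|w\|^2,\ I-\ket{w}\bra{w}/\|w\|^2\}$. On the outcome $w$, the inner products $\langle\alpha_1|P|\alpha_2\rangle$ and $\langle\alpha_3|P|\alpha_4\rangle$ vanish trivially, since $P\alpha_1=P\alpha_3=0$. This outcome also eliminates states 1 and 3. It occurs with nonzero probability, because $w\in V_1$ cannot be orthogonal to both $\alpha_2$ and $\alpha_4$ as well as to $\alpha_1$ and $\alpha_3$. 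At most the states $\ket{\psi_2}$ and $\ket{\psi_4}$ then remain, and two orthogonal states are LOCC distinguishable by Walgate \emph{et al.} This gives success with nonzero probability, as claimed.

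The main obstacle is justifying rigorously that a first nontrivial orthogonality-preserving measurement is necessary, and that the restriction to $V_1$ loses nothing when $\dim\mathcal{H}_1>2$. For the second point, I would note that $E$ may act arbitrarily on $V_1^\perp$, but the conditions involve only the compression $P_{V_1}EP_{V_1}$. So triviality on $V_1$ already means the measurement yields no information about the states.
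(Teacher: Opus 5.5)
Your proposal is correct. The paper itself gives no proof of this lemma; it imports it from Ref.~\cite{ref31}. The relevant comparison is therefore with how the paper proves the analogous statements, Theorems~\ref{thm:case825-indistinguishable}--\ref{thm:case1043-indistinguishable}. There, each party's POVM element is written as a matrix in a chosen basis of the local support. The orthogonal pairs on that party kill the off-diagonal entries, and a second pair, together with a relation such as $1+a^*c=0$, forces the diagonal entries to be equal.

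Your first part is the same argument in basis-free form. A Hermitian operator on a two-dimensional space that is diagonal in both $\{\alpha_1,\alpha_2\}$ and $\{\alpha_3,\alpha_4\}$ must be scalar, because $\langle\alpha_1|\alpha_3\rangle\neq0\neq\langle\alpha_2|\alpha_3\rangle$ rules out $\alpha_3\parallel\alpha_1$ and $\alpha_3\parallel\alpha_2$. It has the advantage of covering every set of type $(1\text{-}3)$ with $r_1=r_2=r_3=2$, not just one parametrized family as in Sec.~\ref{sec4}. Your reading of $(1\text{-}3)$ as three perfect matchings agrees with exceptional type (4) of Theorem~\ref{thm:main}. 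For the step you flag as the obstacle: $P_{V_1}M^\dagger M P_{V_1}=cP_{V_1}$ means $M$ acts on $V_1$ as $\sqrt{c}$ times an isometry. The post-measurement set therefore has the same Gram matrices, and the same analysis applies again. This is what makes ``no party can start'' conclusive, at the level of rigor the paper itself uses.

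Your second part proves the sentence as literally written. Read that way, however, it does not separate the rank cases, because conclusive identification with nonzero probability is possible even when $r_1=r_2=r_3=2$. Party 1 projects onto $\alpha_2$, removing $\psi_1$. Party 2 then projects onto $\beta_2$, removing $\psi_4$. The surviving $\psi_2,\psi_3$ stay orthogonal on party 3.

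What genuinely changes when some $r_i\ge3$ is that your measurement $\{P,I-P\}$ preserves orthogonality on \emph{both} outcomes, which you did not check. Since $w\perp\alpha_1,\alpha_3$, we have $\langle\alpha_1|(I-P)|\alpha_2\rangle=-\langle\alpha_1|w\rangle\langle w|\alpha_2\rangle/\|w\|^2=0$ and $\langle\alpha_3|(I-P)|\alpha_4\rangle=-\langle\alpha_3|w\rangle\langle w|\alpha_4\rangle/\|w\|^2=0$. With that line added, the party of rank at least $3$ has a nontrivial orthogonality-preserving first measurement, which is exactly what your first part excludes in the rank-two case. It completes the discrimination on outcome $P$ and keeps all four states mutually orthogonal on outcome $I-P$. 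This is the meaningful contrast between the two rank regimes that the lemma is drawing.
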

	
	\begin{lemma}
		\label{lem:four-fourpartite-3111-2211}
		\cite{ref31} Four  OPSs with the vector of the numbers of pairwise orthogonality relations $(3,1,1,1)$ or $(2,2,1,1)$ can be perfectly distinguished by LOCC.
	\end{lemma}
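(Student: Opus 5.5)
The plan is to exhibit, for each of the two vectors, a party whose local projective measurement splits the four states into groups, each of which can then be finished off using Lemma~\ref{lem:three-ops} or Lemma~\ref{lem:four-bipartite-ops}. Throughout, any two of the four states are orthogonal on exactly one subsystem. So the orthogonality graph is $K_4$ with $6$ edges, and each edge carries exactly one colour.

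\textbf{Case $(3,1,1,1)$.} Three edges lie in $E_1$, and each of the other three parties carries one edge. A set of three edges in $K_4$ is a star, a triangle, or a path/matching-type configuration. In each configuration I look for a party $i\in\{2,3,4\}$ whose single edge $\{u,v\}$ separates one state from the rest. Concretely, party $i$ measures in a basis containing $\ket{\phi_{i,u}}$ and a vector orthogonal to it that lies in the span of the other states' $i$-th components.

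The simplest route is different, however. For four states where parties $2,3,4$ each contribute only one orthogonality, merge parties $3$ and $4$ into a single party. This gives a tripartite set with vector $(3,1,2)$, which is a permutation of $(3,2,1)$. That set is distinguishable by Lemma~\ref{lem:four-tripartite-411-321}. Merging parties can only enlarge the set of allowed LOCC operations on the merged side. The real point to check is the converse: the protocol of Lemma~\ref{lem:four-tripartite-411-321} must be implementable when the merged party is actually two separated parties. This holds if the merged party only ever needs to test orthogonality that occurs on a single factor. Here each of the merged party's pairs is orthogonal on exactly one factor, namely party $3$ or party $4$. So a projective measurement on the merged side can be replaced by measuring the relevant factor, with classical communication between parties $3$ and $4$.

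\textbf{Case $(2,2,1,1)$.} Here I merge parties $3$ and $4$ to obtain $(2,2,2)$. I must then make sure the resulting graph is case $(1\text{-}1)$ or $(1\text{-}2)$ and not $(1\text{-}3)$. Where $(1\text{-}3)$ would arise, I instead merge a different pair, such as $2$ with $3$, giving $(2,3,1)$. That vector is a permutation of $(3,2,1)$ and is covered by Lemma~\ref{lem:four-tripartite-411-321}.

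Since $(2,1,3)$ up to permutation is always available by merging a party of weight $2$ with one of weight $1$, I will simply always do that. With vector $(2,2,1,1)$, merging party $2$ with party $3$ yields $(2,3,1)$. The same choice covers $(3,1,1,1)$: merging parties $2,3$ yields $(3,2,1)$.

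\textbf{Main obstacle.} The hard step is justifying that the merged-party protocol of Lemma~\ref{lem:four-tripartite-411-321} can be simulated by the two genuinely separated parties. I would handle this by inspecting that protocol. It consists of local projections that each eliminate states orthogonal on one subsystem. On the merged party $\mathcal{H}_2\otimes\mathcal{H}_3$, the states' components are product vectors, and every orthogonality among them occurs on a single factor. Each required projector can then be chosen as a projector on that single factor tensored with the identity. If that fails, I would fall back on a direct argument. The party carrying the most edges, party $1$, measures first, projecting onto $\ket{\phi_{1,j}}$ versus its complement, to isolate a vertex of maximal $E_1$-degree. The remaining three states are then distinguished by Lemma~\ref{lem:three-ops}. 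I would verify that this isolation is possible in each of the few non-isomorphic edge-colourings.
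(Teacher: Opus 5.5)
\textbf{There is a genuine gap in your main route, and it is the step you flagged yourself.}

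Merging two parties enlarges the set of allowed operations. So distinguishability of the merged tripartite set via Lemma~\ref{lem:four-tripartite-411-321} is a necessary condition for LOCC distinguishability of the original four-party set, not a sufficient one. Your argument for the converse is only an assertion. A protocol for the merged set may use orthogonality-preserving measurements on $\mathcal{H}_2\otimes\mathcal{H}_3$ that are not of the form (projector on one factor)$\otimes I$. Replacing such a step by a one-factor measurement plus classical communication changes the post-measurement states, and hence the orthogonality conditions that must hold. Lemma~\ref{lem:four-tripartite-411-321} is imported from \cite{ref31} without its protocol, and the paper likewise only cites Lemma~\ref{lem:four-fourpartite-3111-2211} rather than proving it. So there is nothing here for you to inspect.

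\textbf{Your fallback works for $(3,1,1,1)$ but not for $(2,2,1,1)$.} For $(3,1,1,1)$, $E_1$ is a star, a triangle or a path. In each case, projecting party $1$ onto $\ket{\phi_{1,v}}$ for $v$ of maximal $E_1$-degree preserves orthogonality and leaves at most three states per branch. For $(2,2,1,1)$ it fails when $E_1$ is a perfect matching. Take $E_1=\{e_{12},e_{34}\}$, $E_2=\{e_{13},e_{24}\}$, $E_3=\{e_{14}\}$, $E_4=\{e_{23}\}$. After party $1$ projects onto $\ket{\phi_{1,1}}$, states $3$ and $4$ both survive. Their post-measurement inner product is $\langle\phi_{1,3}|\phi_{1,1}\rangle\langle\phi_{1,1}|\phi_{1,4}\rangle\langle\eta_3|\eta_4\rangle$, where $\ket{\eta_x}$ is the product of the other three components as in the proof of Theorem~\ref{thm:22-reduction}. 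This is generically nonzero, and party $2$ has the same problem. Concretely, use $\ket{0},\ket{1},\ket{+},\ket{-}$ for states $1,2,3,4$ on party $1$, and for states $1,3,2,4$ on party $2$. Then every orthogonality-preserving POVM element of party $1$ or party $2$ is proportional to the identity on the local support. In that realisation no protocol can even start with a weight-$2$ party.

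\textbf{The missing idea is the one you raised at the start of the $(3,1,1,1)$ case and then set aside.} Use a party $p$ that carries exactly one orthogonality relation; parties $3$ and $4$ qualify in both vectors. Say $E_p=\{e_{uv}\}$, and let party $p$ measure $\{P,I-P\}$ with $P=\ket{\phi_{p,u}}\bra{\phi_{p,u}}$.
\begin{itemize}
\item Outcome $P$ annihilates state $v$, and outcome $I-P$ annihilates state $u$.
\item Every other pair of states is orthogonal on some subsystem other than $p$. Those components are untouched, so orthogonality survives in both branches.
\item Each branch therefore contains at most three OPSs, which Lemma~\ref{lem:three-ops} distinguishes.
\end{itemize}
This is the same device the paper uses in cases $(4\text{-}29)$, $(5\text{-}41)$, $(7\text{-}79)$ and $(9\text{-}30)$. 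It needs neither merging of parties nor any case analysis of the colourings. More generally, it shows that four OPSs are LOCC distinguishable whenever some party carries exactly one orthogonality relation.
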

	The following two reduction lemmas can be derived from the local measurement arguments in Ref.~\cite{ref35} and will be used repeatedly in what follows.
	\begin{lemma} ~\cite{ref35}
		\label{lem:degree4-removal}
	Let $\mathcal{S}$ be a set of five  OPSs, any two of which are orthogonal on exactly one subsystem, and the orthogonality graph of $S$ is $G=(V,\,E_1\cup E_2\cup E_3\cup E_4)$. Suppose that there exists a vertex $v\in V$ and a subsystem $p$ such that $\deg_p(v)=4$. Then the state corresponding to $v$ can be exactly identified by LOCC.
	\end{lemma}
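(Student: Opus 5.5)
Let $v=v_1$, with state $\ket{\psi_1}=\bigotimes_{i=1}^4\ket{\varphi_{i,1}}$, and suppose without loss of generality that $\deg_p(v_1)=4$. The idea is that only party $p$ needs to measure, using the one-dimensional projector onto $\ket{\varphi_{p,1}}$.

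First I would record what the hypothesis says about the local states on subsystem $p$. Since $\deg_p(v_1)=4$, the state $\ket{\psi_1}$ is orthogonal on subsystem $p$ to each of $\ket{\psi_2},\dots,\ket{\psi_5}$. In other words, $\langle\varphi_{p,1}\vert\varphi_{p,j}\rangle=0$ for $j=2,\dots,5$. Because every edge carries exactly one color, this is in fact the only subsystem on which $\ket{\psi_1}$ is orthogonal to the others, but that fact is not needed.

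Next I would let party $p$ perform the two-outcome von Neumann measurement
\[
\{P_1=\ket{\varphi_{p,1}}\bra{\varphi_{p,1}},\; P_2=I_p-P_1\}.
\]
If the unknown state is $\ket{\psi_1}$, outcome $P_1$ occurs with certainty. If the state is $\ket{\psi_j}$ with $j\ge 2$, then $P_1\ket{\varphi_{p,j}}=0$, so outcome $P_2$ occurs with certainty. In that case the post-measurement state is still $\ket{\psi_j}$, because $P_2\ket{\varphi_{p,j}}=\ket{\varphi_{p,j}}$.

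The conclusion then follows directly. Outcome $P_1$ identifies $\ket{\psi_1}$ exactly. Outcome $P_2$ excludes $\ket{\psi_1}$ with certainty and leaves the remaining four states undisturbed, so they stay mutually orthogonal with the same orthogonality graph minus $v_1$. Party $p$ then broadcasts the outcome, which is the only classical communication required. This procedure is a single local projective measurement plus communication, so it is LOCC. The reduced four-state set is what the later reduction arguments feed into Lemma~\ref{lem:four-fourpartite-3111-2211} or the tripartite lemmas.

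There is no real obstacle in this argument. The only point needing care is that the remaining states are left undisturbed, which is what makes the lemma usable as a reduction step; this holds because each $\ket{\varphi_{p,j}}$ with $j\ge2$ lies in the support of $P_2$.
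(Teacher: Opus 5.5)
Your proof is correct and takes essentially the paper's approach. The paper defers this lemma to Ref.~\cite{ref35}, but its proof of Theorem~\ref{thm:22-reduction} uses the same two-outcome measurement $M_1=\ket{\phi_{p,v}}\bra{\phi_{p,v}}$, $M_2=I-M_1$ on subsystem $p$, and your observation that outcome $M_2$ leaves the other four states undisturbed is what lets the lemma serve as a reduction step.
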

	
	\begin{lemma} \cite{ref35} 
		\label{lem:31-removal}
		Let $\mathcal{S}$ be a set of five  OPSs, any two of which are orthogonal on exactly one subsystem, and the orthogonality graph of $S$ is $G=(V,\,E_1\cup E_2\cup E_3\cup E_4)$. Suppose that there exists a vertex $v\in V$ and two distinct subsystems $p$ and $q$ such that $\deg_p(v)=3$ and $\deg_q(v)=1$. Then the state corresponding to $v$ can be distinguished by LOCC.
	\end{lemma}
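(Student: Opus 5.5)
Write $v\leftrightarrow\ket{\psi_v}=\bigotimes_{i=1}^4\ket{\varphi_{i,v}}$ and let $N_p(v)$, $N_q(v)$ be as in Definition~\ref{def:local-neighborhood}. The plan is to build one local projective measurement on party $p$ that isolates $v$ from its three $p$-neighbours, and to handle the single remaining state $u\in N_q(v)$ with an adaptive measurement on party $q$.

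Because every edge has exactly one colour, $\deg_p(v)=3$ and $\deg_q(v)=1$ force the other two colours to have degree $0$ at $v$. So $N_q(v)=\{u\}$, $|N_p(v)|=3$, and the states of $N_p(v)$ are not orthogonal to $\ket{\psi_v}$ on any subsystem other than $p$. Moreover, $\ket{\varphi_{p,u}}$ is not orthogonal to $\ket{\varphi_{p,v}}$, since the pair $\{u,v\}$ is orthogonal only on subsystem $q$.

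\textbf{Step 1: measurement on party $p$.} Let $P=\ket{\varphi_{p,v}}\bra{\varphi_{p,v}}$ and let party $p$ perform the two-outcome measurement $\{P,\mathbb{1}-P\}$. For $w\in N_p(v)$ we have $P\ket{\varphi_{p,w}}=0$, so outcome $P$ leaves only $v$ and $u$ as candidates.

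\textbf{Step 2: outcome $P$.} After this outcome, $u$ has been projected to $\ket{\varphi_{p,v}}\otimes\bigotimes_{i\neq p}\ket{\varphi_{i,u}}$, which is still orthogonal to $\ket{\psi_v}$ on subsystem $q$. Party $q$ measures in a basis containing $\ket{\varphi_{q,v}}$ and $\ket{\varphi_{q,u}}$; this basis exists because the two vectors are orthogonal. The outcome distinguishes $v$ from $u$ perfectly.

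\textbf{Step 3: outcome $\mathbb{1}-P$.} This outcome excludes $v$, since $(\mathbb{1}-P)\ket{\varphi_{p,v}}=0$. The remaining candidates are the other four states, which are still mutually orthogonal: a projector applied on subsystem $p$ preserves any orthogonality on another subsystem. If two of them were orthogonal only on $p$, their projected $p$-components are $(\mathbb{1}-P)$ applied to orthogonal vectors. These are not automatically orthogonal.

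\textbf{Main obstacle.} The difficulty is that the $\mathbb{1}-P$ branch can break a $p$-orthogonality. The statement only claims that the state $v$ is distinguished, that is, either identified or excluded with certainty without disturbing the others. Step 3 must therefore preserve the other four states.

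To achieve this, I would refine the measurement on party $p$. Take an orthonormal basis of the span of $\{\ket{\varphi_{p,j}}\}$ that includes $\ket{\varphi_{p,v}}$ and is adapted to the orthogonality relations among the $p$-components. Applying Gram–Schmidt to $\ket{\varphi_{p,v}}$ and the vectors of $N_p(v)$ gives a basis in which every $w\in N_p(v)$ has zero weight on $\ket{\varphi_{p,v}}$. Those states are then untouched by the $\mathbb{1}-P$ outcome, because $(\mathbb{1}-P)\ket{\varphi_{p,w}}=\ket{\varphi_{p,w}}$.

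The only state that is disturbed is $u$. The pair $\{u,v\}$ is orthogonal on $q$, and the pairs $\{u,w\}$ are orthogonal wherever they were before. If such a pair was orthogonal on $p$, then $\langle \varphi_{p,w}|(\mathbb{1}-P)|\varphi_{p,u}\rangle=\langle\varphi_{p,w}|\varphi_{p,u}\rangle=0$, since $P\ket{\varphi_{p,w}}=0$. So orthogonality is preserved, and the four survivors stay a valid set of OPSs, to be handled by the four-state lemmas.

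Checking this last identity carefully is the key point. It is also why the $3$–$1$ degree pattern is exactly what makes the reduction work.
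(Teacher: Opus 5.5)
Your argument is correct. The paper does not prove this lemma itself (it defers to Ref.~\cite{ref35}), but your proof uses the same two-outcome measurement $\{P,I-P\}$, $P=\ket{\varphi_{p,v}}\bra{\varphi_{p,v}}$, that the paper uses for Theorem~\ref{thm:22-reduction}: outcome $P$ leaves only $\{v,u\}$, which are separated on subsystem $q$, while outcome $I-P$ fixes every $w\in N_p(v)$ and gives $\bra{\varphi_{p,w}}(I-P)\ket{\varphi_{p,u}}=\braket{\varphi_{p,w}|\varphi_{p,u}}$.

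Your Gram--Schmidt ``refinement'' changes nothing, and the obstacle you raise in Step~3 never occurs. The plain measurement already works because $u$ is the only survivor that gets disturbed, and every $p$-orthogonality among the survivors involves some $w\in N_p(v)$. This is exactly why, unlike in Theorem~\ref{thm:22-reduction}, no extra edge condition is needed here. In fact all $p$-overlaps of $u$ with the $w$'s are unchanged, so the four surviving states have exactly the induced orthogonality graph.
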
	
	
	\section{\label{sec3}Local distinguishability of five quadripartite OPS\lowercase{s}}
	
	In this section, we consider the local distinguishability of five quadripartite OPSs, where any two of these states are orthogonal on exactly one subsystem. According to the vectors of pairwise orthogonal relations, the sets of five quadripartite OPSs can be divided into three types. The classification is shown in Table~\ref{tab:fourpartite-classification}. The cases in the first two columns can be treated using the known results for bipartite or tripartite systems. Therefore, in what follows, we mainly focus on the genuine  categories listed in the third column.
	
	\begin{table}[t]
		\caption{Categories of five quadripartite OPSs by the vectors of the numbers of pairwise orthogonal relations}
		\label{tab:fourpartite-classification}
		\centering
		\small
		\begin{ruledtabular}
			\begin{tabular}{ccc}
				\shortstack{Reducible to\\ bipartite categories}
				&
				\shortstack{Reducible to\\ tripartite categories}
				&
				\shortstack{Genuine categories}
				\\
				\hline
				$(10,0,0,0)$ & $(8,1,1,0)$ & $(7,1,1,1)$ \\
				$(9,1,0,0)$  & $(7,2,1,0)$ & $(6,2,1,1)$ \\
				$(8,2,0,0)$  & $(6,3,1,0)$ & $(5,3,1,1)$ \\
				$(7,3,0,0)$  & $(6,2,2,0)$ & $(5,2,2,1)$ \\
				$(6,4,0,0)$  & $(5,4,1,0)$ & $(4,4,1,1)$ \\
				$(5,5,0,0)$  & $(5,3,2,0)$ & $(4,3,2,1)$ \\
				& $(4,4,2,0)$ & $(4,2,2,2)$ \\
				& $(4,3,3,0)$ & $(3,3,3,1)$ \\
				&               & $(3,3,2,2)$ \\
			\end{tabular}
		\end{ruledtabular}
	\end{table}
	
	For each genuine category, we enumerate all feasible orthogonality graphs of five quadripartite OPSs. All feasible orthogonality graphs for categories $(7,1,1,1)$, $(6,2,1,1)$, $(5,3,1,1)$, $(5,2,2,1)$, $(4,4,1,1)$, $(4,3,2,1)$, $(4,2,2,2)$, $(3,3,3,1)$, and $(3,3,2,2)$ are given in Figs.~\ref{fig:7111}--\ref{fig:3322-part2}. Note that we have omitted configurations that coincide with listed ones either under party interchange or after state relabeling, as those cases will follow the same line of reasoning. The counts of locally distinguishable cases and of exceptional cases for five quadripartite OPSs are shown in Table~\ref{tab:genuine-fourpartite-counts}. From Table~\ref{tab:genuine-fourpartite-counts}, we know that the sets of five quadripartite OPSs corresponding to the categories $(7,1,1,1)$, $(6,2,1,1)$, $(5,3,1,1)$, and $(4,4,1,1)$ can be perfectly distinguished by LOCC. For the other categories, there exist some exceptional cases. 
		
	\begin{figure}[t]
		\centering
		\includegraphics[width=0.8\columnwidth]{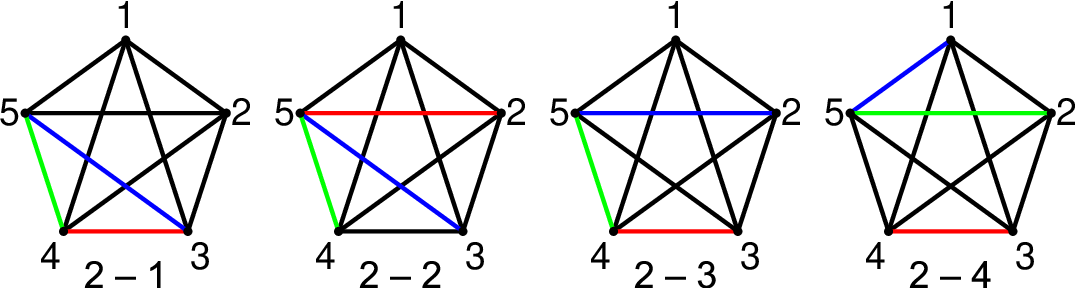}
		\caption{All feasible orthogonality graphs for category $(7,1,1,1)$.}
		\label{fig:7111}
	\end{figure}
	\begin{figure*}[t]
		\centering
		\includegraphics[width=0.6\textwidth,height=0.4\textheight,keepaspectratio]{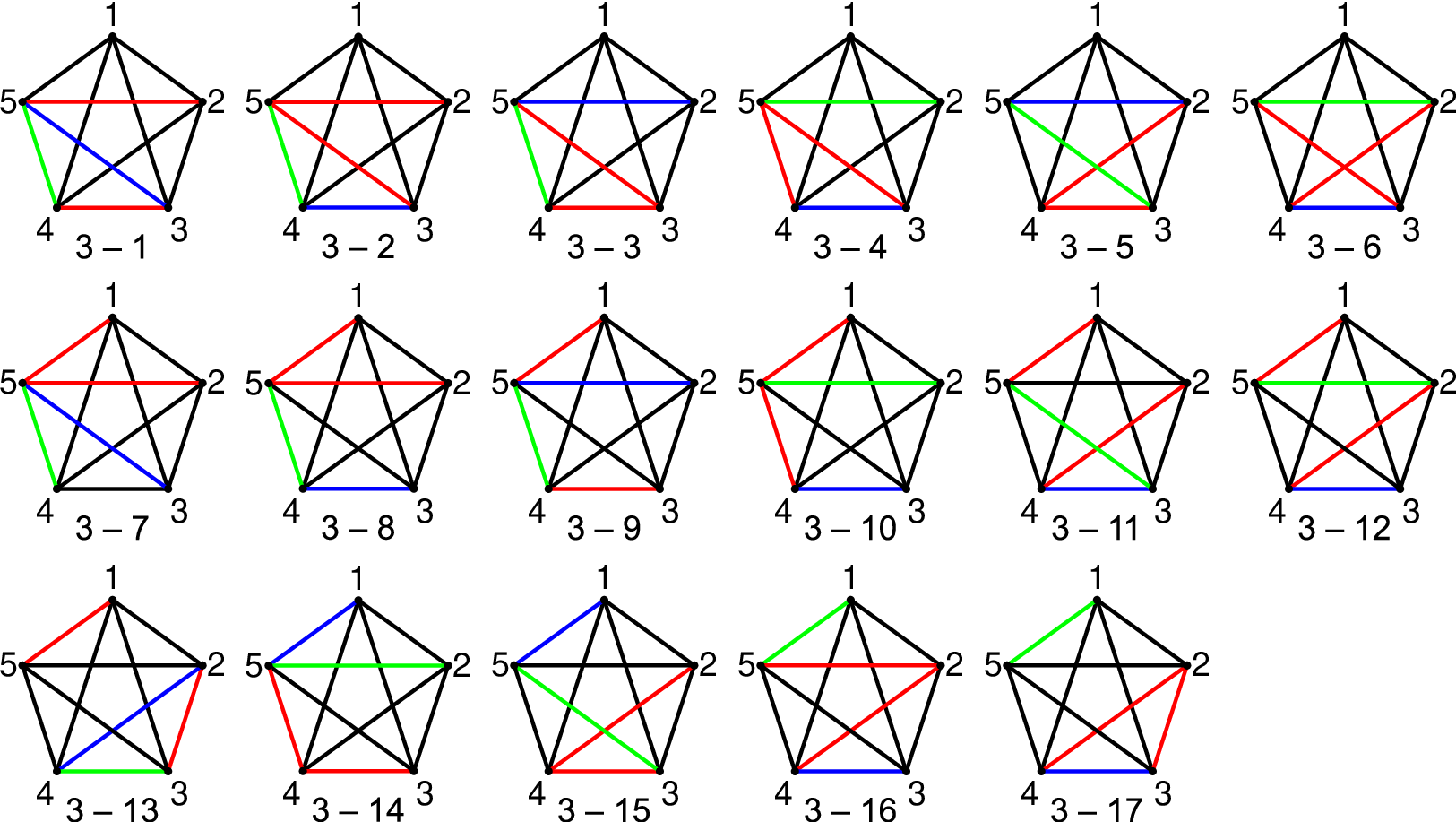}
		\caption{All feasible orthogonality graphs for category $(6,2,1,1)$.}
		\label{fig:6211}
	\end{figure*}
	\begin{figure*}[t]
		\centering
		\includegraphics[width=0.6\textwidth,height=0.4\textheight,keepaspectratio]{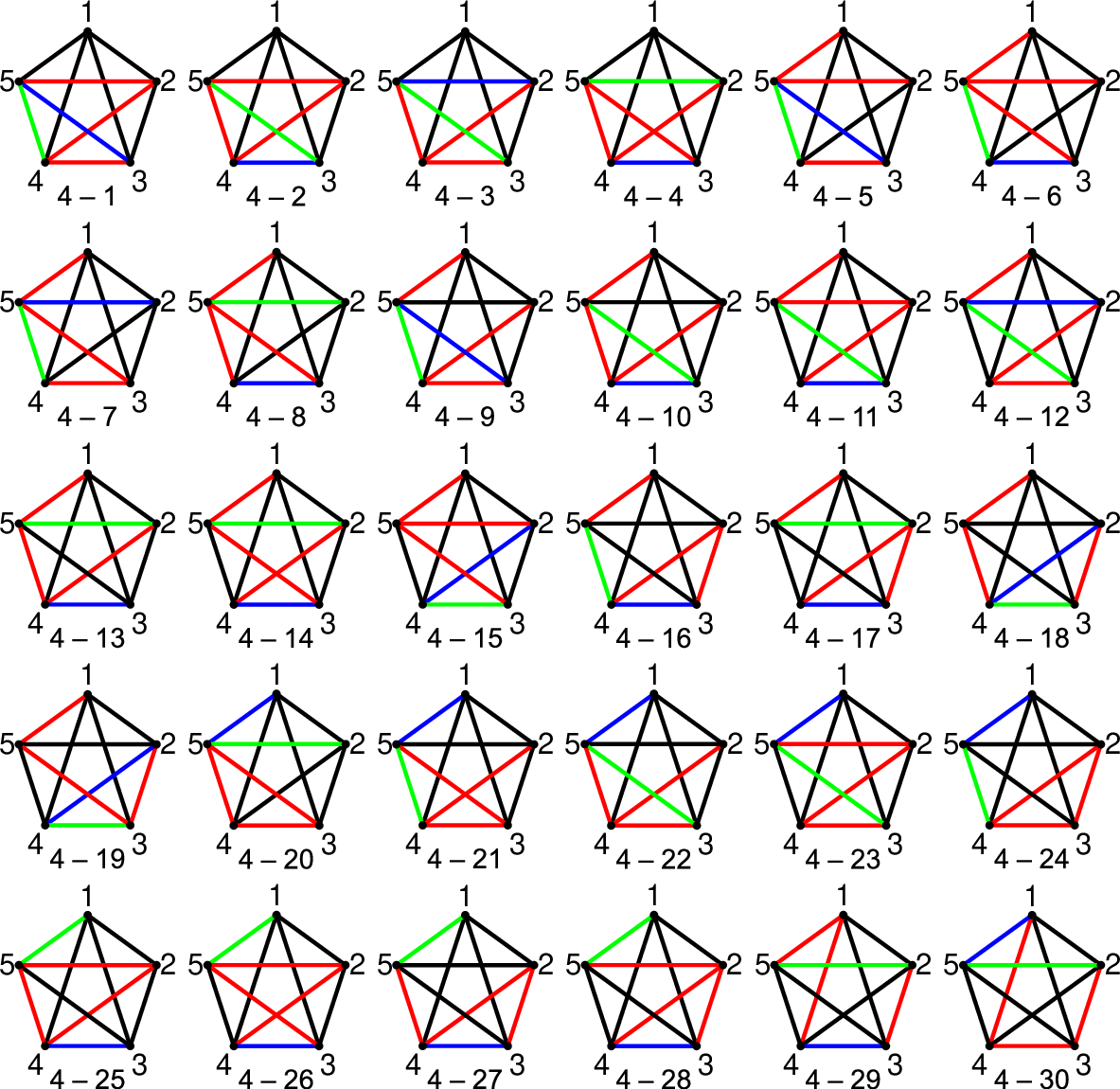}
		\caption{All feasible orthogonality graphs for category $(5,3,1,1)$.}
		\label{fig:5311}
	\end{figure*}
	\begin{figure*}[t]
		\centering
		\includegraphics[width=0.8\textwidth,height=0.4\textheight,keepaspectratio]{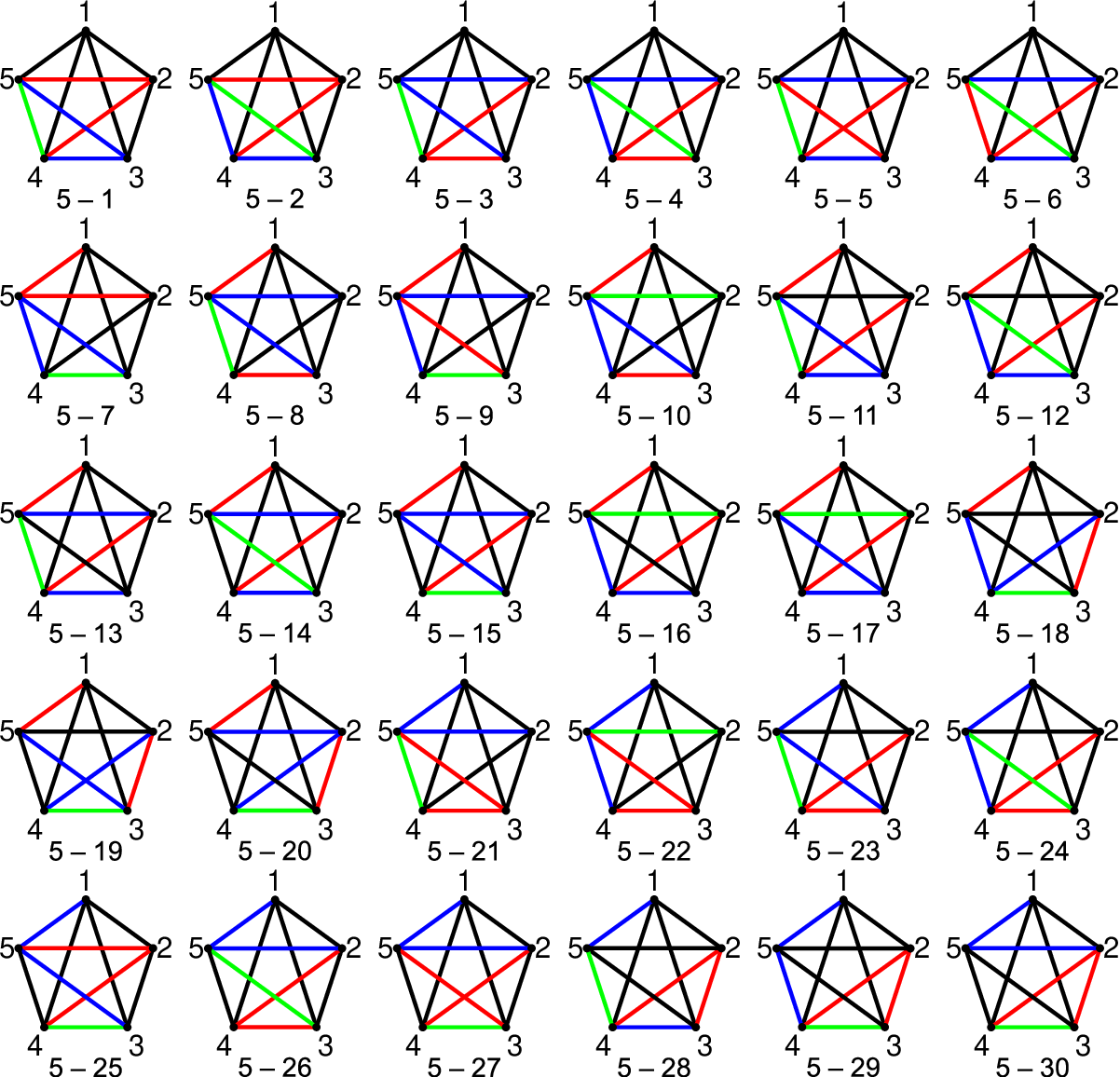}	
		\includegraphics[width=0.55\textwidth,height=0.3\textheight,keepaspectratio]{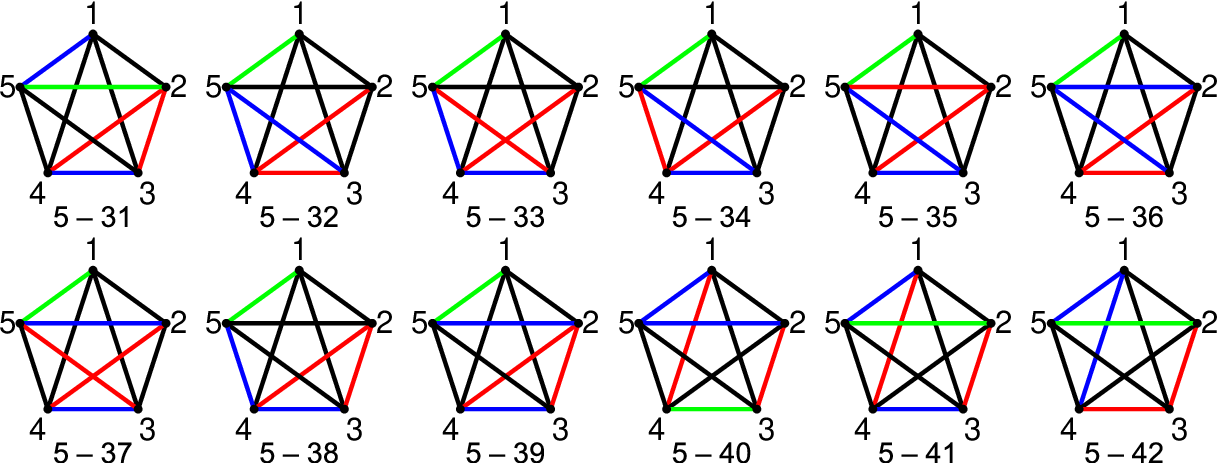}
		
		\caption{All feasible orthogonality graphs for category $(5,2,2,1)$.}
		\label{fig:5221}
	\end{figure*}
	\begin{figure*}[t]
		\centering
		\includegraphics[width=0.6\textwidth,height=0.4\textheight,keepaspectratio]{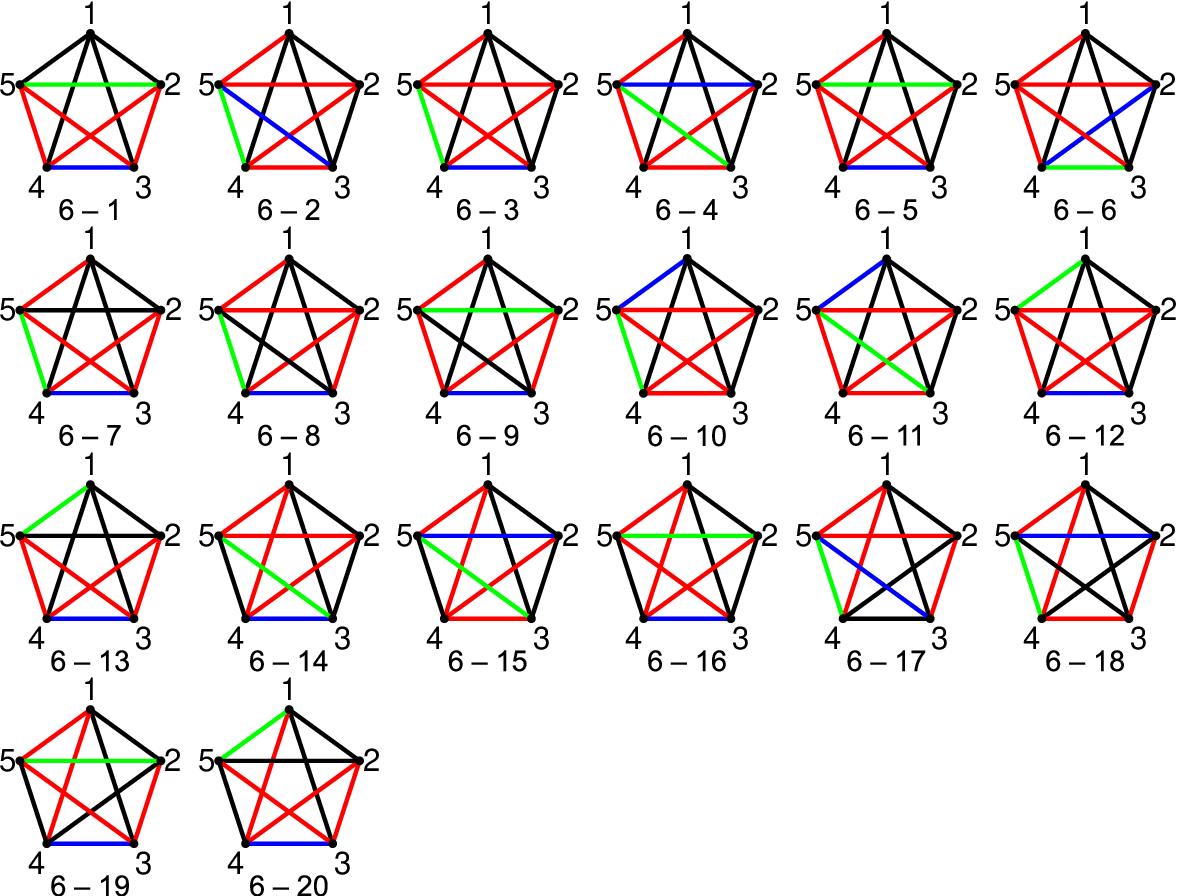}
		\caption{All feasible orthogonality graphs for category $(4,4,1,1)$.}
		\label{fig:4411}
	\end{figure*}
	\begin{figure*}[t]
		\centering
		\includegraphics[width=\textwidth,height=0.5\textheight,keepaspectratio]{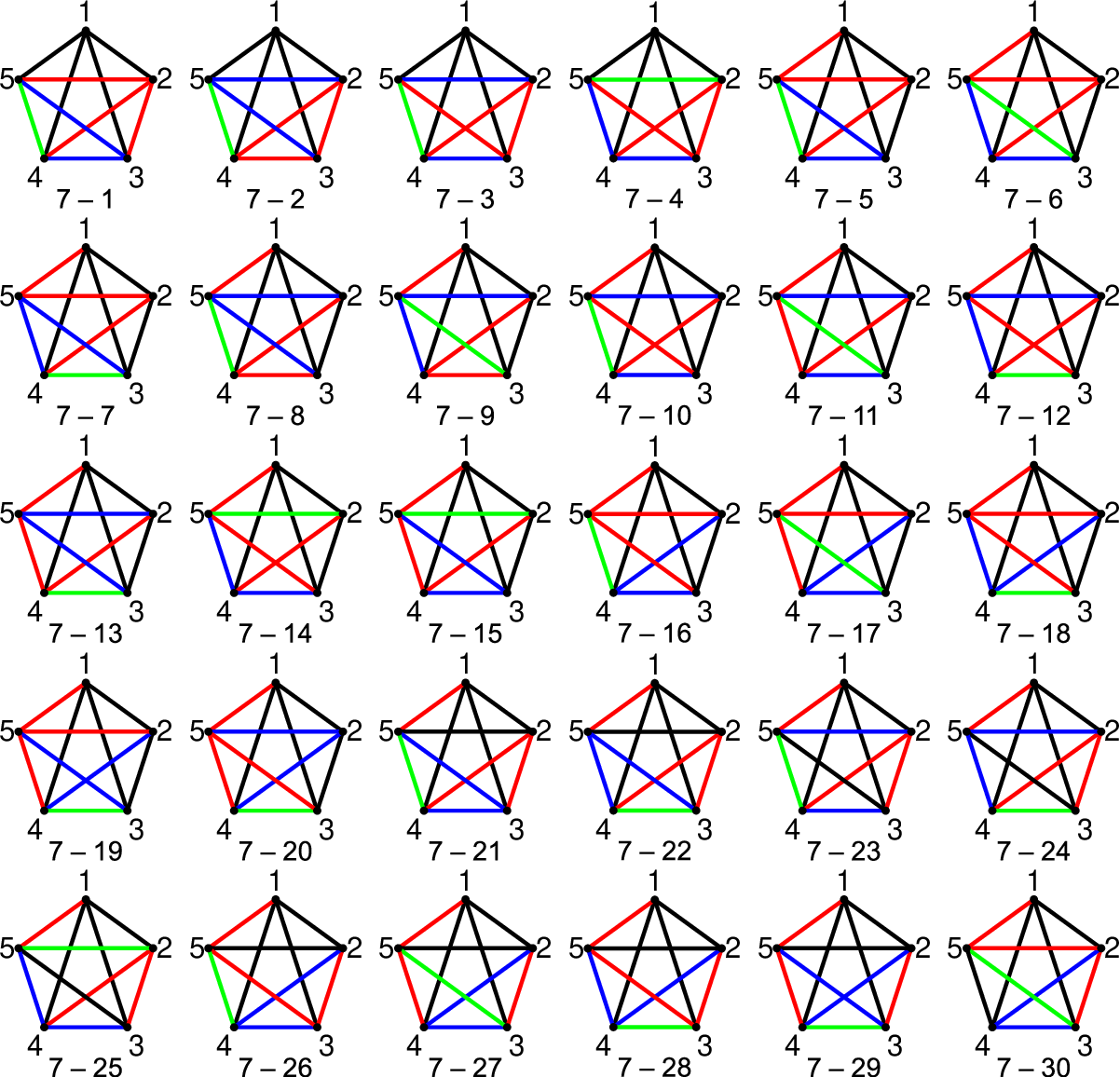}
		
		\vspace{2mm}
		
		\includegraphics[width=0.7\textwidth,height=0.5\textheight,keepaspectratio]{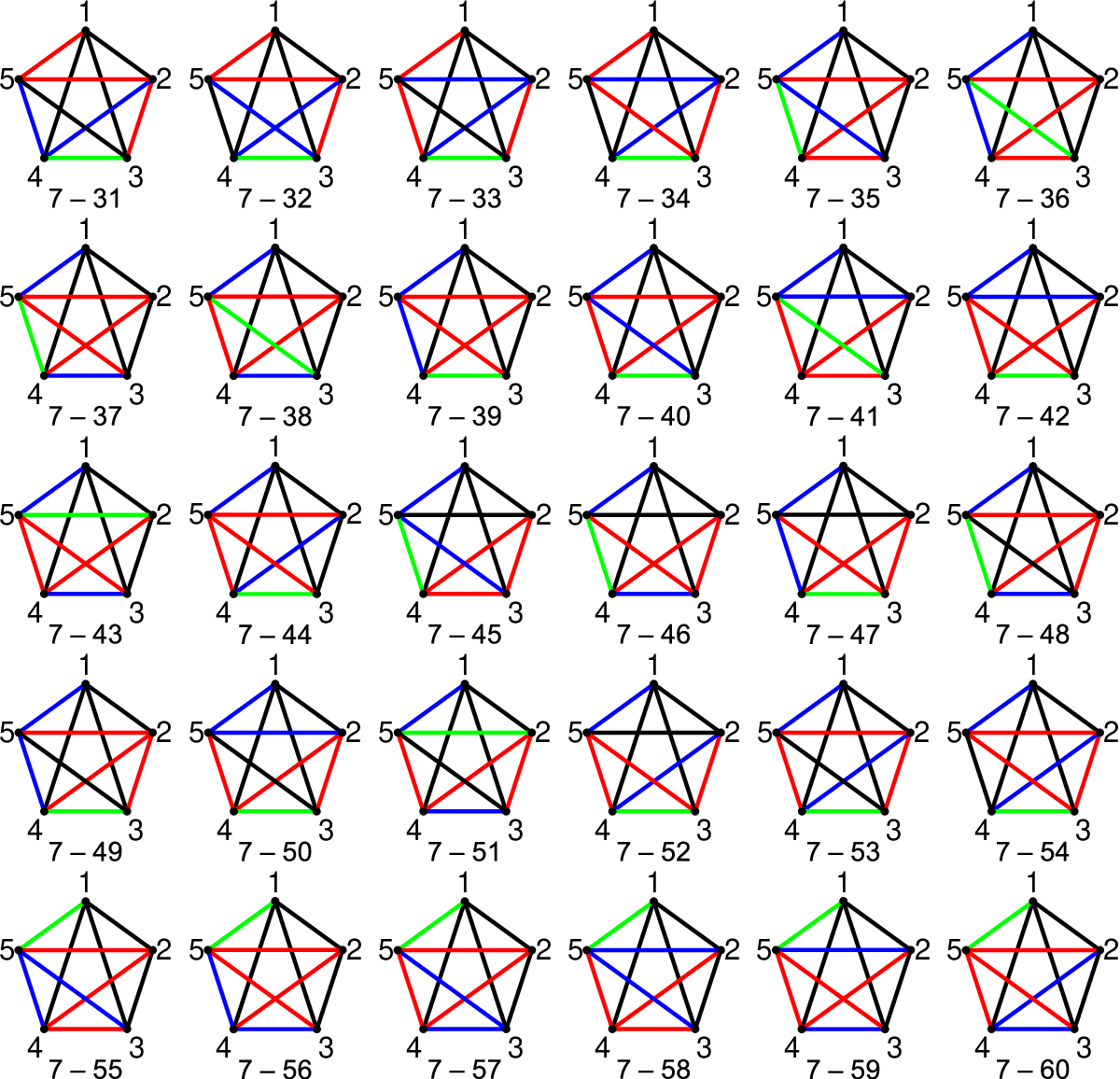}

		\label{fig:4321-part1}
	\end{figure*}
	
	\begin{figure*}[t]
		\centering
		\includegraphics[width=0.7\textwidth,height=0.5\textheight,keepaspectratio]{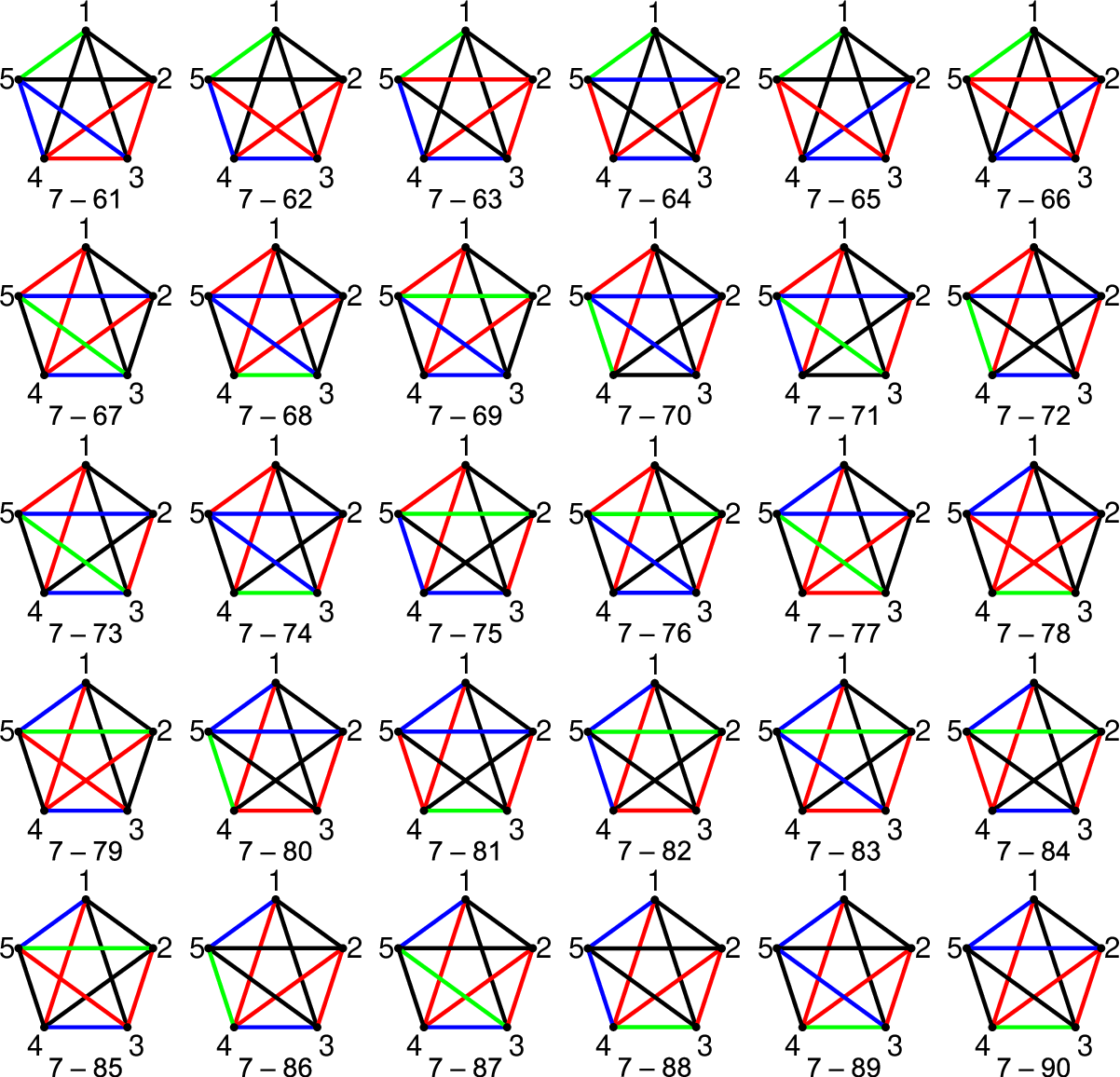}
		
		\vspace{2mm}
		
		\includegraphics[width=0.67\textwidth,height=0.5\textheight,keepaspectratio]{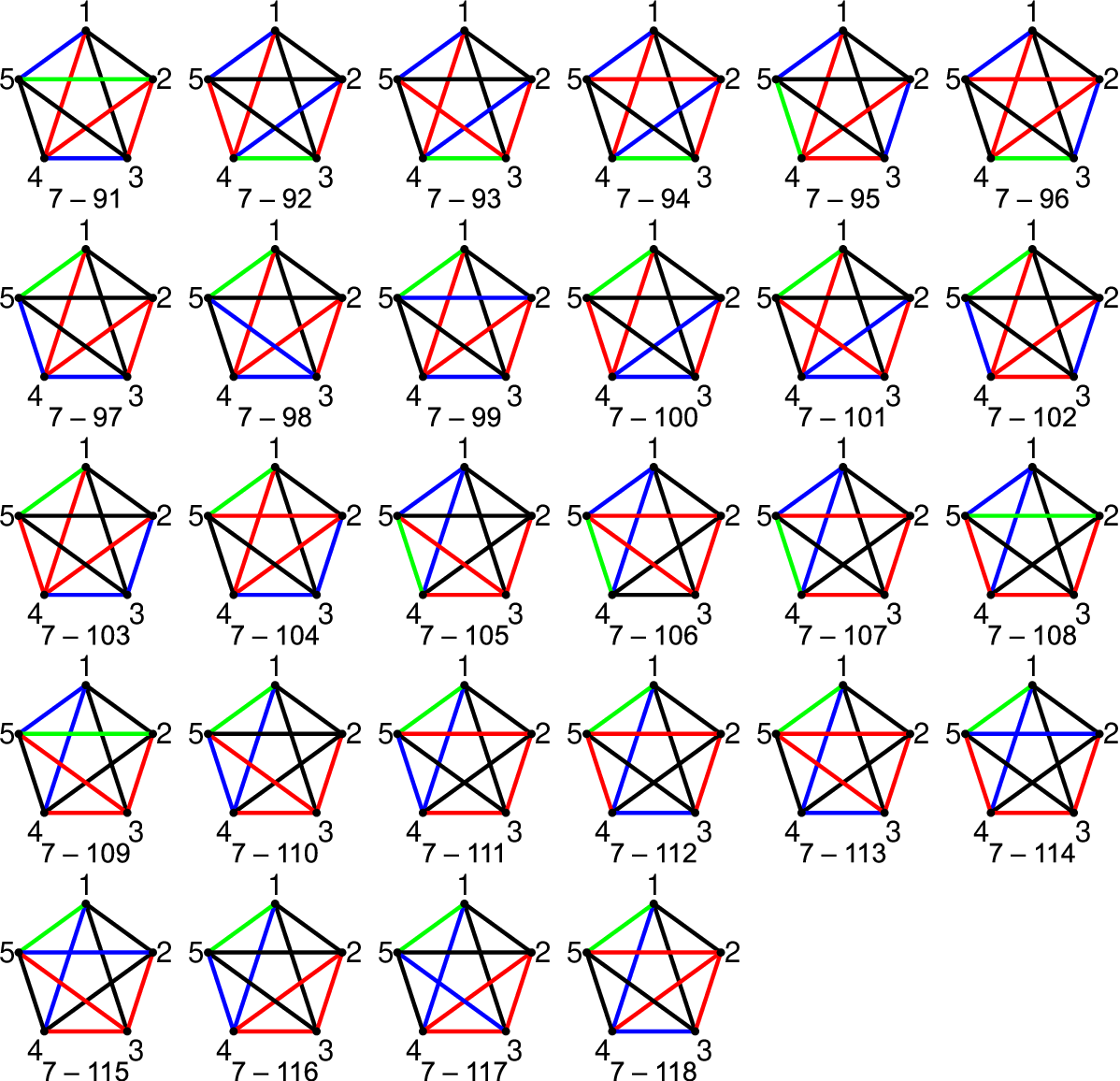}
		
		\caption{All feasible orthogonality graphs for category $(4,3,2,1)$.}
		\label{fig:4321-part2}
	\end{figure*}
	\begin{figure*}[t]
		\centering
		\includegraphics[width=0.7\textwidth,height=0.5\textheight,keepaspectratio]{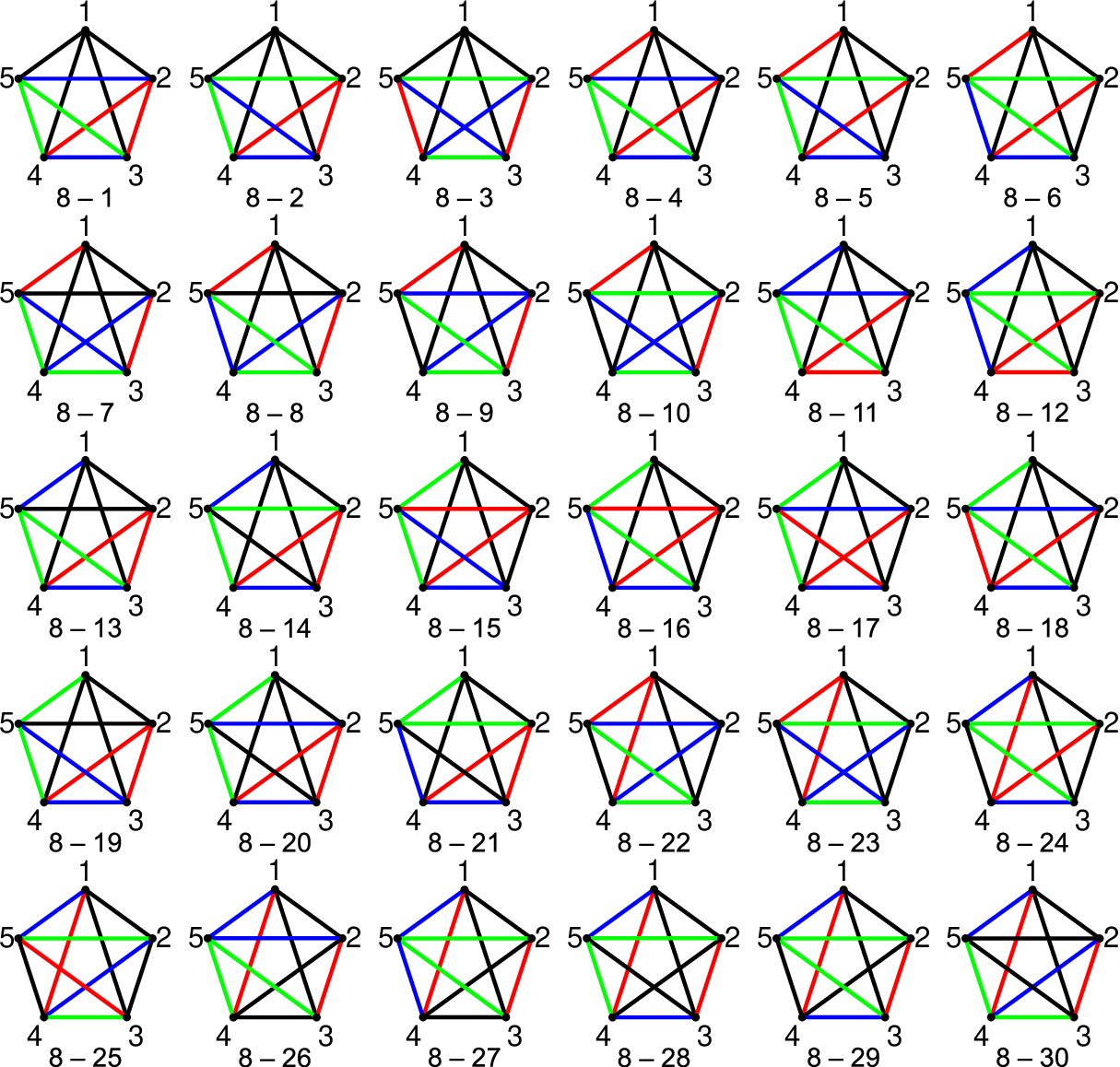}
		
		\vspace{2mm}
		
		\includegraphics[width=0.7\textwidth,height=0.21\textheight,keepaspectratio]{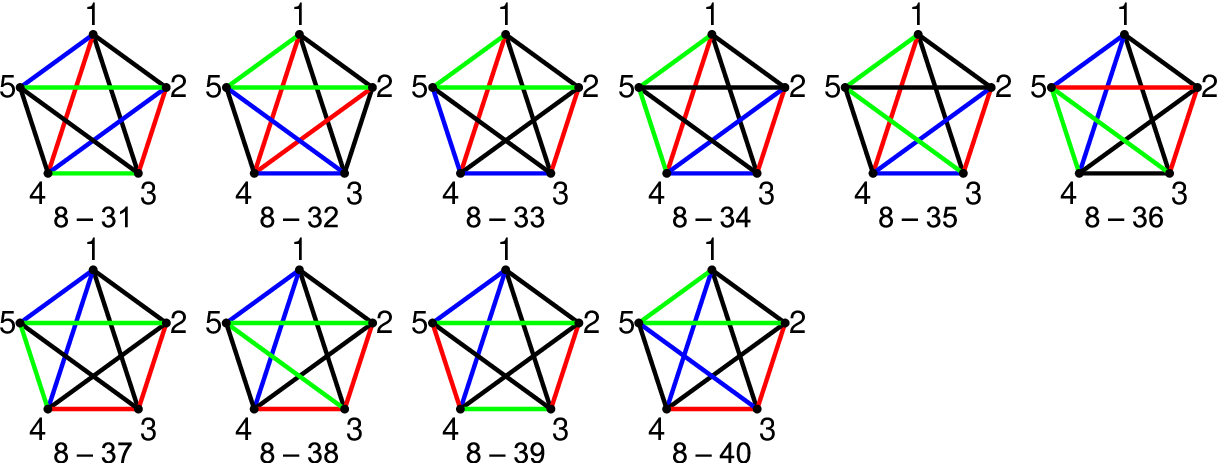}
		
		\caption{All feasible orthogonality graphs for category $(4,2,2,2)$.}
		\label{fig:4222}
	\end{figure*}
	\begin{figure*}[t]
		\centering
		\includegraphics[width=0.7\textwidth,height=0.62\textheight,keepaspectratio]{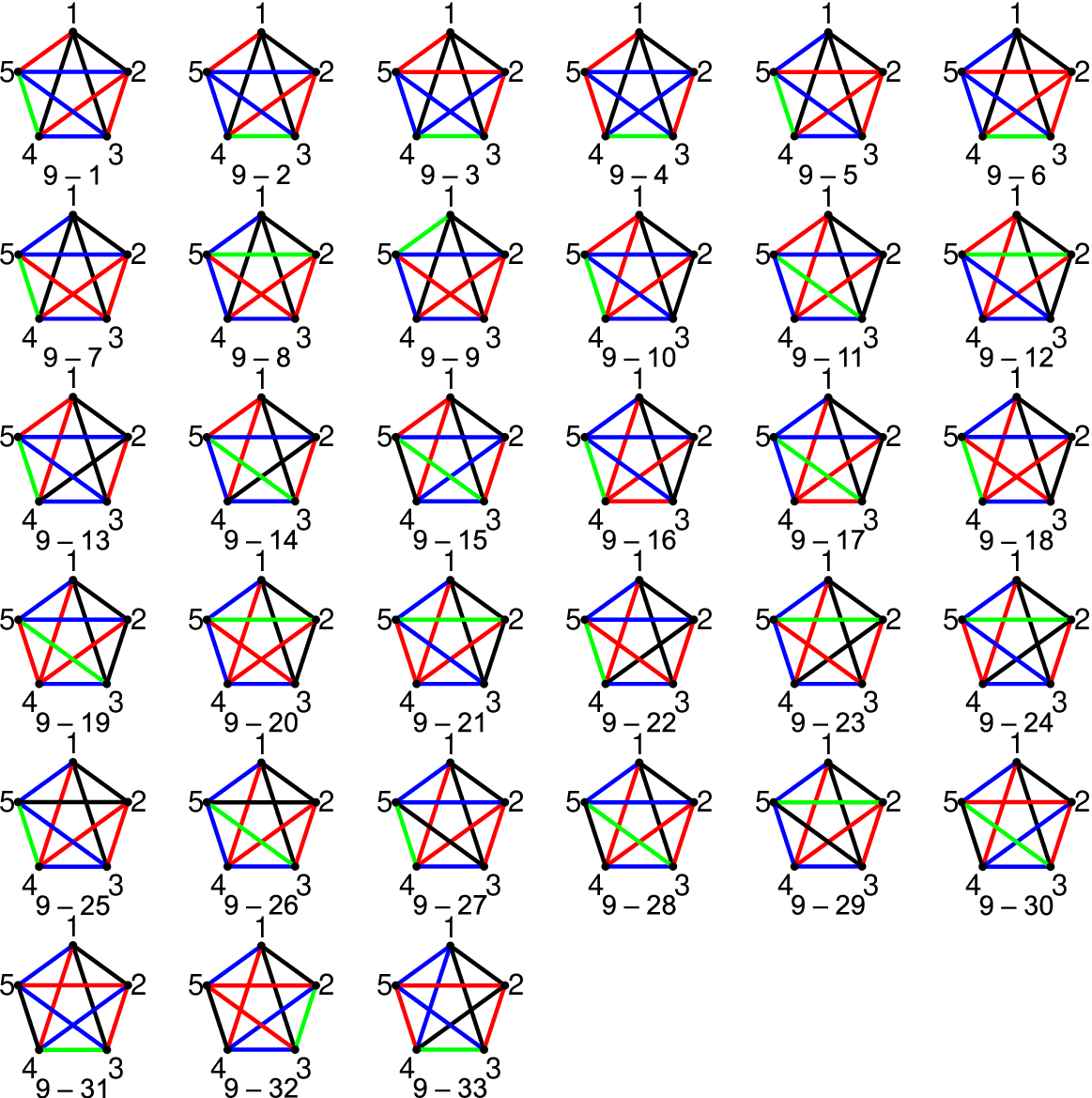}
		\caption{All feasible orthogonality graphs for category $(3,3,3,1)$.}
		\label{fig:3331}
	\end{figure*}
	\begin{figure*}[t]
		\centering
		\includegraphics[width=0.98\textwidth,height=0.5\textheight,keepaspectratio]{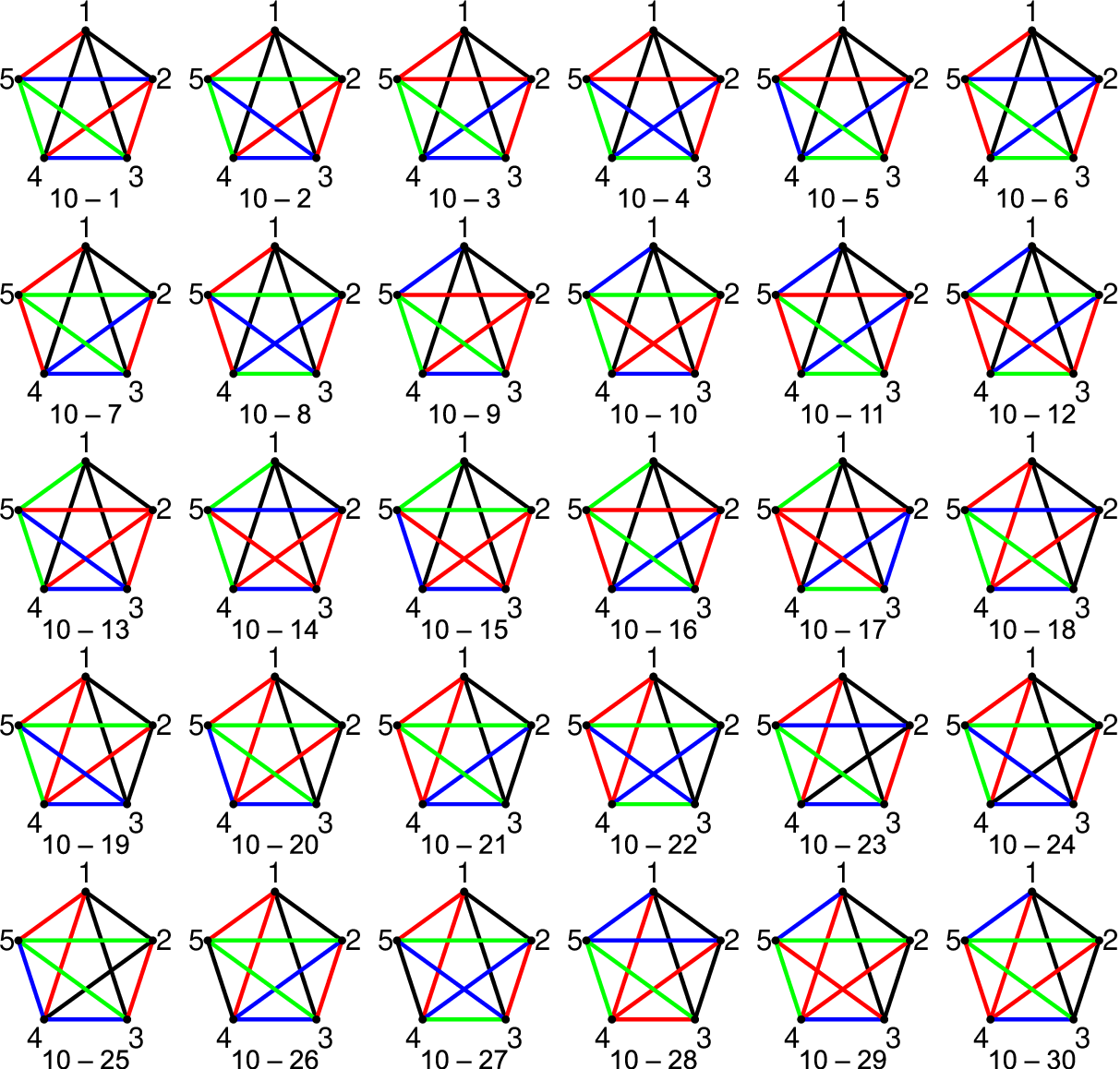}
		
		\includegraphics[width=0.98\textwidth,height=0.5\textheight,keepaspectratio]{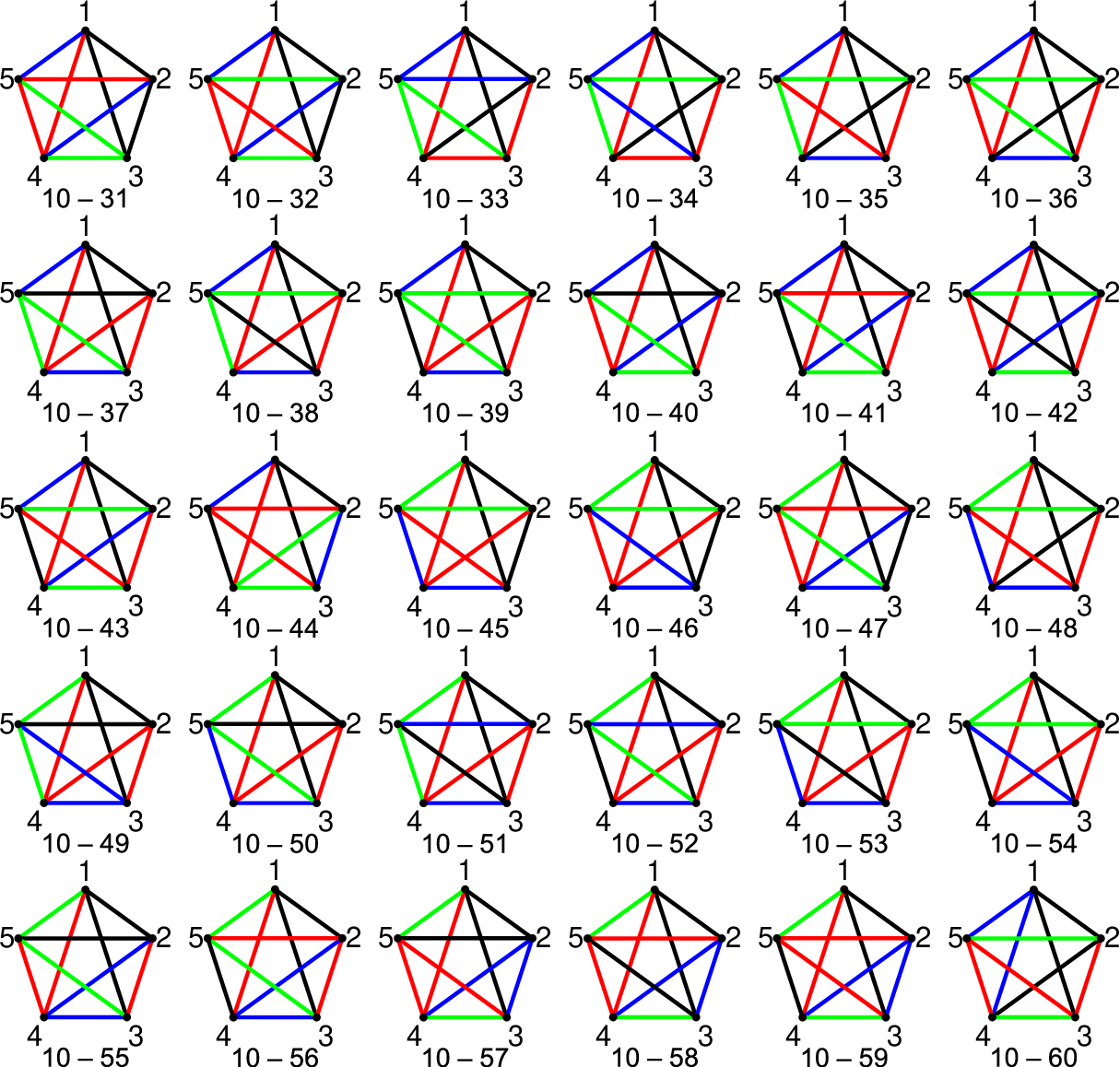}
		
		\label{fig:3322-part1}
	\end{figure*}
	
	\begin{figure*}[!t]
		\centering
		\includegraphics[
		width=0.98\textwidth,
		height=0.2\textheight,
		keepaspectratio
		]{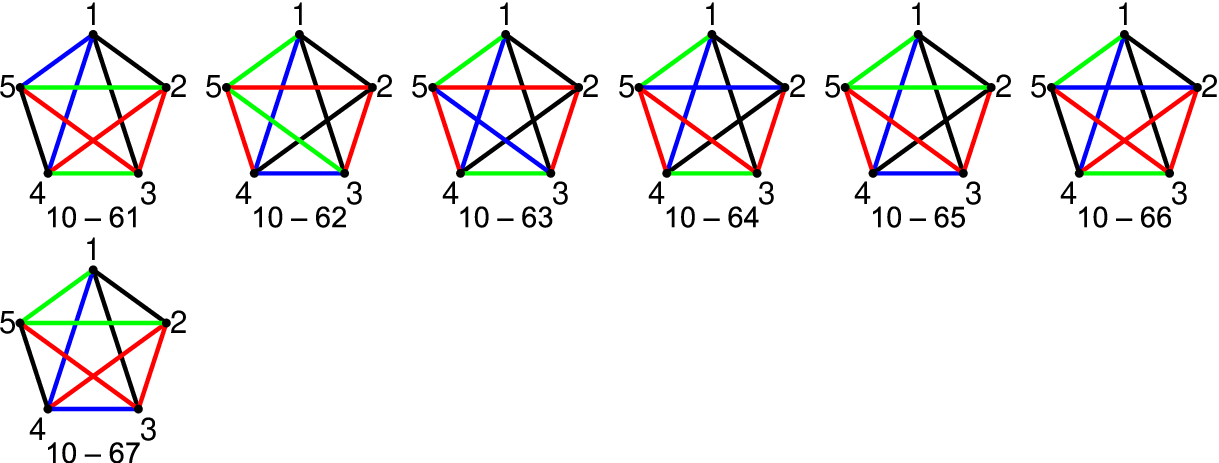}
		
		\caption{All feasible orthogonality graphs for category $(3,3,2,2)$.}
		\label{fig:3322-part2}
	\end{figure*}
	\begin{table}[t]
		\caption{All feasible configurations of five quadripartite OPSs}
		\label{tab:genuine-fourpartite-counts}
		\begin{ruledtabular}
			\begin{tabular}{cccc}
				Categories & \shortstack{Distinguishable\\cases} & \shortstack{Exceptional \\cases} & \shortstack{Total\\cases} \\
				\hline
				$(7,1,1,1)$ & 4   & 0 & 4   \\
				$(6,2,1,1)$ & 17  & 0 & 17  \\
				$(5,3,1,1)$ & 30  & 0 & 30  \\
				$(5,2,2,1)$ & 41  & 1 & 42  \\
				$(4,4,1,1)$ & 20  & 0 & 20  \\
				$(4,3,2,1)$ & 115 & 3 & 118 \\
				$(4,2,2,2)$ & 36  & 4 & 40  \\
				$(3,3,3,1)$ & 30  & 3 & 33  \\
				$(3,3,2,2)$ & 62  & 5 & 67  \\
			\end{tabular}
		\end{ruledtabular}
	\end{table}

	\begin{theorem} 
		\label{thm:22-reduction}
	Let $\mathcal{S}$ be a set of five  OPSs, in which any two states are orthogonal on exactly one subsystem. Suppose that there exists a vertex $v$ in the orthogonality graph of $S$ and two distinct subsystems $p$ and $q$ such that $\deg_p(v)=\deg_q(v)=2$. Let \(N_p(v)=\{u_1,u_2\}\) and \(N_q(v)=\{w_1,w_2\}\). If the edge between $w_1$ and $w_2$ is not of color $p$, then the $p$-th party can perform an orthogonality-preserving local measurement on the $p$-th subsystem such that the state corresponding to $v$ can be identified by LOCC. 
	\end{theorem}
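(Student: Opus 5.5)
Let $\ket{\psi_v}=\bigotimes_{i=1}^4\ket{\phi_{i,v}}$ denote the state corresponding to $v$. The plan is to let the $p$-th party perform the two-outcome projective measurement
\[
\{\,P=\ket{\phi_{p,v}}\bra{\phi_{p,v}},\; I-P\,\}
\]
on the $p$-th subsystem. We then check that, in each outcome branch, the surviving post-measurement states are still pairwise orthogonal and that $v$ is either identified or ruled out. The first step is bookkeeping. Since $v$ is adjacent to each of the other four vertices by exactly one edge, $\deg_p(v)=\deg_q(v)=2$ forces $\deg_r(v)=0$ for $r\notin\{p,q\}$. So $V\setminus\{v\}=\{u_1,u_2,w_1,w_2\}$, where $\ket{\phi_{p,u_j}}\perp\ket{\phi_{p,v}}$ and $\ket{\phi_{p,w_j}}\not\perp\ket{\phi_{p,v}}$. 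The hypothesis on the edge $w_1w_2$, together with the fact that each pair is orthogonal on exactly one subsystem, means $w_1,w_2$ are orthogonal on some subsystem $s\neq p$ and not orthogonal on subsystem $p$.

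For outcome $P$, the states $u_1,u_2$ are annihilated and hence excluded. The remaining candidates $v,w_1,w_2$ all have $p$-component proportional to $\ket{\phi_{p,v}}$, while their other components are untouched. Each surviving pair is orthogonal off subsystem $p$: $v$ and $w_j$ are orthogonal on $q$, and $w_1,w_2$ are orthogonal on $s\neq p$. Hence the three post-measurement states remain orthogonal product states, and Lemma~\ref{lem:three-ops} distinguishes them.

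For outcome $I-P$, the state $v$ is annihilated, so it is excluded; thus in every branch $v$ is either identified or eliminated. The key computation is that for any two $p$-components $\ket{a},\ket{b}$,
\[
\bra{a}(I-P)\ket{b}=\braket{a|b}-\braket{a|\phi_{p,v}}\braket{\phi_{p,v}|b}.
\]
Thus a pair originally orthogonal on $p$ can only lose that orthogonality if both members are non-orthogonal to $\ket{\phi_{p,v}}$. Among the remaining vertices, this happens only for the pair $\{w_1,w_2\}$, which by hypothesis is not orthogonal on $p$. Pairs orthogonal on other subsystems are unaffected because those subsystems are untouched. The $u_j$ are left unchanged, and the $w_j$ survive, up to normalization, provided $\ket{\phi_{p,w_j}}$ is not parallel to $\ket{\phi_{p,v}}$; otherwise they are simply excluded. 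Hence the measurement is orthogonality-preserving, and the remaining at most four states form an orthogonal product set not containing $v$.

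The main point to get right is this orthogonality-preservation check in the $I-P$ branch, since it is exactly where the hypothesis that $w_1w_2$ is not of color $p$ enters. Everything else is routine bookkeeping: normalization, and the observation that annihilated components correspond to excluded states. This establishes that the state corresponding to $v$ can be identified by LOCC, as claimed.
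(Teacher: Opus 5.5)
Your proposal is correct and follows the paper's proof essentially step by step. You use the same two-outcome projective measurement $\{\ket{\phi_{p,v}}\bra{\phi_{p,v}},\, I-\ket{\phi_{p,v}}\bra{\phi_{p,v}}\}$ on the $p$-th subsystem, invoke Lemma~\ref{lem:three-ops} for the surviving triple $\{v,w_1,w_2\}$ in the first branch, and in the second branch observe that orthogonality involving $u_1,u_2$ is untouched while the only pair that could be spoiled, $\{w_1,w_2\}$, is orthogonal off subsystem $p$ by hypothesis (your remark that a $w_j$ whose $p$-component is parallel to $\ket{\phi_{p,v}}$ is simply excluded there is a small detail the paper leaves implicit).
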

	
	\begin{proof}
		For each vertex $x\in\{v,u_1,u_2,w_1,w_2\}$, write the corresponding state as $|\psi_x\rangle=|\phi_{p,x}\rangle_p|\eta_x\rangle_{\bar p}$, where $|\eta_x\rangle_{\bar p}=\bigotimes_{r\neq p}|\phi_{r,x}\rangle_r$. Here $\bar p$ denotes all subsystems except the $p$-th subsystem. All local states are assumed to be normalized.
		
		Since $N_p(v)=\{u_1,u_2\}$, we have $\langle\phi_{p,v}|\phi_{p,u_1}\rangle=\langle\phi_{p,v}|\phi_{p,u_2}\rangle=0$. The $p$-th party performs a two-outcome local measurement with measurement operators $M_1=|\phi_{p,v}\rangle\langle\phi_{p,v}|$ and $M_2=I-|\phi_{p,v}\rangle\langle\phi_{p,v}|$, where $M_1^\dagger M_1+M_2^\dagger M_2=I$.
		
		If the measurement outcome corresponds to $M_1$, then the measured state must correspond to $v$, $w_1$, or $w_2$. More precisely,
		\[
		\begin{aligned}
			(M_1\otimes I_{\bar p})|\psi_v\rangle
			&=|\phi_{p,v}\rangle_p|\eta_v\rangle_{\bar p},\\
			(M_1\otimes I_{\bar p})|\psi_{w_1}\rangle
			&=\langle\phi_{p,v}|\phi_{p,w_1}\rangle
			|\phi_{p,v}\rangle_p|\eta_{w_1}\rangle_{\bar p},\\
			(M_1\otimes I_{\bar p})|\psi_{w_2}\rangle
			&=\langle\phi_{p,v}|\phi_{p,w_2}\rangle
			|\phi_{p,v}\rangle_p|\eta_{w_2}\rangle_{\bar p}.
		\end{aligned}
		\]
		Since the states corresponding to $v$ and $w_1$, as well as those corresponding to $v$ and $w_2$, are orthogonal on the $q$-th subsystem, their orthogonality is preserved under the measurement on the $p$-th subsystem. It therefore suffices to verify that the postmeasurement states corresponding to $w_1$ and $w_2$ remain orthogonal. By the above relations, the inner product between the postmeasurement states corresponding to $w_1$ and $w_2$ is $\langle\psi_{w_1}|(M_1^\dagger M_1\otimes I_{\bar p})|\psi_{w_2}\rangle=\langle\phi_{p,w_1}|\phi_{p,v}\rangle\langle\phi_{p,v}|\phi_{p,w_2}\rangle\langle\eta_{w_1}|\eta_{w_2}\rangle_{\bar p}$. Since the edge between $w_1$ and $w_2$ is not of color $p$, the states corresponding to $w_1$ and $w_2$ are orthogonal on one of the subsystems other than the $p$-th subsystem, and hence $\langle\eta_{w_1}|\eta_{w_2}\rangle_{\bar p}=0$. Therefore, the above inner product vanishes, and the orthogonality between the states corresponding to $w_1$ and $w_2$ is preserved.
		
		If the measurement outcome corresponds to $M_2$, then the measured state must correspond to $u_1$, $u_2$, $w_1$, or $w_2$. For $i=1,2$, since $\langle\phi_{p,v}|\phi_{p,u_i}\rangle=0$, we have $M_2|\phi_{p,u_i}\rangle=(I-|\phi_{p,v}\rangle\langle\phi_{p,v}|)|\phi_{p,u_i}\rangle=|\phi_{p,u_i}\rangle$. Moreover, for $i=1,2$ and any $x\in\{u_1,u_2,w_1,w_2\}$, we have $\langle\phi_{p,u_i}|M_2^\dagger M_2|\phi_{p,x}\rangle=\langle\phi_{p,u_i}|\phi_{p,x}\rangle-\langle\phi_{p,u_i}|\phi_{p,v}\rangle\langle\phi_{p,v}|\phi_{p,x}\rangle=\langle\phi_{p,u_i}|\phi_{p,x}\rangle$. Hence all orthogonality relations involving $u_1$ or $u_2$ are preserved.
		
		It remains to verify that the postmeasurement states corresponding to $w_1$ and $w_2$ remain orthogonal. For the measurement outcome corresponding to $M_2$, their postmeasurement inner product is $\langle\psi_{w_1}|(M_2^\dagger M_2\otimes I_{\bar p})|\psi_{w_2}\rangle=\bigl(\langle\phi_{p,w_1}|\phi_{p,w_2}\rangle-\langle\phi_{p,w_1}|\phi_{p,v}\rangle\langle\phi_{p,v}|\phi_{p,w_2}\rangle\bigr)\langle\eta_{w_1}|\eta_{w_2}\rangle_{\bar p}$. Since the edge between $w_1$ and $w_2$ is not of color $p$, the states corresponding to $w_1$ and $w_2$ are orthogonal on one of the subsystems other than the $p$-th subsystem, which implies $\langle\eta_{w_1}|\eta_{w_2}\rangle_{\bar p}=0$. Therefore, their postmeasurement inner product vanishes, and the orthogonality between the states corresponding to $w_1$ and $w_2$ is preserved.
		
		Therefore, under the condition that the edge between $w_1$ and $w_2$ is not of color $p$, the above local measurement is orthogonality-preserving. If the measurement outcome corresponds to $M_1$, the measured state must be one of the states corresponding to the vertices $\{v,w_1,w_2\}$, which can be perfectly distinguished by LOCC according to Lemma~\ref{lem:three-ops}. If the measurement outcome corresponds to $M_2$, the state corresponding to $v$ is excluded. Hence, the state corresponding to $v$ can be identified by LOCC. Moreover, if the states corresponding to the vertices $\{u_1,u_2,w_1,w_2\}$ can also be perfectly distinguished by LOCC, then $S$ can be perfectly distinguished by LOCC.	\end{proof}
	
	\begin{theorem}
		\label{thm:211-reduction}
		Let $S$ be a set of five OPSs, where any two of which are orthogonal on exactly one subsystem. Suppose that there exists a vertex $v$ in the orthogonality graph of $S$ and three distinct subsystems $p$, $q$, and $r$ such that $\deg_p(v)=2$ and $\deg_q(v)=\deg_r(v)=1$. Let $N_p(v)=\{u_1,u_2\}$, $N_q(v)=\{w_1\}$, and $N_r(v)=\{w_2\}$. If the edge between $w_1$ and $w_2$ is not of color $p$, then the $p$-th party can perform an orthogonality-preserving local measurement on the $p$-th subsystem such that the state corresponding to $v$ can be identified by LOCC.
	\end{theorem}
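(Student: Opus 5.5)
The plan is to repeat the argument of Theorem~\ref{thm:22-reduction} almost verbatim. The only change is that the two vertices not adjacent to $v$ in color $p$ are now $w_1\in N_q(v)$ and $w_2\in N_r(v)$, rather than two vertices of $N_q(v)$. Since $\deg_p(v)+\deg_q(v)+\deg_r(v)=4$, the remaining four vertices are exactly $u_1,u_2,w_1,w_2$, and $\deg_s(v)=0$ for the fourth subsystem $s$. Write each state as $|\psi_x\rangle=|\phi_{p,x}\rangle_p|\eta_x\rangle_{\bar p}$, with all local states normalized. The $p$-th party performs the two-outcome measurement $M_1=|\phi_{p,v}\rangle\langle\phi_{p,v}|$, $M_2=I-M_1$. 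Because $\langle\phi_{p,v}|\phi_{p,u_i}\rangle=0$, outcome $M_1$ annihilates $|\psi_{u_1}\rangle$ and $|\psi_{u_2}\rangle$, so after outcome $M_1$ only $v,w_1,w_2$ survive.

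Next, check that the measurement is orthogonality-preserving, pair by pair.

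\emph{Outcome $M_1$.}
\begin{itemize}
\item The pair $(v,w_1)$ is orthogonal on subsystem $q$ and the pair $(v,w_2)$ on subsystem $r$. Both are untouched by an operator acting only on $p$, so these orthogonalities survive.
\item The pair $(w_1,w_2)$ has postmeasurement inner product $\langle\phi_{p,w_1}|\phi_{p,v}\rangle\langle\phi_{p,v}|\phi_{p,w_2}\rangle\langle\eta_{w_1}|\eta_{w_2}\rangle_{\bar p}$. This vanishes because, by hypothesis, the edge $w_1w_2$ is not of color $p$, which forces $\langle\eta_{w_1}|\eta_{w_2}\rangle_{\bar p}=0$.
\end{itemize}

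\emph{Outcome $M_2$.}
\begin{itemize}
\item We have $M_2|\phi_{p,u_i}\rangle=|\phi_{p,u_i}\rangle$. Hence $\langle\phi_{p,u_i}|M_2^\dagger M_2|\phi_{p,x}\rangle=\langle\phi_{p,u_i}|\phi_{p,x}\rangle$ for every $x$, and every orthogonality relation involving $u_1$ or $u_2$ is preserved, whatever its color.
\item The pair $(w_1,w_2)$ has postmeasurement inner product $\bigl(\langle\phi_{p,w_1}|\phi_{p,w_2}\rangle-\langle\phi_{p,w_1}|\phi_{p,v}\rangle\langle\phi_{p,v}|\phi_{p,w_2}\rangle\bigr)\langle\eta_{w_1}|\eta_{w_2}\rangle_{\bar p}$. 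This again vanishes by the color hypothesis on the edge $w_1w_2$.
\end{itemize}

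Finally, conclude. After outcome $M_1$ the candidates $\{v,w_1,w_2\}$ are three mutually orthogonal product states, which Lemma~\ref{lem:three-ops} distinguishes. After outcome $M_2$ the state $v$ is excluded. Hence $v$ is identified by LOCC. As in Theorem~\ref{thm:22-reduction}, I will add the remark that $S$ is perfectly distinguishable whenever $\{u_1,u_2,w_1,w_2\}$ is.

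The only point needing care is the $(w_1,w_2)$ pair, and it is exactly what the hypothesis handles. I would also note explicitly that the degrees of $v$ are used only through the partition of $V\setminus\{v\}$ into $N_p(v)$ and the rest. This is why the $(2,1,1)$ degree pattern behaves identically to the $(2,2)$ pattern of Theorem~\ref{thm:22-reduction}.
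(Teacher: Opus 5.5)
Your proposal is correct and is essentially the paper's argument: the paper omits this proof as ``similar to that of Theorem~\ref{thm:22-reduction},'' and your adaptation is exactly that. You use the projector $|\phi_{p,v}\rangle\langle\phi_{p,v}|$ against its complement, and the hypothesis on the edge between $w_1$ and $w_2$ disposes of the only nontrivial pair in each branch. Your closing remark is also accurate: only the split of $V\setminus\{v\}$ into $N_p(v)$ and $\{w_1,w_2\}$ is used, so the $(2,1,1)$ and $(2,2)$ degree patterns amount to the same statement.
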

	
	Here, we omit the proof of Theorem 2 since it is similar to that of Theorem 1.
	
	For a colored orthogonality graph $G=(V,E_1\cup E_2\cup E_3\cup E_4)$ with $V=\{1,2,3,4,5\}$, let $e_{ij}$ denote the undirected edge joining vertices $i$ and $j$, where $e_{ij}=e_{ji}$ for any distinct $i,j\in V$. The relation $e_{ij}\in E_k$ indicates that the edge between vertices $i$ and $j$ has color $k$. Two colored orthogonality graphs are regarded as identical if one can be obtained from the other by a relabeling of vertices and a permutation of parties. 
	\begin{theorem}
		\label{thm:main}
		Five quadripartite OPSs, any two of which are orthogonal on exactly one subsystem, can be perfectly distinguished by LOCC unless the orthogonality graph of the five states is of one of the following four structural types (up to relabeling of vertices and permutation of parties): 
		\begin{enumerate}[label=(\arabic*)]
			
			\item The orthogonality graph of five quadripartite OPSs satisfies
			\[
			\begin{aligned}
				E_1&=\{e_{12},e_{13},e_{23},e_{45}\},\\
				E_2&=\{e_{14},e_{35}\},\\
				E_3&=\{e_{15},e_{24}\},\\
				E_4&=\{e_{25},e_{34}\}.
			\end{aligned}
			\]
			
			\item The orthogonality graph of five quadripartite OPSs satisfies
			\[
			\begin{aligned}
				E_1&=\{e_{12},e_{13},e_{45}\},\\
				E_2&=\{e_{14},e_{15},e_{23}\},\\
				E_3&=\{e_{24},e_{35}\},\\
				E_4&=\{e_{25},e_{34}\}.
			\end{aligned}
			\]
			
			\item The orthogonality graph of five quadripartite OPSs satisfies
			\[
			\begin{aligned}
				E_1&=\{e_{12},e_{13},e_{45}\},\\
				E_2&=\{e_{14},e_{23},e_{25}\},\\
				E_3&=\{e_{15},e_{34}\},\\
				E_4&=\{e_{24},e_{35}\}.
			\end{aligned}
			\]
			
			\item There exist four states, an ordering of these four states, and three distinct subsystems $p$, $q$, and $r$ such that the vertices $v_j$ of the orthogonality graph of these four OPSs satisfy
			$\deg_p(v_j)=\deg_q(v_j)=\deg_r(v_j)=1$ for $j=1,2,3,4$.
		\end{enumerate}
	
	\end{theorem}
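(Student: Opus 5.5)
The plan is a case-by-case reduction over the nine genuine categories of Table~\ref{tab:fourpartite-classification} and their feasible orthogonality graphs (Figs.~\ref{fig:7111}--\ref{fig:3322-part2}). The reducible categories in the first two columns follow from the bipartite and tripartite results of Ref.~\cite{ref32}. The argument is uniform: in every graph not of types (1)--(4), find a vertex $v$ that can be split off by an orthogonality-preserving local measurement, and show the remaining four states are locally distinguishable. The tools for splitting off $v$ are, in order:
\begin{enumerate}[label=(\roman*)]
\item Lemma~\ref{lem:degree4-removal}, when some $\deg_p(v)=4$;
\item Lemma~\ref{lem:31-removal}, when $v$ has degree pattern $(3,1)$;
\item Theorem~\ref{thm:22-reduction}, when $v$ has pattern $(2,2)$ and the edge $w_1w_2$ is not of color $p$, or the same with the roles of $p$ and $q$ exchanged;
\item Theorem~\ref{thm:211-reduction}, when $v$ has pattern $(2,1,1)$ and the edge joining its two singly-attached neighbours is not of color $p$.
\end{enumerate}
Every vertex has total degree $4$ spread over four colours. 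Its pattern is therefore one of $(4)$, $(3,1)$, $(2,2)$, $(2,1,1)$ or $(1,1,1,1)$, and the first four are exactly the patterns covered by (i)--(iv).

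After a vertex is removed, the four survivors are again OPSs with each pair orthogonal on exactly one subsystem. Their vector sums to $6$. If it uses at most three colours, Lemma~\ref{lem:four-bipartite-ops} (at most two colours) or Lemmas~\ref{lem:four-tripartite-411-321}--\ref{lem:four-tripartite-222-exception} (three colours) apply. The only bad subcase is $(2,2,2)$ of type $(1\text{-}3)$, which is exactly where every vertex has one edge of each of three colours, i.e.\ condition (4). If all four colours occur, the vector is $(3,1,1,1)$ or $(2,2,1,1)$, and Lemma~\ref{lem:four-fourpartite-3111-2211} applies. We also have a choice of which vertex to remove, so the aim in each graph is to find a removable $v$ whose complement is not of type $(1\text{-}3)$. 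If the graph has no such $v$, I would show that some four-vertex subset is of type $(1\text{-}3)$, which places it under case (4).

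For the categories $(7,1,1,1)$, $(6,2,1,1)$, $(5,3,1,1)$ and $(4,4,1,1)$, I expect a short uniform argument. In the first two, some vertex has degree $\ge3$ in colour $1$. Colour $1$ carries at least $4$ edges on $5$ vertices and the other colours are sparse, so a vertex with pattern $(4)$ or $(3,1)$ should always exist. If not, a $(2,2)$ or $(2,1,1)$ vertex satisfies the edge condition because the relevant edge lies in a rare colour. The four remaining states then fall into a three-colour category other than $(2,2,2)$, or into $(3,1,1,1)$ or $(2,2,1,1)$.

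The main obstacle is the balanced categories $(5,2,2,1)$, $(4,3,2,1)$, $(4,2,2,2)$, $(3,3,3,1)$ and $(3,3,2,2)$, which have many graphs. There I would organize the check by the degree pattern of each vertex rather than graph by graph. A graph escapes all four reductions only if every $(3,1)$ and $(4)$ pattern is absent and every $(2,2)$ or $(2,1,1)$ vertex has its distinguished edge in colour $p$. This is a strong combinatorial constraint: it forces small triangles in a single colour, such as $e_{12},e_{13},e_{23}$ in type (1). Solving it directly, rather than inspecting pictures, should produce exactly the graphs (1)--(3) together with those containing a $(1\text{-}3)$ subconfiguration, accounting for the 16 exceptional cases of Table~\ref{tab:genuine-fourpartite-counts}. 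Verifying exhaustively that no further graph survives, i.e.\ that the remaining 355 enumerated cases each admit a good $v$, is the laborious step. I would support it by a computer-aided enumeration of colourings of $K_5$ up to $S_5\times S_4$ symmetry.
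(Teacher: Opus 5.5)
\textbf{Gap.} Your plan depends on the claim that every graph not of types (1)--(4) has a vertex that can be removed by (i)--(iv) and leaves a good remainder. Equivalently, you claim that ``no removable $v$'' forces a $(1\text{-}3)$ subconfiguration. This is false, because (iii) and (iv) carry edge conditions that can fail at every vertex at once.

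Take the $(5,3,1,1)$ graph $E_1=\{e_{12},e_{23},e_{35},e_{45},e_{14}\}$, $E_2=\{e_{13},e_{24},e_{25}\}$, $E_3=\{e_{34}\}$, $E_4=\{e_{15}\}$. Here $E_1$ is a $5$-cycle, so $\deg_1(v)=2$ for every $v$. Hence no vertex has pattern $(4)$ or $(3,1)$. Also, every four-vertex subset contains exactly three colour-$1$ edges, so none is of type $(1\text{-}3)$. The graph is not of types (1)--(3) either, since their vectors are $(4,2,2,2)$ and $(3,3,2,2)$.

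Now check the edge conditions vertex by vertex:
\begin{itemize}
\item Vertex $1$ has $N_1(1)=\{2,4\}$, $N_2(1)=\{3\}$, $N_4(1)=\{5\}$, but $e_{35}\in E_1$.
\item Vertex $2$ has $N_1(2)=\{1,3\}$ and $N_2(2)=\{4,5\}$. Since $e_{45}\in E_1$ and $e_{13}\in E_2$, Theorem~\ref{thm:22-reduction} fails for both choices of $p$.
\item For vertices $3,4,5$ the relevant edges are $e_{14},e_{23},e_{12}$, all in $E_1$.
\end{itemize}
So no vertex can be split off, and the graph is not exempted by the theorem. This contradicts your dichotomy. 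It also contradicts your expectation that in $(5,3,1,1)$ ``the relevant edge lies in a rare colour'': here it always lies in the dominant colour. The same thing happens in $(5,2,2,1)$, for example with $E_1=\{e_{12},e_{13},e_{24},e_{35},e_{45}\}$, $E_2=\{e_{14},e_{23}\}$, $E_3=\{e_{15},e_{34}\}$, $E_4=\{e_{25}\}$ ($E_1$ is again a $5$-cycle, and every vertex fails Theorem~\ref{thm:211-reduction}).

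\textbf{Missing tool.} What you need is a splitting measurement rather than a vertex removal. The paper's appendix uses exactly this for cases $(4\text{-}29)$ and $(5\text{-}41)$, whose descriptions match these graphs. Suppose a colour class $E_s$ is a single edge $\{u,w\}$. Party $s$ measures $\{P,I-P\}$, where $P$ is the projector onto the $s$-th local state of $u$. No other pair is orthogonal on subsystem $s$, so this measurement is orthogonality-preserving. Outcome $P$ eliminates $w$ and outcome $I-P$ eliminates $u$, so it suffices that both $S\setminus\{w\}$ and $S\setminus\{u\}$ are locally distinguishable.

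In the first example, party $3$ splits on $e_{34}$. Both $\{1,2,3,5\}$ and $\{1,2,4,5\}$ then have vector $(3,2,0,1)$, so they are distinguishable by Lemma~\ref{lem:four-tripartite-411-321}. In the second example, party $4$ splits on $e_{25}$, leaving vectors $(3,2,1,0)$ and $(3,1,2,0)$.

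Once this fifth reduction is added, your combinatorial characterization has to be restated. It should say that every non-exceptional graph admits a good vertex removal \emph{or} a good single-edge split whose two branches are both distinguishable. Your exhaustive (or computer-aided) verification must then be run against this enlarged toolkit. As written, your argument cannot handle these graphs, and the proof is incomplete.
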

	
	By analyzing all feasible orthogonality graphs in Figs.~\ref{fig:7111}--\ref{fig:3322-part2}, every non-exceptional graph contains, up to relabeling of vertices and permutation of parties, one of the reducible structures described in Lemma~\ref{lem:degree4-removal}, Lemma~\ref{lem:31-removal}, Theorem~\ref{thm:22-reduction}, and Theorem~\ref{thm:211-reduction}. Therefore, a set of five quadripartite OPSs corresponding to one of 355 graphs can be perfectly distinguished by LOCC. The proof of Theorem~\ref{thm:main} is given in Appendix~\ref{app:A}.
	\section{\label{sec4}Local indistinguishability of the exceptional cases}
	\label{sec:exceptional}
    In this section, we investigate the local distinguishability of five quadripartite OPSs associated with each of the 16 exceptional orthogonality graphs excluded in Theorem~\ref{thm:main}.
	
    \subsection*{1. A general form for case $(8\text{-}25)$}
    
    Consider five quadripartite OPSs corresponding to case $(8\text{-}25)$, which belongs to exceptional case $1$ in Theorem~\ref{thm:main}. We assume that the five OPSs are denoted by $\ket{\psi_1}$, $\ket{\psi_2}$, $\ket{\psi_3}$, $\ket{\psi_4}$, and $\ket{\psi_5}$, and $\ket{\psi_i} $ can be written as $\ket{\alpha_i}_1\ket{\beta_i}_2 \ket{\gamma_i}_3\ket{\delta_i}_4$, where $i=1,\ldots,5$.
    
    On the first subsystem, states $1$, $2$, and $3$ are mutually orthogonal, and states $4$ and $5$ are orthogonal to each other. Suppose that the first subsystems of states $1$, $2$, and $3$ are $\ket{0}$, $\ket{1}$, and $\ket{2}$, respectively.
    
    Since state $4$ is not orthogonal to any of states $1$, $2$, and $3$ on the first subsystem, its components along $\ket{0}$, $\ket{1}$, and $\ket{2}$ must be nonzero. Thus, the first subsystem of state $4$ can be expressed as $\frac{1}{\sqrt{1+|a_1|^2+|b_1|^2+|c_1|^2}} (\ket{0}+a_1\ket{1}+b_1\ket{2}+c_1\ket{3}),$ where $a_1\neq0$ and $b_1\neq0$. Similarly, state $5$ is not orthogonal to states $1$, $2$, and $3$ on the first subsystem. Hence, its coefficients along $\ket{0}$, $\ket{1}$, and $\ket{2}$ are also nonzero, and its first subsystem can be written as $\frac{1}{\sqrt{1+|d_1|^2+|e_1|^2+|f_1|^2+|g_1|^2}}
    (\ket{0}+d_1\ket{1}+e_1\ket{2} +f_1\ket{3}+g_1\ket{4})$, where $d_1\neq0$ and $e_1\neq0$.
    
    Furthermore, we have $1+a_1^*d_1+b_1^*e_1+c_1^*f_1=0$ since states $4$ and $5$ are orthogonal on the first subsystem. Consequently, the general forms of the first subsystems of states $1$, $2$, $3$, $4$, and $5$ are given in Eq.~\eqref{eq:case825-first-subsystem}:
    \begin{equation}
    	\label{eq:case825-first-subsystem}
    	\begin{aligned}
    		\ket{\alpha_1}_1&=\ket{0},\\
    		\ket{\alpha_2}_1&=\ket{1},\\
    		\ket{\alpha_3}_1&=\ket{2},\\
      		\ket{\alpha_4}_1
    		&=
    		\frac{1}
    		{\sqrt{1+|a_1|^2+|b_1|^2+|c_1|^2}}
    		\\[-0.5ex]
    		&\quad\times
    		\bigl(
    		\ket{0}+a_1\ket{1}+b_1\ket{2}+c_1\ket{3}
    		\bigr),\\
    \end{aligned}
    \end{equation}
    \begin{equation}
    	\nonumber
    	\begin{aligned}
    		\ket{\alpha_5}_1
    		&=
    		\frac{1}
    		{\sqrt{1+|d_1|^2+|e_1|^2+|f_1|^2+|g_1|^2}}
    		\\[-0.5ex]
    		&\quad\times
    		\bigl(
    		\ket{0}+d_1\ket{1}+e_1\ket{2}
    		+f_1\ket{3}+g_1\ket{4}
    		\bigr),
    	\end{aligned}
    \end{equation}
    where $a_1\neq0$, $b_1\neq0$, $d_1\neq0$, $e_1\neq0$, and $1+a_1^*d_1+b_1^*e_1+c_1^*f_1=0$.
    
    On the second subsystem, states $1$ and $4$ are orthogonal, and states $3$ and $5$ are orthogonal. Suppose that the second subsystems of states $1$ and $4$ are $\ket{0}$ and $\ket{1}$, respectively.
    
    Since state $3$ is not orthogonal to states $1$ and $4$ on this subsystem, its coefficients along $\ket{0}$ and $\ket{1}$ must be nonzero. Thus, the second subsystem of state $3$ can be written as $\frac{1}{\sqrt{1+|a_2|^2+|b_2|^2}} (\ket{0}+a_2\ket{1}+b_2\ket{2})$, where $a_2\neq0$.
    
    Similarly, state $5$ is not orthogonal to states $1$ and $4$ on the second subsystem. Hence, its second subsystem can be expressed as $\frac{1}{\sqrt{1+|c_2|^2+|d_2|^2+|e_2|^2}}
    (\ket{0}+c_2\ket{1}+d_2\ket{2}+e_2\ket{3})$, where $c_2\neq0$. Since states $3$ and $5$ are orthogonal on this subsystem, the condition $\braket{\beta_3|\beta_5}=0$ gives $1+a_2^*c_2+b_2^*d_2=0$.
    
    State $2$ is not orthogonal to states $1$, $3$, $4$, or $5$ on the second subsystem. Therefore, its second subsystem can be written as $\frac{1}{\sqrt{1+|f_2|^2+|g_2|^2+|h_2|^2+|k_2|^2}}
    (\ket{0}+f_2\ket{1}+g_2\ket{2}+h_2\ket{3}+k_2\ket{4})$, where $f_2\neq0$. Consequently, the general forms of the second subsystems are given in Eq.~\eqref{eq:case825-second-subsystem}:
    \begin{equation}
    	\label{eq:case825-second-subsystem}
    	\begin{aligned}
    		\ket{\beta_1}_2&=\ket{0},\\
    		\ket{\beta_2}_2
    		&=
    		\frac{1}
    		{\sqrt{1+|f_2|^2+|g_2|^2+|h_2|^2+|k_2|^2}}
    		\\[-0.5ex]
    		&\quad\times
    		\bigl(
    		\ket{0}+f_2\ket{1}+g_2\ket{2}
    		\\[-0.5ex]
    		&\qquad
    		+h_2\ket{3}+k_2\ket{4}
    		\bigr),\\
    		\ket{\beta_3}_2
    		&=
    		\frac{1}
    		{\sqrt{1+|a_2|^2+|b_2|^2}}
    		\\[-0.5ex]
    		&\quad\times
    		\bigl(
    		\ket{0}+a_2\ket{1}+b_2\ket{2}
    		\bigr),\\
    		\ket{\beta_4}_2&=\ket{1},\\
    		\ket{\beta_5}_2
    		&=
    		\frac{1}
    		{\sqrt{1+|c_2|^2+|d_2|^2+|e_2|^2}}
    		\\[-0.5ex]
    		&\quad\times
    		\bigl(
    		\ket{0}+c_2\ket{1}+d_2\ket{2}
    		+e_2\ket{3}
    		\bigr),
    	\end{aligned}
    \end{equation}
    where $a_2\neq0$, $c_2\neq0$, $f_2\neq0$,
    $1+a_2^*c_2+b_2^*d_2=0$,
    $1+a_2^*f_2+b_2^*g_2\neq0$, and
    $1+c_2^*f_2+d_2^*g_2+e_2^*h_2\neq0$.
    
    The constructions on the third and fourth subsystems follow the same procedure as that on the second subsystem. On the third subsystem, states $1$ and $5$ are orthogonal, while states $2$ and $4$ are orthogonal. Hence, the corresponding third subsystems can be written as shown in Eq.~\eqref{eq:case825-third-subsystem}:
    \begin{equation}
    \nonumber
    	\begin{aligned}
    		\ket{\gamma_1}_3&=\ket{0},\\
    		\ket{\gamma_2}_3
    		&=
    		\frac{1}
    		{\sqrt{1+|a_3|^2+|b_3|^2}}
    		\times
    		\bigl(
    		\ket{0}+a_3\ket{1}+b_3\ket{2}
    		\bigr),\\
    \end{aligned}
    	\end{equation}
       \begin{equation}
    		\label{eq:case825-third-subsystem}
    		\begin{aligned}
      		\ket{\gamma_3}_3
    		&=
    		\frac{1}
    		{\sqrt{1+|f_3|^2+|g_3|^2+|h_3|^2+|k_3|^2}}
    		\\[-0.5ex]
    		&\quad\times
    		\bigl(
    		\ket{0}+f_3\ket{1}+g_3\ket{2}
    		\\[-0.5ex]
    		&\qquad
    		+h_3\ket{3}+k_3\ket{4}
    		\bigr),\\
     		\ket{\gamma_4}_3
    		&=
    		\frac{1}
    		{\sqrt{1+|c_3|^2+|d_3|^2+|e_3|^2}}
    		\\[-0.5ex]
    		&\quad\times
    		\bigl(
    		\ket{0}+c_3\ket{1}+d_3\ket{2}
    		+e_3\ket{3}
    		\bigr),\\
    		\ket{\gamma_5}_3&=\ket{1},
    	\end{aligned}
    \end{equation}
    where $a_3\neq0$, $c_3\neq0$, $f_3\neq0$,
    $1+a_3^*c_3+b_3^*d_3=0$,
    $1+a_3^*f_3+b_3^*g_3\neq0$, and
    $1+c_3^*f_3+d_3^*g_3+e_3^*h_3\neq0$.
    
    On the fourth subsystem, states $2$ and $5$ are orthogonal, and states $3$ and $4$ are orthogonal. The general forms of their fourth subsystems are given in Eq.~\eqref{eq:case825-fourth-subsystem}:
    \begin{equation}
    	\label{eq:case825-fourth-subsystem}
    	\begin{aligned}
    		\ket{\delta_1}_4
    		&=
    		\frac{1}
    		{\sqrt{1+|f_4|^2+|g_4|^2+|h_4|^2+|k_4|^2}}
    		\\[-0.5ex]
    		&\quad\times
    		\bigl(
    		\ket{0}+f_4\ket{1}+g_4\ket{2}
    		\\[-0.5ex]
    		&\qquad
    		+h_4\ket{3}+k_4\ket{4}
    		\bigr),\\
    		\ket{\delta_2}_4&=\ket{0},\\
    		\ket{\delta_3}_4
    		&=
    		\frac{1}
    		{\sqrt{1+|a_4|^2+|b_4|^2}}
    		\\[-0.5ex]
    		&\quad\times
    		\bigl(
    		\ket{0}+a_4\ket{1}+b_4\ket{2}
    		\bigr),\\
    		\ket{\delta_4}_4
    		&=
    		\frac{1}
    		{\sqrt{1+|c_4|^2+|d_4|^2+|e_4|^2}}
    		\\[-0.5ex]
    		&\quad\times
    		\bigl(
    		\ket{0}+c_4\ket{1}+d_4\ket{2}
    		+e_4\ket{3}
    		\bigr),\\
    		\ket{\delta_5}_4&=\ket{1},
    	\end{aligned}
    \end{equation}
    where $a_4\neq0$, $c_4\neq0$, $f_4\neq0$, $1+a_4^*c_4+b_4^*d_4=0$, $1+a_4^*f_4+b_4^*g_4\neq0$, and $1+c_4^*f_4+d_4^*g_4+e_4^*h_4\neq0$.

    \begin{theorem}
    	\label{thm:case825-indistinguishable}
    	Five quadripartite OPSs corresponding to case $(8\text{-}25)$, with the general forms given in Eqs.~\eqref{eq:case825-first-subsystem}--\eqref{eq:case825-fourth-subsystem}, cannot be perfectly distinguished by LOCC when $c_1=f_1=g_1=0$, $b_s=d_s=e_s=g_s=h_s=k_s=0$ for $s=2,3,4$, and $\operatorname{Im}(a_1^*d_1)\neq0$, where $\operatorname{Im}(z)$ denotes the imaginary part of a complex number $z$.
    \end{theorem}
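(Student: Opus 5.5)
The plan is the standard ``no party can go first'' argument. If the five states could be perfectly distinguished by LOCC, then some party would have to perform a first measurement that is both nontrivial and orthogonality-preserving. Nontrivial means that not every POVM element $E=M^\dagger M$, restricted to the local support of that party, is proportional to the identity. Orthogonality-preserving means that the postmeasurement states stay mutually orthogonal for every outcome. Any protocol starting with trivial measurements leaves the set unchanged, so it is enough to show that on each of the four subsystems every orthogonality-preserving POVM element is proportional to the identity on the span of the local states.

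First I substitute the parameter choices of the statement. With $c_1=f_1=g_1=0$ and the listed $b_s,d_s,e_s,g_s,h_s,k_s$ vanishing, the states on subsystem 1 are
\[
\ket{\alpha_1}=\ket{0},\quad \ket{\alpha_2}=\ket{1},\quad \ket{\alpha_3}=\ket{2},\quad \ket{\alpha_4}\propto\ket{0}+a_1\ket{1}+b_1\ket{2},\quad \ket{\alpha_5}\propto\ket{0}+d_1\ket{1}+e_1\ket{2},
\]
with $1+a_1^*d_1+b_1^*e_1=0$. These span a $3$-dimensional space. On subsystem 2 all local states lie in $\mathrm{span}\{\ket{0},\ket{1}\}$:
\[
\ket{\beta_1}=\ket{0},\quad \ket{\beta_4}=\ket{1},\quad \ket{\beta_3}\propto\ket{0}+a_2\ket{1},\quad \ket{\beta_5}\propto\ket{0}+c_2\ket{1},\quad \ket{\beta_2}\propto\ket{0}+f_2\ket{1},
\]
with $1+a_2^*c_2=0$. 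Subsystems 3 and 4 have the same structure, with $1+a_s^*c_s=0$.

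The key observation concerns a measurement by party $k$ with element $E$. For a pair of states orthogonal on some other subsystem, orthogonality is automatically preserved. The only constraints therefore come from pairs whose unique orthogonality edge has color $k$, and they read $\bra{\phi_{k,i}}E\ket{\phi_{k,j}}=0$. I will use this together with the Hermiticity of $E$.

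For party 2, the edges of color 2 are $e_{14}$ and $e_{35}$.
\begin{itemize}
\item The edge $e_{14}$ gives $\bra{0}E\ket{1}=0$, so $E$ restricted to the span is $\mathrm{diag}(x,y)$ with $x,y$ real.
\item The edge $e_{35}$ then gives $x+a_2^*c_2\,y=0$. Since $a_2^*c_2=-1$, this forces $x=y$.
\end{itemize}
The identical argument works for party 3 (edges $e_{15},e_{24}$) and party 4 (edges $e_{25},e_{34}$). In each case the pair whose two states are $\ket{0}$ and $\ket{1}$ diagonalizes $E$, and the other pair forces equal eigenvalues. The state orderings differ between subsystems, so these should be checked explicitly.

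Party 1 is the step I expect to need the hypothesis $\operatorname{Im}(a_1^*d_1)\neq0$.
\begin{itemize}
\item The edges $e_{12},e_{13},e_{23}$ force $E=\mathrm{diag}(x,y,z)$ with $x,y,z$ real.
\item The edge $e_{45}$ gives $x+a_1^*d_1\,y+b_1^*e_1\,z=0$.
\item Substituting $b_1^*e_1=-1-a_1^*d_1$ yields $a_1^*d_1(y-x)-(1+a_1^*d_1)(z-x)=0$.
\item Taking imaginary parts gives $\operatorname{Im}(a_1^*d_1)(y-z)=0$, hence $y=z$.
\item The remaining real equation then becomes $-(z-x)=0$, so $x=y=z$.
\end{itemize}
Without the imaginary-part condition, real solutions with $y\neq z$ could exist, and that is exactly where nontrivial measurements would appear.

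Having shown that every party can only perform trivial orthogonality-preserving measurements, I conclude that no LOCC protocol can begin nontrivially. The set therefore cannot be perfectly distinguished by LOCC.
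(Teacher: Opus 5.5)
Your proposal is correct and follows the same route as the paper's proof. You restrict each party's POVM element to its local support, use the pairs whose local states are computational basis vectors to eliminate the off-diagonal entries, and then use the remaining orthogonality relation to force equal diagonal entries: $1+a_s^*c_s=0$ for parties 2--4, and $1+a_1^*d_1+b_1^*e_1=0$ together with $\operatorname{Im}(a_1^*d_1)\neq0$ for party 1. Your rearrangement for party 1 differs only cosmetically from the paper's subtraction of $m_{22}^{(j)}(1+a_1^*d_1+b_1^*e_1)$, and it reaches the same conclusion: no party can make a nontrivial orthogonality-preserving first measurement.
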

    
    \begin{proof}
    	When $c_1=f_1=g_1=0$ and $b_s=d_s=e_s=g_s=h_s=k_s=0$ for $s=2,3,4$, the five states associated with case $(8\text{-}25)$ reduce to the forms shown in Eq.~\eqref{eq:case825-reduced-states}:
    	\begin{equation}
    		\nonumber
    		\begin{aligned}
    			\ket{\psi_1}
    			&=
    			\frac{1}{\sqrt{1+|f_4|^2}}
    			\ket{0}_1\ket{0}_2\ket{0}_3
    			\bigl(\ket{0}+f_4\ket{1}\bigr)_4,
    			\\
    	    	\ket{\psi_2}
    			&=
    			\frac{1}
    			{\sqrt{1+|f_2|^2}\sqrt{1+|a_3|^2}}
    			\ket{1}_1
    			\bigl(\ket{0}+f_2\ket{1}\bigr)_2
    			\\[-0.5ex]
    			&\quad\times
    			\bigl(\ket{0}+a_3\ket{1}\bigr)_3
    			\ket{0}_4,
    			\\
       			\ket{\psi_3}
    			&=
    			\frac{1}
    			{\sqrt{1+|a_2|^2}
    				\sqrt{1+|f_3|^2}
    				\sqrt{1+|a_4|^2}}
    			\ket{2}_1
    			\\[-0.5ex]
    			&\quad\times
    			\bigl(\ket{0}+a_2\ket{1}\bigr)_2
    			\bigl(\ket{0}+f_3\ket{1}\bigr)_3
    			\bigl(\ket{0}+a_4\ket{1}\bigr)_4,
    			\\
    \end{aligned}
    	\end{equation}
       \begin{equation}
    		\label{eq:case825-reduced-states}
    		\begin{aligned}
    			\ket{\psi_4}
    			&=
    			\frac{1}
    			{\sqrt{1+|a_1|^2+|b_1|^2}
    				\sqrt{1+|c_3|^2}
    				\sqrt{1+|c_4|^2}}
    			\\[-0.5ex]
    			&\quad\times
    			\bigl(\ket{0}+a_1\ket{1}+b_1\ket{2}\bigr)_1
    			\ket{1}_2
    			\bigl(\ket{0}+c_3\ket{1}\bigr)_3
    			\\[-0.5ex]
    			&\quad\times
    			\bigl(\ket{0}+c_4\ket{1}\bigr)_4,
    			\\
    			\ket{\psi_5}
    			&=
    			\frac{1}
    			{\sqrt{1+|d_1|^2+|e_1|^2}
    				\sqrt{1+|c_2|^2}}
    			\\[-0.5ex]
    			&\quad\times
    			\bigl(\ket{0}+d_1\ket{1}+e_1\ket{2}\bigr)_1
    			\bigl(\ket{0}+c_2\ket{1}\bigr)_2
    			\ket{1}_3\ket{1}_4 ,
    		\end{aligned}
    	\end{equation}
    	where $1+a_1^*d_1+b_1^*e_1=0$ and $1+a_s^*c_s=0$ for $s=2,3,4$. Under these conditions, the local ranks are $r_1=3$ and $r_2=r_3=r_4=2$.
    	
    	Suppose that the first party, say Alice, first performs an orthogonality-preserving measurement with POVM elements $M_j^\dagger M_j$, where
    	\[
    	M_j^\dagger M_j=
    	\begin{pmatrix}
    		m_{00}^{(j)} & m_{01}^{(j)} & m_{02}^{(j)}\\
    		m_{10}^{(j)} & m_{11}^{(j)} & m_{12}^{(j)}\\
    		m_{20}^{(j)} & m_{21}^{(j)} & m_{22}^{(j)}
    	\end{pmatrix}.
    	\]
    	Here the matrix is written under the basis $\{|0\rangle,|1\rangle,|2\rangle\}$ of the local support, and $j=1,2,\ldots,l$. To preserve the orthogonality of the postmeasurement states, every pair that is orthogonal only on Alice's subsystem must remain orthogonal after the measurement.
    	
    	For states $|\psi_1\rangle$ and $|\psi_2\rangle$, orthogonality preservation gives $\langle\psi_1|(M_j^\dagger M_j\otimes I_{234})|\psi_2\rangle=0$ and $\langle\psi_2|(M_j^\dagger M_j\otimes I_{234})|\psi_1\rangle=0$. Thus, $m_{01}^{(j)}=m_{10}^{(j)}=0$.
    	
    	For states $|\psi_1\rangle$ and $|\psi_3\rangle$, the same requirement gives $m_{02}^{(j)}=m_{20}^{(j)}=0$. For states $|\psi_2\rangle$ and $|\psi_3\rangle$, we similarly obtain $m_{12}^{(j)}=m_{21}^{(j)}=0$.
    	
    	It remains to consider states $|\psi_4\rangle$ and $|\psi_5\rangle$. Since their first subsystems must remain orthogonal after Alice's measurement, we have $m_{00}^{(j)}+a_1^*d_1m_{11}^{(j)}+b_1^*e_1m_{22}^{(j)}=0$. Their original orthogonality relation is $1+a_1^*d_1+b_1^*e_1=0$. Subtracting $m_{22}^{(j)}(1+a_1^*d_1+b_1^*e_1)=0$ from the preceding equality gives $(m_{00}^{(j)}-m_{22}^{(j)})+a_1^*d_1(m_{11}^{(j)}-m_{22}^{(j)})=0$.
    	
    	Since $m_{00}^{(j)}$, $m_{11}^{(j)}$, and $m_{22}^{(j)}$ are real and $\operatorname{Im}(a_1^*d_1)\neq 0$, the imaginary part of this equality yields $m_{11}^{(j)}=m_{22}^{(j)}$. Substituting this relation back gives $m_{00}^{(j)}=m_{22}^{(j)}$. Consequently, every positive operator-valued measure (POVM) element performed by Alice must have the form
    	\[
    	M_j^\dagger M_j=
    	\begin{pmatrix}
    		m_{00}^{(j)} & 0 & 0\\
    		0 & m_{00}^{(j)} & 0\\
    		0 & 0 & m_{00}^{(j)}
    	\end{pmatrix}.
    	\]
    	
    	Therefore, if Alice obtains measurement outcome $j$, its probability is $p_A(j\mid k)=\langle\psi_k|(M_j^\dagger M_j\otimes I_{234})|\psi_k\rangle=m_{00}^{(j)}$ for every $k\in\{1,2,3,4,5\}$. Thus, Alice obtains the same outcome distribution for all five possible input states and cannot acquire any useful information.

    	Suppose that the second party, say Bob, performs the first orthogonality-preserving measurement. Since the local rank on the second subsystem is $r_2=2$, Bob's POVM element on the local support can be written as
    	\[
    	N_j^\dagger N_j=
    	\begin{pmatrix}
    		n_{00}^{(j)} & n_{01}^{(j)}\\
    		n_{10}^{(j)} & n_{11}^{(j)}
    	\end{pmatrix}.
    	\]Here the matrix is written in the basis $\{|0\rangle,|1\rangle\}$ of Bob's local support.
    	
    	Since states 1 and 4 are orthogonal on the second subsystem, we obtain $n_{01}^{(j)}=n_{10}^{(j)}=0$. The orthogonality of states 3 and 5 after the measurement gives $n_{00}^{(j)}+a_2^*c_2n_{11}^{(j)}=0$. Since $1+a_2^*c_2=0$, it follows that $n_{00}^{(j)}=n_{11}^{(j)}$. Hence, Bob's POVM element is proportional to the identity operator on the local support and cannot reveal any information about the measured state.
    	
    	For the third party, the orthogonality of states 1 and 5 on the third subsystem requires the off-diagonal entries of each orthogonality-preserving POVM element to vanish. For states 2 and 4, preservation of their orthogonality gives a relation between the two diagonal entries. Together with $1+a_3^*c_3=0$, this relation implies that the two diagonal entries are equal. Hence, every such POVM element is proportional to the identity operator on the relevant local subspace, and the third party cannot obtain any information for distinguishing the five states.
    	
    	For the fourth party, the orthogonality of states 2 and 5 on the fourth subsystem similarly implies that the off-diagonal entries vanish. The orthogonality-preserving condition for states 3 and 4, together with $1+a_4^*c_4=0$, further gives equality of the two diagonal entries. Therefore, every orthogonality-preserving POVM element of the fourth party is proportional to the identity operator on the relevant local subspace and provides no information about the state being measured.
    	
    	Consequently, none of the four parties can obtain useful information from an orthogonality-preserving first local measurement. Hence, no nontrivial first measurement can be performed while preserving the orthogonality of the five states, and the five OPSs corresponding to case (8-25) cannot be perfectly distinguished by LOCC.
    \end{proof}
    
    \subsection*{2. A general form for case $(10\text{-}27)$}
    
    Consider five quadripartite OPSs corresponding to case $(10\text{-}27)$, which belongs to exceptional case $2$ in Theorem~\ref{thm:main}. We assume that the five OPSs are denoted by $\ket{\psi_1}$, $\ket{\psi_2}$, $\ket{\psi_3}$, $\ket{\psi_4}$, and $\ket{\psi_5}$, and $\ket{\psi_i} $ can be written as $\ket{\alpha_i}_1\ket{\beta_i}_2 \ket{\gamma_i}_3\ket{\delta_i}_4$, where $i=1,\ldots,5$.
    
    On the first subsystem, state $1$ is orthogonal to states $2$ and $3$, while states $4$ and $5$ are orthogonal to each other. Suppose that the first subsystems of states $1$ and $2$ are $\ket{0}$ and $\ket{1}$, respectively.
    
    Since state $3$ is orthogonal to state $1$ but not to state $2$, its first subsystem can be expressed as $\frac{1}{\sqrt{1+|a_1|^2}} (\ket{1}+a_1\ket{2})$.
    
    State $4$ is not orthogonal to states $1$, $2$, or $3$ on the first subsystem. Hence, its first subsystem can be written as $\frac{1}{\sqrt{1+|b_1|^2+|c_1|^2+|d_1|^2}} (\ket{0}+b_1\ket{1}+c_1\ket{2}+d_1\ket{3})$, where $b_1\neq0$ and $b_1+a_1^*c_1\neq0$.
    
    Similarly, state $5$ is not orthogonal to states $1$, $2$, or $3$ on the first subsystem. Therefore, its first subsystem can be represented as $\frac{1}{\sqrt{1+|e_1|^2+|f_1|^2+|g_1|^2+|h_1|^2}}
    (\ket{0}+e_1\ket{1}+f_1\ket{2}+g_1\ket{3}+h_1\ket{4})$, where $e_1\neq0$ and $e_1+a_1^*f_1\neq0$.
    
    Furthermore, we have $1+b_1^{*}e_1+c_1^{*}f_1+d_1^{*}g_1=0$ since states $4$ and $5$ are orthogonal on the first subsystem. Consequently, the general forms of these first subsystems are given in Eq.~\ref{eq:case1027-first-subsystem}:
    \begin{equation}
    	\label{eq:case1027-first-subsystem}
    	\begin{aligned}
    		\ket{\alpha_1}_1&=\ket{0},\\
    		\ket{\alpha_2}_1&=\ket{1},\\
    		\ket{\alpha_3}_1
    		&=
    		\frac{1}{\sqrt{1+|a_1|^2}}
    		\times
    		\bigl(\ket{1}+a_1\ket{2}\bigr),\\
    		\ket{\alpha_4}_1
    		&=
    		\frac{1}
    		{\sqrt{1+|b_1|^2+|c_1|^2+|d_1|^2}}
    		\\[-0.5ex]
    		&\quad\times
    		\bigl(
    		\ket{0}+b_1\ket{1}+c_1\ket{2}+d_1\ket{3}
    		\bigr),\\
    		\ket{\alpha_5}_1
    		&=
    		\frac{1}
    		{\sqrt{1+|e_1|^2+|f_1|^2+|g_1|^2+|h_1|^2}}
    		\\[-0.5ex]
    		&\quad\times
    		\bigl(
    		\ket{0}+e_1\ket{1}+f_1\ket{2}
    		+g_1\ket{3}+h_1\ket{4}
    		\bigr),
    	\end{aligned}
    \end{equation}
    where $b_1\neq0$, $e_1\neq0$, $b_1+a_1^*c_1\neq0$, $e_1+a_1^*f_1\neq0$, and $1+b_1^*e_1+c_1^*f_1+d_1^*g_1=0$.
    
    On the second subsystem, state $1$ is orthogonal to states $4$ and $5$, and states $2$ and $3$ are orthogonal to each other. Suppose that the second subsystems of states $1$ and $4$ are $\ket{0}$ and $\ket{1}$, respectively.
    
    Since state $5$ is orthogonal to state $1$ but not to state $4$, its second subsystem can be expressed as $\frac{1}{\sqrt{1+|a_2|^2}} (\ket{1}+a_2\ket{2})$.
    
    State $2$ is not orthogonal to states $1$, $4$, or $5$ on the second subsystem. Thus, its second subsystem can be written as $\frac{1}{\sqrt{1+|b_2|^2+|c_2|^2+|d_2|^2}}
    (\ket{0}+b_2\ket{1}+c_2\ket{2}+d_2\ket{3})$, where $b_2\neq0$ and $b_2+a_2^*c_2\neq0$.
    
    Likewise, the second subsystem of state $3$ can be represented as $\frac{1}{\sqrt{1+|e_2|^2+|f_2|^2+|g_2|^2+|h_2|^2}} (\ket{0}+e_2\ket{1}+f_2\ket{2}+g_2\ket{3}+h_2\ket{4})$, where $e_2\neq0$ and $e_2+a_2^*f_2\neq0$.
    
    Since states $2$ and $3$ are orthogonal on this subsystem, $\braket{\beta_2|\beta_3}=0$ implies
    $1+b_2^*e_2+c_2^*f_2+d_2^*g_2=0$. Therefore, the general forms of the second subsystems are given in Eq.~\eqref{eq:case1027-second-subsystem}:
    \begin{equation}
    	\label{eq:case1027-second-subsystem}
    	\begin{aligned}
    		\ket{\beta_1}_2&=\ket{0},\\
    		\ket{\beta_2}_2
    		&=
    		\frac{1}
    		{\sqrt{1+|b_2|^2+|c_2|^2+|d_2|^2}}
    		\\[-0.5ex]
    		&\quad\times
    		\bigl(
    		\ket{0}+b_2\ket{1}+c_2\ket{2}+d_2\ket{3}
    		\bigr),\\
    		\ket{\beta_3}_2
    		&=
    		\frac{1}
    		{\sqrt{1+|e_2|^2+|f_2|^2+|g_2|^2+|h_2|^2}}
    		\\[-0.5ex]
    		&\quad\times
    		\bigl(
    		\ket{0}+e_2\ket{1}+f_2\ket{2}
    		+g_2\ket{3}+h_2\ket{4}
    		\bigr),\\
    		\ket{\beta_4}_2&=\ket{1},\\
    		\ket{\beta_5}_2
    		&=
    		\frac{1}{\sqrt{1+|a_2|^2}}
    		\times
    		\bigl(\ket{1}+a_2\ket{2}\bigr),
    	\end{aligned}
    \end{equation}
    where $b_2\neq0$, $e_2\neq0$, $b_2+a_2^*c_2\neq0$, $e_2+a_2^*f_2\neq0$, and $1+b_2^*e_2+c_2^*f_2+d_2^*g_2=0$.
    
    The constructions on the third and fourth subsystems follow the same procedure. On the third subsystem, states $2$ and $4$ are orthogonal, and states $3$ and $5$ are orthogonal. Hence, the corresponding subsystems are given in Eq.~\eqref{eq:case1027-third-subsystem}:
    \begin{equation}
    	\label{eq:case1027-third-subsystem}
    	\begin{aligned}
    		\ket{\gamma_1}_3
    		&=
    		\frac{1}
    		{\sqrt{1+|p_3|^2+|q_3|^2+|r_3|^2+|s_3|^2}}
    		\\[-0.5ex]
    		&\quad\times
    		\bigl(
    		\ket{0}+p_3\ket{1}+q_3\ket{2}
    		+r_3\ket{3}+s_3\ket{4}
    		\bigr),\\
    		\ket{\gamma_2}_3&=\ket{0},\\
    		\ket{\gamma_3}_3
    		&=
    		\frac{1}
    		{\sqrt{1+|a_3|^2+|b_3|^2}}
    		\\[-0.5ex]
    		&\quad\times
    		\bigl(
    		\ket{0}+a_3\ket{1}+b_3\ket{2}
    		\bigr),\\
    		\ket{\gamma_4}_3&=\ket{1},\\
    		\ket{\gamma_5}_3
    		&=
    		\frac{1}
    		{\sqrt{1+|c_3|^2+|d_3|^2+|e_3|^2}}
    		\\[-0.5ex]
    		&\quad\times
    		\bigl(
    		\ket{0}+c_3\ket{1}+d_3\ket{2}
    		+e_3\ket{3}
    		\bigr),
    	\end{aligned}
    \end{equation}
    where $a_3\neq0$, $c_3\neq0$, $p_3\neq0$,
    $1+a_3^*c_3+b_3^*d_3=0$,
    $1+a_3^*p_3+b_3^*q_3\neq0$, and
    $1+c_3^*p_3+d_3^*q_3+e_3^*r_3\neq0$.
    
    On the fourth subsystem, states $2$ and $5$ are orthogonal, and states $3$ and $4$ are orthogonal. Thus, the general forms of the fourth subsystems are given in Eq.~\eqref{eq:case1027-fourth-subsystem}:
    \begin{equation}
    	\label{eq:case1027-fourth-subsystem}
    	\begin{aligned}
    		\ket{\delta_1}_4
    		&=
    		\frac{1}
    		{\sqrt{1+|p_4|^2+|q_4|^2+|r_4|^2+|s_4|^2}}
    		\\[-0.5ex]
    		&\quad\times
    		\bigl(
    		\ket{0}+p_4\ket{1}+q_4\ket{2}
    		+r_4\ket{3}+s_4\ket{4}
    		\bigr),\\
    		\ket{\delta_2}_4&=\ket{0},\\
    		\ket{\delta_3}_4
    		&=
    		\frac{1}
    		{\sqrt{1+|a_4|^2+|b_4|^2}}
    		\\[-0.5ex]
    		&\quad\times
    		\bigl(
    		\ket{0}+a_4\ket{1}+b_4\ket{2}
    		\bigr),\\
    		\ket{\delta_4}_4
    		&=
    		\frac{1}
    		{\sqrt{1+|c_4|^2+|d_4|^2+|e_4|^2}}
    		\\[-0.5ex]
    		&\quad\times
    		\bigl(
    		\ket{0}+c_4\ket{1}+d_4\ket{2}
    		+e_4\ket{3}
    		\bigr),\\
    		\ket{\delta_5}_4&=\ket{1},
    	\end{aligned}
    \end{equation}
    where $a_4\neq0$, $c_4\neq0$, $p_4\neq0$,
    $1+a_4^*c_4+b_4^*d_4=0$,
    $1+a_4^*p_4+b_4^*q_4\neq0$, and
    $1+c_4^*p_4+d_4^*q_4+e_4^*r_4\neq0$.

    \begin{theorem}
    	\label{thm:case1027-indistinguishable}
    	Five quadripartite OPSs corresponding to case $(10\text{-}27)$, with the general forms given in Eqs.~\eqref{eq:case1027-first-subsystem}--%
    	\eqref{eq:case1027-fourth-subsystem}, cannot be perfectly distinguished by LOCC when
    	$a_u=c_u=d_u=f_u=g_u=h_u=0$ for $u=1,2$, and $b_v=d_v=e_v=q_v=r_v=s_v=0$ for $v=3,4$.
    \end{theorem}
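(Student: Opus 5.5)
The plan is to follow the same strategy as in the proof of Theorem~\ref{thm:case825-indistinguishable}. First I substitute the stated parameter choices into the general forms, Eqs.~\eqref{eq:case1027-first-subsystem}--\eqref{eq:case1027-fourth-subsystem}. Then I show that each of the four parties is forced, by orthogonality preservation, to measure with POVM elements proportional to the identity on its local support.

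\textbf{Reduced states.} Setting $a_u=c_u=d_u=f_u=g_u=h_u=0$ for $u=1,2$ and $b_v=d_v=e_v=q_v=r_v=s_v=0$ for $v=3,4$ gives, up to normalization:
\[
\begin{aligned}
&\ket{\alpha_1}=\ket{0},\ \ket{\alpha_2}=\ket{\alpha_3}=\ket{1},\ \ket{\alpha_4}\propto\ket{0}+b_1\ket{1},\ \ket{\alpha_5}\propto\ket{0}+e_1\ket{1};\\
&\ket{\beta_1}=\ket{0},\ \ket{\beta_4}=\ket{\beta_5}=\ket{1},\ \ket{\beta_2}\propto\ket{0}+b_2\ket{1},\ \ket{\beta_3}\propto\ket{0}+e_2\ket{1};\\
&\ket{\gamma_1}\propto\ket{0}+p_3\ket{1},\ \ket{\gamma_2}=\ket{0},\ \ket{\gamma_3}\propto\ket{0}+a_3\ket{1},\ \ket{\gamma_4}=\ket{1},\ \ket{\gamma_5}\propto\ket{0}+c_3\ket{1};\\
&\ket{\delta_1}\propto\ket{0}+p_4\ket{1},\ \ket{\delta_2}=\ket{0},\ \ket{\delta_3}\propto\ket{0}+a_4\ket{1},\ \ket{\delta_4}\propto\ket{0}+c_4\ket{1},\ \ket{\delta_5}=\ket{1}.
\end{aligned}
\]
These satisfy $1+b_1^*e_1=0$, $1+b_2^*e_2=0$, $1+a_3^*c_3=0$ and $1+a_4^*c_4=0$. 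I will first check that the nonorthogonality side conditions of the general forms remain satisfiable, so the reduced set is a legitimate instance of case $(10\text{-}27)$. For example, $b_1\neq0$ and $b_1+a_1^*c_1=b_1\neq0$. Consequently all local ranks equal $2$.

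\textbf{Each party's POVM is trivial on its support.} Take any orthogonality-preserving POVM element of the $i$-th party and write its compression to the span of $\{\ket{0},\ket{1}\}$ as a $2\times 2$ matrix $(m_{kl})$. Only this compression affects outcome probabilities and the relevant inner products.
\begin{enumerate}[label=(\roman*)]
\item First subsystem. The pair $(1,2)$, orthogonal only there as $\ket{0},\ket{1}$, forces $m_{01}=m_{10}=0$. The pair $(4,5)$ then gives $m_{00}+b_1^*e_1m_{11}=0$, and since $b_1^*e_1=-1$ this yields $m_{00}=m_{11}$.
\item Second subsystem. The pair $(1,4)$ kills the off-diagonal entries, and the pair $(2,3)$ together with $1+b_2^*e_2=0$ forces equal diagonal entries.
\item Third subsystem. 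The pairs $(2,4)$ and $(3,5)$ do the same, using $1+a_3^*c_3=0$.
\item Fourth subsystem. The pairs $(2,5)$ and $(3,4)$ do the same, using $1+a_4^*c_4=0$.
\end{enumerate}
Unlike case $(8\text{-}25)$, no imaginary-part condition is needed. The support is two-dimensional, and the single nontrivial diagonal relation already has real coefficient $-1$.

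\textbf{Conclusion.} Every orthogonality-preserving first measurement by any party is therefore proportional to the identity on that party's support. Outcome probabilities are then independent of which state was prepared, and the post-measurement ensemble is unchanged. A perfect LOCC discrimination protocol must start with some nontrivial orthogonality-preserving local measurement, so none exists, and the set cannot be perfectly distinguished by LOCC.

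The main obstacle is this last step: making precise that a measurement whose POVM elements are proportional to the identity on the support leaves the problem unchanged, so the argument does not merely rule out one round. I would handle it exactly as in Theorem~\ref{thm:case825-indistinguishable}. After such a measurement the conditional states equal the original ones up to a local operation that acts unitarily on the support. The same analysis therefore applies at every subsequent round, and no round ever gains information.
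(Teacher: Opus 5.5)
Your proposal is correct and follows essentially the same route as the paper. It reduces to two-dimensional local supports and forces every orthogonality-preserving POVM element to be proportional to the identity on the support, using the pairs $(1,2)$ and $(4,5)$ for the first party, $(1,4)$ and $(2,3)$ for the second, $(2,4)$ and $(3,5)$ for the third, and $(2,5)$ and $(3,4)$ for the fourth. You need not cite the case $(8\text{-}25)$ proof for the multi-round step, since that proof (like this one) stops once the first round is trivial; your own remark, that each outcome acts as a local isometry on the support and so leaves all inner products unchanged, supplies the step the paper leaves implicit.
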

    
    \begin{proof}
    	When $a_u=c_u=d_u=f_u=g_u=h_u=0$ for $u=1,2$, and $b_v=d_v=e_v=q_v=r_v=s_v=0$ for $v=3,4$, the five states associated with case $(10\text{-}27)$ reduce to the forms shown in Eq.~\eqref{eq:case1027-reduced-states}:
    	\begin{equation}
    		\label{eq:case1027-reduced-states}
    		\begin{aligned}
    			\ket{\psi_1}
    			&=
    			\frac{1}
    			{\sqrt{1+|p_3|^2}\sqrt{1+|p_4|^2}}
    			\ket{0}_1\ket{0}_2
    			\\[-0.5ex]
    			&\quad\times
    			\bigl(\ket{0}+p_3\ket{1}\bigr)_3
    			\bigl(\ket{0}+p_4\ket{1}\bigr)_4,
    			\\
    			\ket{\psi_2}
    			&=
    			\frac{1}{\sqrt{1+|b_2|^2}}
    			\ket{1}_1
    			\bigl(\ket{0}+b_2\ket{1}\bigr)_2
    			\ket{0}_3\ket{0}_4,
    			\\
    			\ket{\psi_3}
    			&=
    			\frac{1}
    			{\sqrt{1+|e_2|^2}
    				\sqrt{1+|a_3|^2}
    				\sqrt{1+|a_4|^2}}
    			\ket{1}_1
    			\\[-0.5ex]
    			&\quad\times
    			\bigl(\ket{0}+e_2\ket{1}\bigr)_2
    			\bigl(\ket{0}+a_3\ket{1}\bigr)_3
    			\bigl(\ket{0}+a_4\ket{1}\bigr)_4,
    			\\
    			\ket{\psi_4}
    			&=
    			\frac{1}
    			{\sqrt{1+|b_1|^2}\sqrt{1+|c_4|^2}}
    			\bigl(\ket{0}+b_1\ket{1}\bigr)_1
    			\ket{1}_2\ket{1}_3
    			\\[-0.5ex]
    			&\quad\times
    			\bigl(\ket{0}+c_4\ket{1}\bigr)_4,
    			\\
    			\ket{\psi_5}
    			&=
    			\frac{1}
    			{\sqrt{1+|e_1|^2}\sqrt{1+|c_3|^2}}
    			\bigl(\ket{0}+e_1\ket{1}\bigr)_1
    			\ket{1}_2
    			\\[-0.5ex]
    			&\quad\times
    			\bigl(\ket{0}+c_3\ket{1}\bigr)_3
    			\ket{1}_4 ,
    		\end{aligned}
    	\end{equation}
    	where $1+b_1^*e_1=0$, $1+b_2^*e_2=0$, and $1+a_v^*c_v=0$ for $v=3,4$. Under these parameter conditions, the local ranks satisfy $r_1=r_2=r_3=r_4=2$.

    	Suppose that the first party, say Alice, first performs an orthogonality-preserving measurement with POVM elements $M_j^\dagger M_j$, where
    	\[
    	M_j^\dagger M_j=
    	\begin{pmatrix}
    		m_{00}^{(j)} & m_{01}^{(j)}\\
    		m_{10}^{(j)} & m_{11}^{(j)}
    	\end{pmatrix}.
    	\]
    	Here the matrix is written under the basis $\{|0\rangle,|1\rangle\}$ of the local support, and $j=1,2,\ldots,l$. To preserve the orthogonality of the postmeasurement states, any two states whose orthogonality is provided by Alice's subsystem must remain orthogonal after Alice's measurement.
    	
    	For states $\ket{\psi_1}$ and $\ket{\psi_2}$, we have $\bra{\psi_1}(M_j^\dagger M_j\otimes I_{234})\ket{\psi_2}=0$ and $\bra{\psi_2}(M_j^\dagger M_j\otimes I_{234})\ket{\psi_1}=0$. Thus, $m_{01}^{(j)}=m_{10}^{(j)}=0$. The pair formed by states $1$ and $3$ gives the same conditions.
    	
    	For states $\ket{\psi_4}$ and $\ket{\psi_5}$, orthogonality preservation gives $m_{00}^{(j)}+b_1^*e_1m_{11}^{(j)}=0$. Since their original orthogonality relation is $1+b_1^*e_1=0$, we obtain $m_{00}^{(j)}=m_{11}^{(j)}$. Consequently, every POVM element performed by Alice must have the form
    	\[
    	M_j^\dagger M_j=
    	\begin{pmatrix}
    		m_{00}^{(j)} & 0\\
    		0 & m_{00}^{(j)}
    	\end{pmatrix}.
    	\]

    	If Alice obtains measurement outcome $j$, its probability is $p_A(j\mid k)= \bra{\psi_k}(M_j^\dagger M_j\otimes I_{234})\ket{\psi_k} =m_{00}^{(j)}$ for every $k\in\{1,2,3,4,5\}$. Hence, the probability of outcome $j$ is identical for all possible input states, and Alice cannot obtain useful information.

    	Suppose next that the second party, say Bob, performs the first orthogonality-preserving measurement. Bob's POVM element on the local support can be
    	written as
    	\[
    	N_j^\dagger N_j=
    	\begin{pmatrix}
    		n_{00}^{(j)} & n_{01}^{(j)}\\
    		n_{10}^{(j)} & n_{11}^{(j)}
    	\end{pmatrix}.
    	\]Here the matrix is written in the basis $\{|0\rangle,|1\rangle\}$ of Bob's local support.
    	
    	 Since states 1 and 4 are orthogonal on the second subsystem, orthogonality preservation gives $n_{01}^{(j)}=n_{10}^{(j)}=0$. The orthogonality between states $1$ and $5$ gives the same restriction. For states $2$ and $3$, preservation of their orthogonality gives $n_{00}^{(j)}+b_2^*e_2n_{11}^{(j)}=0$. Together with $1+b_2^*e_2=0$, this relation implies $n_{00}^{(j)}=n_{11}^{(j)}$. Hence, every such POVM element is proportional to the identity operator on the corresponding local subspace, and the second party cannot obtain any information for distinguishing the five states.
    	
    	For the third party, the orthogonality of states $2$ and $4$ forces the off-diagonal entries of each orthogonality-preserving POVM element to vanish, while that of states $3$ and $5$, together with $1+a_3^*c_3=0$, implies equality of the two diagonal entries. Hence, every such POVM element is proportional to the identity operator on the corresponding local subspace.
    	
    	Similarly, for the fourth party, the orthogonality relations between states $2$ and $5$ and between states $3$ and $4$, together with $1+a_4^*c_4=0$, imply that every orthogonality-preserving POVM element is proportional to the identity operator on the corresponding local subspace.
    	
    	Consequently, none of the four parties can obtain useful information from an orthogonality-preserving first local measurement. Hence, no nontrivial first local measurement can be performed while preserving the orthogonality of the five states. Therefore, the five quadripartite OPSs corresponding to case $(10\text{-}27)$ cannot be perfectly distinguished by LOCC.
    \end{proof}
    
    \subsection*{3. A general form for case $(10\text{-}43)$}
    
    Consider five quadripartite OPSs corresponding to case $(10\text{-}43)$, which belongs to exceptional case $3$ in Theorem~\ref{thm:main}. We assume that the five quadripartite OPSs are denoted as $\ket{\psi_1}$, $\ket{\psi_2}$, $\ket{\psi_3}$, $\ket{\psi_4}$, and $\ket{\psi_5}$, and $\ket{\psi_i}$ can be written as $\ket{\alpha_i}_1\ket{\beta_i}_2 \ket{\gamma_i}_3\ket{\delta_i}_4$, where $i=1,\ldots,5$.
    
    On the first subsystem, state $1$ is orthogonal to states $2$ and $3$, while states $4$ and $5$ are orthogonal to each other. Suppose that the first subsystems of states $1$ and $2$ are $\ket{0}$ and $\ket{1}$, respectively.
    
    Since state $3$ is orthogonal to state $1$ but is not orthogonal to state $2$, its first subsystem can be expressed as $\frac{1}{\sqrt{1+|a_1|^2}}(\ket{1}+a_1\ket{2})$.
    
    State $4$ is not orthogonal to states $1$, $2$, or $3$ on the first subsystem. Hence, its first subsystem can be written as $\frac{1}{\sqrt{1+|b_1|^2+|c_1|^2+|d_1|^2}}
    (\ket{0}+b_1\ket{1}+c_1\ket{2}+d_1\ket{3})$, where $b_1\neq0$ and $b_1+a_1^*c_1\neq0$.
    
    Similarly, state $5$ is not orthogonal to states $1$, $2$, or $3$ on this subsystem. Thus, its first subsystem can be represented as $\frac{1}{\sqrt{1+|e_1|^2+|f_1|^2+|g_1|^2+|h_1|^2}}
    (\ket{0}+e_1\ket{1}+f_1\ket{2}+g_1\ket{3}+h_1\ket{4})$, where $e_1\neq0$ and $e_1+a_1^*f_1\neq0$.
    
    Furthermore, we have $1+b_1^{*}e_1+c_1^{*}f_1+d_1^{*}g_1=0$ since states $4$ and $5$ are orthogonal on the first subsystem. Consequently, the general forms of these first subsystems are given in Eq.~(\ref{eq:case1043-first-subsystem}):
    \begin{equation}
    	\label{eq:case1043-first-subsystem}
    	\begin{aligned}
    		\ket{\alpha_1}_1&=\ket{0},\\
    		\ket{\alpha_2}_1&=\ket{1},\\
    		\ket{\alpha_3}_1
    		&=
    		\frac{1}{\sqrt{1+|a_1|^2}}
    		\times
    		\bigl(\ket{1}+a_1\ket{2}\bigr),\\
    		\ket{\alpha_4}_1
    		&=
    		\frac{1}
    		{\sqrt{1+|b_1|^2+|c_1|^2+|d_1|^2}}
    		\\[-0.5ex]
    		&\quad\times
    		\bigl(
    		\ket{0}+b_1\ket{1}+c_1\ket{2}+d_1\ket{3}
    		\bigr),\\
    		\ket{\alpha_5}_1
    		&=
    		\frac{1}
    		{\sqrt{1+|e_1|^2+|f_1|^2+|g_1|^2+|h_1|^2}}
    		\\[-0.5ex]
    		&\quad\times
    		\bigl(
    		\ket{0}+e_1\ket{1}+f_1\ket{2}
    		+g_1\ket{3}+h_1\ket{4}
    		\bigr),
    	\end{aligned}
    \end{equation}
    where $b_1\neq0$, $e_1\neq0$, $b_1+a_1^*c_1\neq0$, $e_1+a_1^*f_1\neq0$, and $1+b_1^*e_1+c_1^*f_1+d_1^*g_1=0$.
    
    On the second subsystem, states $3$ and $5$ are orthogonal, state $2$ is orthogonal to state $3$, and states $1$ and $4$ are orthogonal. Suppose that the second subsystems of states $3$ and $5$ are $\ket{0}$ and $\ket{1}$, respectively.
    
    Since state $2$ is orthogonal to state $3$ but is not orthogonal to state $5$, its second subsystem can be expressed as $\frac{1}{\sqrt{1+|a_2|^2}}(\ket{1}+a_2\ket{2})$.
    
    State $1$ is not orthogonal to states $2$, $3$, or $5$ on this subsystem. Hence, its second subsystem can be written as $\frac{1}{\sqrt{1+|b_2|^2+|c_2|^2+|d_2|^2}} (\ket{0}+b_2\ket{1}+c_2\ket{2}+d_2\ket{3})$, where $b_2\neq0$ and $b_2+a_2^*c_2\neq0$.
    
    Similarly, state $4$ is not orthogonal to states $2$, $3$, or $5$. Its second subsystem can therefore be represented as $\frac{1}{\sqrt{1+|e_2|^2+|f_2|^2+|g_2|^2+|h_2|^2}}
    (\ket{0}+e_2\ket{1}+f_2\ket{2}+g_2\ket{3}+h_2\ket{4})$, where $e_2\neq0$ and $e_2+a_2^*f_2\neq0$.
    
    Since states $1$ and $4$ are orthogonal on the second subsystem, $\braket{\beta_1|\beta_4}=0$ implies $1+b_2^*e_2+c_2^*f_2+d_2^*g_2=0$. Thus, the general forms of the second subsystems are shown in Eq.~\eqref{eq:case1043-second-subsystem}:
    \begin{equation}
    	\label{eq:case1043-second-subsystem}
    	\begin{aligned}
    		\ket{\beta_1}_2
    		&=
    		\frac{1}
    		{\sqrt{1+|b_2|^2+|c_2|^2+|d_2|^2}}
    		\\[-0.5ex]
    		&\quad\times
    		\bigl(
    		\ket{0}+b_2\ket{1}+c_2\ket{2}+d_2\ket{3}
    		\bigr),\\
    		\ket{\beta_2}_2
    		&=
    		\frac{1}{\sqrt{1+|a_2|^2}}
    		\times
    		\bigl(\ket{1}+a_2\ket{2}\bigr),\\
    		\ket{\beta_3}_2&=\ket{0},\\
    		\ket{\beta_4}_2
    		&=
    		\frac{1}
    		{\sqrt{1+|e_2|^2+|f_2|^2+|g_2|^2+|h_2|^2}}
    		\\[-0.5ex]
    		&\quad\times
    		\bigl(
    		\ket{0}+e_2\ket{1}+f_2\ket{2}
    		+g_2\ket{3}+h_2\ket{4}
    		\bigr),\\
    		\ket{\beta_5}_2&=\ket{1},
    	\end{aligned}
    \end{equation}
    where $b_2\neq0$, $e_2\neq0$, $b_2+a_2^*c_2\neq0$, $e_2+a_2^*f_2\neq0$, and $1+b_2^*e_2+c_2^*f_2+d_2^*g_2=0$.
    
    The constructions on the third and fourth subsystems follow the same procedure. On the third subsystem, states $1$ and $5$ are orthogonal, and states $2$ and $4$ are orthogonal. Hence, the corresponding third subsystems are shown in Eq.~\eqref{eq:case1043-third-subsystem}:
    \begin{equation}
    	\label{eq:case1043-third-subsystem}
    	\begin{aligned}
    		\ket{\gamma_1}_3&=\ket{0},\\
    		\ket{\gamma_2}_3
    		&=
    		\frac{1}
    		{\sqrt{1+|a_3|^2+|b_3|^2}}
    		\\[-0.5ex]
    		&\quad\times
    		\bigl(
    		\ket{0}+a_3\ket{1}+b_3\ket{2}
    		\bigr),\\
    		\ket{\gamma_3}_3
    		&=
    		\frac{1}
    		{\sqrt{1+|f_3|^2+|g_3|^2+|h_3|^2+|i_3|^2}}
    		\\[-0.5ex]
    		&\quad\times
    		\bigl(
    		\ket{0}+f_3\ket{1}+g_3\ket{2}
    		\\[-0.5ex]
    		&\qquad
    		+h_3\ket{3}+i_3\ket{4}
    		\bigr),\\
    		\ket{\gamma_4}_3
    		&=
    		\frac{1}
    		{\sqrt{1+|c_3|^2+|d_3|^2+|e_3|^2}}
    		\\[-0.5ex]
    		&\quad\times
    		\bigl(
    		\ket{0}+c_3\ket{1}+d_3\ket{2}
    		+e_3\ket{3}
    		\bigr),\\
    		\ket{\gamma_5}_3&=\ket{1},
    	\end{aligned}
    \end{equation}
    where $a_3\neq0$, $c_3\neq0$, $f_3\neq0$,
    $1+a_3^*c_3+b_3^*d_3=0$,
    $1+a_3^*f_3+b_3^*g_3\neq0$, and
    $1+c_3^*f_3+d_3^*g_3+e_3^*h_3\neq0$.
    
    On the fourth subsystem, states $2$ and $5$ are orthogonal, and states $3$ and $4$ are orthogonal. Accordingly, the general forms of the fourth subsystems are given in
    Eq.~\eqref{eq:case1043-fourth-subsystem}:
    \begin{equation}
    	\label{eq:case1043-fourth-subsystem}
    	\begin{aligned}
    		\ket{\delta_1}_4
    		&=
    		\frac{1}
    		{\sqrt{1+|f_4|^2+|g_4|^2+|h_4|^2+|i_4|^2}}
    		\\[-0.5ex]
    		&\quad\times
    		\bigl(
    		\ket{0}+f_4\ket{1}+g_4\ket{2}
    		\\[-0.5ex]
    		&\qquad
    		+h_4\ket{3}+i_4\ket{4}
    		\bigr),\\
    		\ket{\delta_2}_4&=\ket{0},\\
    		\ket{\delta_3}_4
    		&=
    		\frac{1}
    		{\sqrt{1+|a_4|^2+|b_4|^2}}
    		\\[-0.5ex]
    		&\quad\times
    		\bigl(
    		\ket{0}+a_4\ket{1}+b_4\ket{2}
    		\bigr),\\
    		\ket{\delta_4}_4
    		&=
    		\frac{1}
    		{\sqrt{1+|c_4|^2+|d_4|^2+|e_4|^2}}
    		\\[-0.5ex]
    		&\quad\times
    		\bigl(
    		\ket{0}+c_4\ket{1}+d_4\ket{2}
    		+e_4\ket{3}
    		\bigr),\\
    		\ket{\delta_5}_4&=\ket{1},
    	\end{aligned}
    \end{equation}
    where $a_4\neq0$, $c_4\neq0$, $f_4\neq0$,
    $1+a_4^*c_4+b_4^*d_4=0$,
    $1+a_4^*f_4+b_4^*g_4\neq0$, and
    $1+c_4^*f_4+d_4^*g_4+e_4^*h_4\neq0$.

    \begin{theorem}
    	\label{thm:case1043-indistinguishable}
    	Five quadripartite OPSs corresponding to case $(10\text{-}43)$, with the general forms given in Eqs.~\eqref{eq:case1043-first-subsystem}--%
    	\eqref{eq:case1043-fourth-subsystem}, cannot be perfectly distinguished by LOCC when
    	$a_u=c_u=d_u=f_u=g_u=h_u=0$ for $u=1,2$, and
    	$b_v=d_v=e_v=g_v=h_v=i_v=0$ for $v=3,4$.
    \end{theorem}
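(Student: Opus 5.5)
The plan is to follow the same strategy used for Theorems~\ref{thm:case825-indistinguishable} and~\ref{thm:case1027-indistinguishable}. I will specialize the general forms to the stated parameter values and show that every party's orthogonality-preserving first measurement is trivial.

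\emph{Step 1: the reduced states.} Setting $a_u=c_u=d_u=f_u=g_u=h_u=0$ ($u=1,2$) and $b_v=d_v=e_v=g_v=h_v=i_v=0$ ($v=3,4$) in Eqs.~\eqref{eq:case1043-first-subsystem}--\eqref{eq:case1043-fourth-subsystem}, and dropping normalizations, gives the following local vectors:
\begin{itemize}
\item[] first subsystem: $\ket{0},\ket{1},\ket{1},\ket{0}+b_1\ket{1},\ket{0}+e_1\ket{1}$;
\item[] second subsystem: $\ket{0}+b_2\ket{1},\ket{1},\ket{0},\ket{0}+e_2\ket{1},\ket{1}$;
\item[] third subsystem: $\ket{0},\ket{0}+a_3\ket{1},\ket{0}+f_3\ket{1},\ket{0}+c_3\ket{1},\ket{1}$;
\item[] fourth subsystem: $\ket{0}+f_4\ket{1},\ket{0},\ket{0}+a_4\ket{1},\ket{0}+c_4\ket{1},\ket{1}$.
\end{itemize}
These hold for states $1,\dots,5$ respectively, with $1+b_1^*e_1=0$, $1+b_2^*e_2=0$ and $1+a_v^*c_v=0$. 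I will note that the remaining nonvanishing conditions are compatible with these choices, for example $b_u,e_u\neq0$, $f_3\neq 0$, $1+a_3^*f_3\neq0$ and $1+c_3^*f_3\neq0$. I will also check that the reduced states still realize the orthogonality graph of case $(10\text{-}43)$, with each pair orthogonal on exactly one subsystem. All four local ranks equal $2$, so each POVM element can be written as a $2\times2$ positive matrix on the local support $\{\ket{0},\ket{1}\}$.

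\emph{Step 2: each party's measurement is trivial.} For each party I will use the same two-step argument. First, a pair of states whose local vectors on that party are $\ket{0}$ and $\ket{1}$, and which are orthogonal only there, forces the off-diagonal entries to vanish. Second, a pair of the form $\ket{0}+x\ket{1}$, $\ket{0}+y\ket{1}$ with $1+x^*y=0$ forces $m_{00}+x^*y\,m_{11}=0$, which combined with $1+x^*y=0$ gives $m_{00}=m_{11}$. The pairs for each party are:
\begin{itemize}
\item[] Alice: pair $(1,2)$ (or $(1,3)$) kills the off-diagonal entries, and pair $(4,5)$ equalizes the diagonal.
\item[] Bob: pair $(3,5)$ (or $(2,3)$) kills the off-diagonal entries, and pair $(1,4)$ with $1+b_2^*e_2=0$ equalizes the diagonal.
\item[] Third party: pair $(1,5)$, then pair $(2,4)$ with $1+a_3^*c_3=0$.
\item[] Fourth party: pair $(2,5)$, then pair $(3,4)$ with $1+a_4^*c_4=0$.
\end{itemize}
In each case every POVM element is proportional to the identity on the local support. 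Hence every outcome occurs with probability independent of the input state, and the measurement yields no information.

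\emph{Step 3: conclusion and main obstacle.} Since no party can perform a nontrivial orthogonality-preserving first measurement, any LOCC protocol begins with a trivial step. The same situation then recurs, so no perfect discrimination is possible, exactly as in the earlier theorems. I expect the main delicate point to be Step~1. There I must confirm that the specialization keeps every non-orthogonality condition satisfiable, for instance that states $2$ and $5$ have identical second-subsystem vectors yet remain orthogonal on the fourth subsystem. I must also confirm that each diagonal-equalizing pair is orthogonal \emph{only} on the party in question, so that the constraint $m_{00}+x^*y\,m_{11}=0$ is genuinely forced.
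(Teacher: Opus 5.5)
Your proposal is correct and follows the paper's proof essentially step by step: you use the same reduced states, the same pairs for each party (the $\ket{0}/\ket{1}$ pairs to kill the off-diagonal entries and the pairs with $1+x^*y=0$ to equalize the diagonal), and the same conclusion that every orthogonality-preserving first measurement is proportional to the identity. Your extra checks are details the paper leaves implicit: that the specialized states still realize the $(10\text{-}43)$ graph, and that each constraint-generating pair is orthogonal only on the measuring party (so the remaining local overlaps are nonzero).
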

    
    \begin{proof}
    	When $a_u=c_u=d_u=f_u=g_u=h_u=0$ for $u=1,2$, and $b_v=d_v=e_v=g_v=h_v=i_v=0$ for $v=3,4$, the five states associated with case $(10\text{-}43)$ reduce to the forms shown in Eq.~\eqref{eq:case1043-reduced-states}:
    	\begin{equation}
    		\label{eq:case1043-reduced-states}
    		\begin{aligned}
    			\ket{\psi_1}
    			&=
    			\frac{1}
    			{\sqrt{1+|b_2|^2}\sqrt{1+|f_4|^2}}
    			\ket{0}_1
    			\bigl(\ket{0}+b_2\ket{1}\bigr)_2
    			\\[-0.5ex]
    			&\quad\times
    			\ket{0}_3
    			\bigl(\ket{0}+f_4\ket{1}\bigr)_4,
    			\\
    			\ket{\psi_2}
    			&=
    			\frac{1}{\sqrt{1+|a_3|^2}}
    			\ket{1}_1\ket{1}_2
    			\bigl(\ket{0}+a_3\ket{1}\bigr)_3
    			\ket{0}_4,
    			\\
    			\ket{\psi_3}
    			&=
    			\frac{1}
    			{\sqrt{1+|f_3|^2}\sqrt{1+|a_4|^2}}
    			\ket{1}_1\ket{0}_2
    			\\[-0.5ex]
    			&\quad\times
    			\bigl(\ket{0}+f_3\ket{1}\bigr)_3
    			\bigl(\ket{0}+a_4\ket{1}\bigr)_4,
    			\\
    			\ket{\psi_4}
    			&=
    			\frac{1}
    			{\sqrt{1+|b_1|^2}
    				\sqrt{1+|e_2|^2}
    				\sqrt{1+|c_3|^2}
    				\sqrt{1+|c_4|^2}}
    			\\[-0.5ex]
    			&\quad\times
    			\bigl(\ket{0}+b_1\ket{1}\bigr)_1
    			\bigl(\ket{0}+e_2\ket{1}\bigr)_2
    			\\[-0.5ex]
    			&\quad\times
    			\bigl(\ket{0}+c_3\ket{1}\bigr)_3
    			\bigl(\ket{0}+c_4\ket{1}\bigr)_4,
    			\\
    			\ket{\psi_5}
    			&=
    			\frac{1}{\sqrt{1+|e_1|^2}}
    			\bigl(\ket{0}+e_1\ket{1}\bigr)_1
    			\ket{1}_2\ket{1}_3\ket{1}_4 ,
    		\end{aligned}
    	\end{equation}
    	where $1+b_1^*e_1=0$, $1+b_2^*e_2=0$, and $1+a_v^*c_v=0$ for $v=3,4$. Under these conditions, the local ranks satisfy $r_1=r_2=r_3=r_4=2$.
    	
    	Suppose that the first party, say Alice, first performs an orthogonality-preserving measurement with POVM elements $M_j^\dagger M_j$, where
    	\[
    	M_j^\dagger M_j=
    	\begin{pmatrix}
    		m_{00}^{(j)} & m_{01}^{(j)}\\
    		m_{10}^{(j)} & m_{11}^{(j)}
    	\end{pmatrix}.
    	\]
    	Here the matrix is written in the basis $\{|0\rangle,|1\rangle\}$ of the local support, and $j=1,2,\ldots,l$. To preserve the orthogonality of the postmeasurement states, every pair whose orthogonality is supplied by the first subsystem must remain orthogonal after Alice's measurement.
    	
    	For states $\ket{\psi_1}$ and $\ket{\psi_2}$, we have $\bra{\psi_1}(M_j^\dagger M_j\otimes I_{234})\ket{\psi_2}=0$ and $\bra{\psi_2}(M_j^\dagger M_j\otimes I_{234})\ket{\psi_1}=0$. Thus, $m_{01}^{(j)}=m_{10}^{(j)}=0$. The pair formed by states $1$ and $3$ yields the same conditions.
    	
    	For states $\ket{\psi_4}$ and $\ket{\psi_5}$, orthogonality preservation gives $m_{00}^{(j)}+b_1^*e_1m_{11}^{(j)}=0$. Since their original orthogonality condition is $1+b_1^*e_1=0$, we obtain $m_{00}^{(j)}=m_{11}^{(j)}$. Consequently, every POVM element
    	performed by Alice must have the form
    	\[
    	M_j^\dagger M_j=
    	\begin{pmatrix}
    		m_{00}^{(j)} & 0\\
    		0 & m_{00}^{(j)}
    	\end{pmatrix}.
    	\]
    	
        When Alice obtains outcome $j$, its probability is $p_A(j\mid k)= \bra{\psi_k}(M_j^\dagger M_j\otimes I_{234})\ket{\psi_k}=m_{00}^{(j)}$ for every $k\in\{1,2,3,4,5\}$. Hence, Alice obtains the same outcome distribution for all five possible input states and cannot acquire useful information.
        
        Suppose next that the second party, say Bob, performs the first orthogonality-preserving
        measurement. Bob's POVM element on the local support can be written as
        \[
        N_j^\dagger N_j=
        \begin{pmatrix}
        	n_{00}^{(j)} & n_{01}^{(j)}\\
        	n_{10}^{(j)} & n_{11}^{(j)}
        \end{pmatrix}.
        \]
        Here the matrix is written in the basis $\{|0\rangle,|1\rangle\}$ of Bob's local support.
        
        Since states 3 and 5 are orthogonal on the second subsystem, orthogonality preservation
        gives $n_{01}^{(j)}=n_{10}^{(j)}=0$. The orthogonality between states $2$ and $3$ gives the same restriction. For states $1$ and $4$, preservation of their orthogonality gives $n_{00}^{(j)}+b_2^*e_2n_{11}^{(j)}=0$. Together with $1+b_2^*e_2=0$, this relation implies $n_{00}^{(j)}=n_{11}^{(j)}$. Hence, every such POVM element is proportional to the identity operator on the corresponding local subspace, and the second party cannot obtain any information for distinguishing the five states.
        
        For the third party, the orthogonality of states $1$ and $5$ forces the off-diagonal entries of each orthogonality-preserving POVM element to vanish, while that of states $2$ and $4$, together with $1+a_3^*c_3=0$, implies equality of the two diagonal entries. Hence, every such POVM element is proportional to the identity operator on the corresponding local subspace.
        
        Similarly, for the fourth party, the orthogonality relations between states $2$ and $5$ and between states $3$ and $4$, together with $1+a_4^*c_4=0$, imply that every orthogonality-preserving POVM element is proportional to the identity operator on the corresponding local subspace.
        
        Consequently, none of the four parties can obtain useful information from an orthogonality-preserving first local measurement. Hence, no nontrivial first local measurement can be performed while preserving the orthogonality of the five states. Therefore, the five quadripartite OPSs corresponding to case $(10\text{-}43)$ cannot be perfectly distinguished by LOCC.
    \end{proof}
    
   For exceptional case $4$ in Theorem~\ref{thm:main}, the corresponding orthogonality graphs of ﬁve quadripartite OPSs are cases $(5\text{-}37)$, $(7\text{-}79)$, $(7\text{-}85)$, $(7\text{-}94)$, $(8\text{-}3)$, $(8\text{-}23)$, $(8\text{-}39)$, $(9\text{-}30)$, $(9\text{-}31)$, $(9\text{-}32)$, $(10\text{-}8)$, $(10\text{-}41)$, and $(10\text{-}59)$, as shown in Fig.~\ref{fig:first-exceptional-type}. We discuss separately the local indistinguishability of five quadripartite OPSs corresponding to each graph.

   \begin{figure}[H]
   	\centering
   	\includegraphics[
   	width=0.8\columnwidth,
   	keepaspectratio
   	]{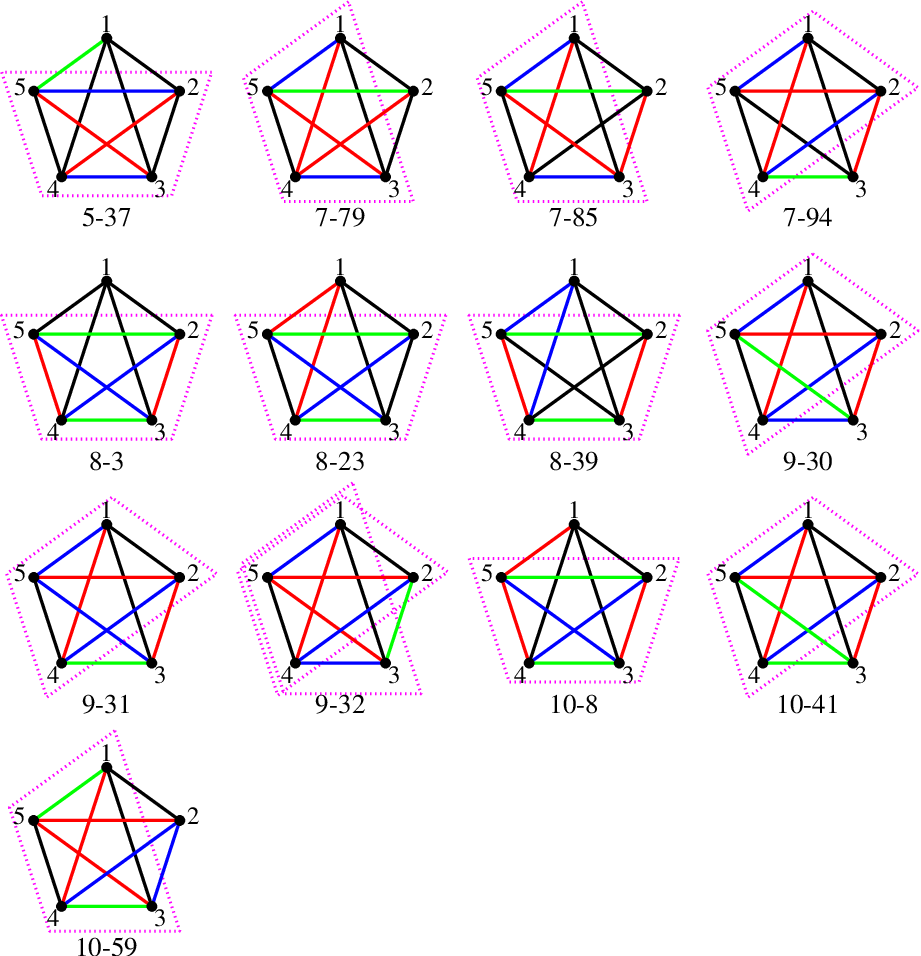}
   	\caption{Excluded case 4 in Theorem~\ref{thm:main}.}
   	\label{fig:first-exceptional-type}
   \end{figure}
   
   (1) For case $(5\text{-}37)$, state $1$ is orthogonal to states $\{2,3,4\}$ on the first subsystem and to state $5$ on the fourth subsystem. Hence, the conditions of Lemma~\ref{lem:31-removal} are satisfied, and state $1$ can be identified by LOCC.
   
   The remaining four states are $\{2,3,4,5\}$. The orthogonality graph induced by these states is identical to case $(1\text{-}3)$ in Fig.~\ref{fig:four-tripartite-222} after an appropriate relabeling of the vertices. By Lemma~\ref{lem:four-tripartite-222-exception}, these four states are locally indistinguishable when their three local ranks are all equal to $2$, and are distinguishable by LOCC with nonzero probability otherwise.
   
   (2) For case $(7\text{-}79)$, the only orthogonality relation on the fourth subsystem is between states $2$ and $5$. Let $|\delta_2\rangle$ denote the normalized fourth local component of state $2$. The fourth party performs the measurement $\{P,I-P\}$, where $P=|\delta_2\rangle\langle\delta_2|$.
   
   For the outcome corresponding to $P$, state $5$ is excluded, leaving the states $\{1,2,3,4\}$. Their vector of the numbers of pairwise orthogonal relations is $(3,2,1,0)$. Since there is no orthogonality relation on the fourth subsystem, these four states can be regarded as tripartite OPSs with the vector $(3,2,1)$. By Lemma~\ref{lem:four-tripartite-411-321}, they can be perfectly distinguished by LOCC.
   
   For the outcome corresponding to $I-P$, state $2$ is excluded, leaving the states $\{1,3,4,5\}$. The orthogonality graph induced by these four states is identical to case $(1\text{-}3)$ in Fig.~\ref{fig:four-tripartite-222} after an appropriate relabeling of the vertices. By Lemma~\ref{lem:four-tripartite-222-exception}, these four states are locally indistinguishable when their three local ranks are all equal to $2$, and are distinguishable by LOCC with nonzero probability otherwise.
   
   (3) For case $(7\text{-}85)$, state $2$ is orthogonal to states $\{1,4\}$ on the first subsystem, to state $3$ on the second subsystem, and to state $5$ on the fourth subsystem. Hence, the conditions of Theorem~\ref{thm:211-reduction} are satisfied, and state $2$ can be identified by LOCC.
   
   The remaining four states are $\{1,3,4,5\}$. The orthogonality graph induced by these states is identical to case $(1\text{-}3)$ in Fig.~\ref{fig:four-tripartite-222} after an appropriate relabeling of the vertices. By Lemma~\ref{lem:four-tripartite-222-exception}, these four states cannot be perfectly distinguished by LOCC when their three local ranks are all equal to $2$, and are distinguishable by LOCC with nonzero probability otherwise.
   
   (4) For case $(7\text{-}94)$, state $3$ is orthogonal to states $\{1,5\}$ on the first subsystem, to state $2$ on the second subsystem, and to state $4$ on the fourth subsystem. Hence, the conditions of Theorem~\ref{thm:211-reduction} are satisfied, and state $3$ can be identified by LOCC.
   
   The remaining four states are $\{1,2,4,5\}$. The orthogonality graph induced by these states is identical to case $(1\text{-}3)$ in Fig.~\ref{fig:four-tripartite-222} after an appropriate relabeling of the vertices. By Lemma~\ref{lem:four-tripartite-222-exception}, these four states are locally indistinguishable when their three local ranks are all equal to 2, and are distinguishable by LOCC with nonzero probability otherwise.
   
   (5) For case $(8\text{-}3)$, state $1$ is orthogonal to states $\{2,3,4,5\}$ on the first subsystem. Hence, the conditions of Lemma~\ref{lem:degree4-removal} are satisfied, and state $1$ can be identified by LOCC.
   
   The remaining four states are $\{2,3,4,5\}$. The orthogonality graph induced by these states is identical to case $(1\text{-}3)$ in Fig.~\ref{fig:four-tripartite-222} after an appropriate relabeling of the vertices. By Lemma~\ref{lem:four-tripartite-222-exception}, these four states are locally indistinguishable when their three local ranks are all equal to $2$, and are distinguishable by LOCC with nonzero probability otherwise.
   
   (6) For case $(8\text{-}23)$, state $1$ is orthogonal to states $\{2,3\}$ on the first subsystem and to states $\{4,5\}$ on the second subsystem. Since states $2$ and $3$ are not orthogonal on the second subsystem, the conditions of Theorem~\ref{thm:22-reduction} are satisfied, and state $1$ can be identified by LOCC.
   
   The remaining four states are $\{2,3,4,5\}$. The orthogonality graph induced by these states is identical to case $(1\text{-}3)$ in Fig.~\ref{fig:four-tripartite-222} after an appropriate relabeling of the vertices. By Lemma~\ref{lem:four-tripartite-222-exception}, these four states are locally indistinguishable when their three local ranks are all equal to $2$, and are distinguishable by LOCC with nonzero probability otherwise.
   
   (7) For case $(8\text{-}39)$, state $1$ is orthogonal to states $\{2,3\}$ on the first subsystem and to states $\{4,5\}$ on the third subsystem. Since states $4$ and $5$ are orthogonal on the second subsystem rather than on the first subsystem, the conditions of Theorem~\ref{thm:22-reduction} are satisfied, and state $1$ can be identified by LOCC.
   
   The remaining four states are $\{2,3,4,5\}$. The orthogonality graph induced by these states is identical to case $(1\text{-}3)$ in Fig.~\ref{fig:four-tripartite-222} after an appropriate relabeling of the vertices. By Lemma~\ref{lem:four-tripartite-222-exception}, these four states are locally indistinguishable when their three local ranks are all equal to $2$, and are distinguishable by LOCC with nonzero probability otherwise.
   
   (8) For case $(9\text{-}30)$, the only orthogonality relation on the fourth subsystem is between states $3$ and $5$. Let $|\delta_3\rangle$ denote the normalized fourth local component of state $3$. The fourth party performs the measurement $\{P,I-P\}$, where $P=|\delta_3\rangle\langle\delta_3|$.
   
   For the outcome corresponding to $P$, state $5$ is excluded, leaving the states $\{1,2,3,4\}$. The orthogonality graph induced by these four states is identical to case $(1\text{-}1)$ in Fig.~\ref{fig:four-tripartite-222} after an appropriate relabeling of the vertices. By Lemma~\ref{lem:four-tripartite-222-dist}, they can be perfectly distinguished by LOCC.
   
   For the outcome corresponding to $I-P$, state $3$ is excluded, leaving the states $\{1,2,4,5\}$. The orthogonality graph induced by these four states is identical to case $(1\text{-}3)$ in Fig.~\ref{fig:four-tripartite-222} after an appropriate relabeling of the vertices. By Lemma~\ref{lem:four-tripartite-222-exception}, these four states are locally indistinguishable when their three local ranks are all equal to $2$, and are distinguishable by LOCC with nonzero probability otherwise.
   
   (9) For case $(9\text{-}31)$, the only orthogonality relation on the fourth subsystem is between states $3$ and $4$. Let $|\delta_3\rangle$ denote the normalized fourth local component of state $3$. The fourth party performs the measurement $\{P,I-P\}$, where $P=|\delta_3\rangle\langle\delta_3|$.
   
   For the outcome corresponding to $P$, state $4$ is excluded, leaving the states $\{1,2,3,5\}$. The orthogonality graph induced by these four states is identical to case $(1\text{-}2)$ in Fig.~\ref{fig:four-tripartite-222} after an appropriate relabeling of the vertices. By Lemma~\ref{lem:four-tripartite-222-dist}, they can be perfectly distinguished by LOCC.
   
   For the outcome corresponding to $I-P$, state $3$ is excluded, leaving the states $\{1,2,4,5\}$. The orthogonality graph induced by these four states is identical to case $(1\text{-}3)$ in Fig.~\ref{fig:four-tripartite-222} after an appropriate relabeling of the vertices. By Lemma~\ref{lem:four-tripartite-222-exception}, these four states are locally indistinguishable when their three local ranks are all equal to $2$, and are distinguishable by LOCC with nonzero probability otherwise.
   
   (10) For case $(9\text{-}32)$, the only orthogonality relation on the fourth subsystem is between states $2$ and $3$. Let $|\delta_2\rangle$ denote the normalized fourth local component of state $2$. The fourth party performs the measurement $\{P,I-P\}$, where $P=|\delta_2\rangle\langle\delta_2|$.
   
   For the outcome corresponding to $P$, state $3$ is excluded, leaving the states $\{1,2,4,5\}$. For the outcome corresponding to $I-P$, state $2$ is excluded, leaving the states $\{1,3,4,5\}$. In either case, the orthogonality graph induced by the remaining four states is identical to case $(1\text{-}3)$ in Fig.~\ref{fig:four-tripartite-222} after an appropriate relabeling of the vertices. By Lemma~\ref{lem:four-tripartite-222-exception}, the remaining four states are locally indistinguishable when their three local ranks are all equal to $2$, and are distinguishable by LOCC with nonzero probability otherwise.
   
   (11) For case $(10\text{-}8)$, state $1$ is orthogonal to states $\{2,3,4\}$ on the first subsystem and to state $5$ on the second subsystem. Hence, the conditions of Lemma~\ref{lem:31-removal} are satisfied, and state $1$ can be identified by LOCC.
   
   The remaining four states are $\{2,3,4,5\}$. The orthogonality graph induced by these states is identical to case $(1\text{-}3)$ in Fig.~\ref{fig:four-tripartite-222} after an appropriate relabeling of the vertices. By Lemma~\ref{lem:four-tripartite-222-exception}, these four states are locally indistinguishable when their three local ranks are all equal to $2$, and are distinguishable by LOCC with nonzero probability otherwise.
   
   (12) For case $(10\text{-}41)$, state $3$ is orthogonal to states $\{4,5\}$ on the fourth subsystem, to state $1$ on the first subsystem, and to state $2$ on the second subsystem. Since states $1$ and $2$ are orthogonal on the first subsystem rather than on the fourth subsystem, the conditions of Theorem~\ref{thm:211-reduction} are satisfied, and state $3$ can be identified by LOCC.
   
   The remaining four states are $\{1,2,4,5\}$. The orthogonality graph induced by these states is identical to case $(1\text{-}3)$ in Fig.~\ref{fig:four-tripartite-222} after an appropriate relabeling of the vertices. By Lemma~\ref{lem:four-tripartite-222-exception}, these four states are locally indistinguishable when their three local ranks are all equal to $2$, and are distinguishable by LOCC with nonzero probability otherwise.
   
   (13) For case $(10\text{-}59)$, state $2$ is orthogonal to states $\{3,4\}$ on the third subsystem, to state $1$ on the first subsystem, and to state $5$ on the second subsystem. Since states $1$ and $5$ are orthogonal on the fourth subsystem rather than on the third subsystem, the conditions of Theorem~\ref{thm:211-reduction} are satisfied, and state $2$ can be identified by LOCC.
   
   The remaining four states are $\{1,3,4,5\}$. The orthogonality graph induced by these states is identical to case $(1\text{-}3)$ in Fig.~\ref{fig:four-tripartite-222} after an appropriate relabeling of the vertices. By Lemma~\ref{lem:four-tripartite-222-exception}, these four states are locally indistinguishable when their three local ranks are all equal to $2$, and are distinguishable by LOCC with nonzero probability otherwise.
   
   Combining the classification in Sec.~III with the exceptional case analysis above, and identifying graphs related by a relabeling of vertices or a permutation of parties, the genuine quadripartite categories contain $371$ distinct orthogonality graphs. Among
   them, $355$ graphs correspond to five OPSs that can be perfectly distinguished by LOCC, while the remaining $16$ graphs belong to the four exceptional structural types stated in
   Theorem~\ref{thm:main}. For these exceptional graphs, local distinguishability depends further on the local ranks or on the specific parameters of the local states, as discussed in this
   section.
   
	\section{\label{sec5}Conclusion}

	In this paper, we studied the local distinguishability of five quadripartite OPSs under the assumption that every pair of OPSs is orthogonal on exactly one subsystem. By using the vectors of the numbers of pairwise orthogonality relations, we divided the feasible sets of five quadripartite OPSs into different categories and described their structures by colored orthogonality graphs. After identifying graphs up to relabeling of states and permutation of subsystems, this enumeration gives $371$ distinct orthogonality graphs. We showed that a set of five quadripartite OPSs corresponding to one of  $355$ orthogonality graphs can be perfectly distinguished by LOCC. The remaining $16$ graphs fall into four exceptional structural types, represented by the orthogonality graphs $(5\text{-}37)$, $(7\text{-}79)$, $(7\text{-}85)$, $(7\text{-}94)$, $(8\text{-}3)$, $(8\text{-}23)$, $(8\text{-}25)$, $(8\text{-}39)$, $(9\text{-}30)$, $(9\text{-}31)$, $(9\text{-}32)$, $(10\text{-}8)$, $(10\text{-}27)$, $(10\text{-}41)$, $(10\text{-}43)$, and $(10\text{-}59)$. For sets of five quadripartite OPSs corresponding to these exceptional cases, we further discussed their local indistinguishability and gave representative analyses. Thus, our work extends the known results on five OPSs from bipartite and tripartite systems to quadripartite systems, and provides a classification framework for studying small sets of OPSs in multipartite quantum systems.
\clearpage

\begin{acknowledgments}
	\vspace{-10pt}
    This work is supported by Natural Science Foundation of Shandong Province of China (Grant No. ZR2026MS1078 and ZR2023MF080), Beijing Natural Science Foundation (Grant No. 4252014) and 
Shandong Key Laboratory of Smart Mine Information Technology.
\end{acknowledgments}
	\appendix
	\renewcommand{\appendixname}{APPENDIX}
	\section{PROOF OF THEOREM~\ref{thm:main}}
	\label{app:A}
	
	Now, we prove that five quadripartite OPSs, any two of which are orthogonal only on one subsystem, can be perfectly distinguished by LOCC, except the five quadripartite OPSs whose
	orthogonality graphs are $(5\text{-}37)$, $(7\text{-}79)$,
	$(7\text{-}85)$, $(7\text{-}94)$, $(8\text{-}3)$, $(8\text{-}23)$,
	$(8\text{-}25)$, $(8\text{-}39)$, $(9\text{-}30)$, $(9\text{-}31)$,
	$(9\text{-}32)$, $(10\text{-}8)$, $(10\text{-}27)$, $(10\text{-}41)$,
	$(10\text{-}43)$, or $(10\text{-}59)$.
	
	\subsection*{1. Category \((7,1,1,1)\)}
	
	The orthogonality graphs of five quadripartite OPSs with the vector of the numbers of pairwise orthogonal relations \((7,1,1,1)\) are shown in Fig.~\ref{fig:7111}. There are four different cases in this category, i.e., \((2\text{-}1)\)--\((2\text{-}4)\).
	
	(1) In cases \((2\text{-}1)\), \((2\text{-}2)\), and \((2\text{-}3)\), state \(1\) is orthogonal to all the other states on the first subsystem. This means that the degree of state \(1\) on the first subsystem is \(4\). According to Lemma~\ref{lem:degree4-removal}, the first party can identify state \(1\) from all the other states. The result is that four states \(\{2,3,4,5\}\) are left to be distinguished.
	
	For these three cases, the vector of the numbers of pairwise orthogonal relations of states \(\{2,3,4,5\}\) is \((3,1,1,1)\), up to a permutation of parties. By Lemma~\ref{lem:four-fourpartite-3111-2211}, states \(\{2,3,4,5\}\) can be perfectly distinguished by LOCC. Therefore, five quadripartite OPSs in cases \((2\text{-}1)\), \((2\text{-}2)\), and
	\((2\text{-}3)\) can be perfectly distinguished by LOCC.
	
	(2) In case \((2\text{-}4)\), state \(1\) is orthogonal to states \(\{2,3,4\}\) on the first subsystem and is orthogonal to state \(5\) on the third subsystem. Hence the degree of state \(1\) is \(3\) on the first subsystem and \(1\) on the third subsystem. According to
	Lemma~\ref{lem:31-removal}, state \(1\) can be identified by LOCC. The result is that four states \(\{2,3,4,5\}\) are left to be distinguished.
	
	For states \(\{2,3,4,5\}\), the vector of the numbers of pairwise orthogonal relations is \((4,1,0,1)\). Since there is no orthogonal relation on the third subsystem, these four states can be regarded as four tripartite OPSs with the vector \((4,1,1)\), up to a permutation of
	parties. By Lemma~\ref{lem:four-tripartite-411-321}, states \(\{2,3,4,5\}\) can be perfectly distinguished by LOCC. Therefore, five quadripartite OPSs in case \((2\text{-}4)\) can be perfectly distinguished by LOCC.
	
	Therefore, five quadripartite OPSs with the vector \((7,1,1,1)\) can be perfectly distinguished by LOCC.
	
	\subsection*{2. Category \((6,2,1,1)\)}
	
	The orthogonality graphs of five quadripartite OPSs with the vector of the numbers of pairwise orthogonal relations \((6,2,1,1)\) are shown in Fig.~\ref{fig:6211}. There are seventeen different cases in this category, labeled \((3\text{-}1)\)--\((3\text{-}17)\).
	
	(1) In cases \((3\text{-}1)\), \((3\text{-}2)\), \((3\text{-}3)\), \((3\text{-}4)\), \((3\text{-}5)\), and \((3\text{-}6)\), state \(1\) is orthogonal to all the other states on the first subsystem. This means that the degree of state \(1\) on the first subsystem is \(4\). According to Lemma~\ref{lem:degree4-removal}, the first party can identify state \(1\) from all the other states. The result is that four states \(\{2,3,4,5\}\) are left to be distinguished.
	
	For these six cases, the vector of the numbers of pairwise orthogonal relations of states \(\{2,3,4,5\}\) is \((2,2,1,1)\), up to a permutation of parties. By Lemma~\ref{lem:four-fourpartite-3111-2211}, states \(\{2,3,4,5\}\) can be perfectly distinguished by LOCC. Therefore, five quadripartite OPSs in cases \((3\text{-}1)\)--\((3\text{-}6)\) can be
	perfectly distinguished by LOCC.
	
	(2) In cases \((3\text{-}7)\), \((3\text{-}8)\), \((3\text{-}9)\), \((3\text{-}10)\), \((3\text{-}11)\), \((3\text{-}12)\), and \((3\text{-}13)\), state \(1\) is orthogonal to states \(\{2,3,4\}\) on the first subsystem and is orthogonal to state \(5\) on the second subsystem. Hence the degree of state \(1\) is \(3\) on the first subsystem and \(1\) on the second subsystem. According to Lemma~\ref{lem:31-removal}, state \(1\) can be identified by LOCC. The
	result is that four states \(\{2,3,4,5\}\) are left to be distinguished.
	
	For these seven cases, the vector of the numbers of pairwise orthogonal relations of states \(\{2,3,4,5\}\) is \((3,1,1,1)\), up to a permutation of parties. By Lemma~\ref{lem:four-fourpartite-3111-2211}, states \(\{2,3,4,5\}\) can be perfectly distinguished by LOCC. Therefore, five quadripartite OPSs in cases \((3\text{-}7)\)--\((3\text{-}13)\) can be
	perfectly distinguished by LOCC.
	
	(3) In cases \((3\text{-}14)\) and \((3\text{-}15)\), state \(1\) is orthogonal to states \(\{2,3,4\}\) on the first subsystem and is orthogonal to state \(5\) on the third subsystem. Hence the degree of state \(1\) is \(3\) on the first subsystem and \(1\) on the third
	subsystem. According to Lemma~\ref{lem:31-removal}, state \(1\) can be identified by LOCC. The result is that four states \(\{2,3,4,5\}\) are left to be distinguished.
	
	For states \(\{2,3,4,5\}\), the vector of the numbers of pairwise orthogonal relations is \((3,2,0,1)\). Since there is no orthogonal relation on the third subsystem, these four states can be regarded as four tripartite OPSs with the vector \((3,2,1)\), up to a permutation of
	parties. By Lemma~\ref{lem:four-tripartite-411-321}, states \(\{2,3,4,5\}\) can be perfectly distinguished by LOCC. Therefore, five quadripartite OPSs in cases \((3\text{-}14)\) and \((3\text{-}15)\) can be perfectly distinguished by LOCC.
	
	(4) In cases \((3\text{-}16)\) and \((3\text{-}17)\), state \(1\) is orthogonal to states \(\{2,3,4\}\) on the first subsystem and is orthogonal to state \(5\) on the fourth subsystem. Hence the degree of state \(1\) is \(3\) on the first subsystem and \(1\) on the fourth
	subsystem. According to Lemma~\ref{lem:31-removal}, state \(1\) can be identified by LOCC. The result is that four states \(\{2,3,4,5\}\) are left to be distinguished.
	
	For states \(\{2,3,4,5\}\), the vector of the numbers of pairwise orthogonal relations is \((3,2,1,0)\). Since there is no orthogonal relation on the fourth subsystem, these four states can be regarded as four tripartite OPSs with the vector \((3,2,1)\), up to a permutation of
	parties. By Lemma~\ref{lem:four-tripartite-411-321}, states \(\{2,3,4,5\}\) can be perfectly distinguished by LOCC. Therefore, five quadripartite OPSs in cases \((3\text{-}16)\) and \((3\text{-}17)\) can be perfectly distinguished by LOCC.
	
	Therefore, five quadripartite OPSs with the vector \((6,2,1,1)\) can be perfectly distinguished by LOCC.
	
	\subsection*{3. Category \((5,3,1,1)\)}
	
	The orthogonality graphs of five quadripartite OPSs with the vector of the numbers of pairwise orthogonal relations \((5,3,1,1)\) are shown in Fig.~\ref{fig:5311}. There are thirty different cases in this category, labeled \((4\text{-}1)\)--\((4\text{-}30)\).
	
	(1) In cases \((4\text{-}1)\), \((4\text{-}2)\), \((4\text{-}3)\), and \((4\text{-}4)\), state \(1\) is orthogonal to all the other states on the first subsystem. This means that the degree of state \(1\) on the first subsystem is \(4\). According to Lemma~\ref{lem:degree4-removal}, the
	first party can identify state \(1\) from all the other states. The result is that four states \(\{2,3,4,5\}\) are left to be distinguished.
	
	For these four cases, the vector of the numbers of pairwise orthogonal relations of states \(\{2,3,4,5\}\) is \((1,3,1,1)\), up to a permutation of parties. Hence these four states can be regarded as four quadripartite OPSs with the vector \((3,1,1,1)\). By Lemma~\ref{lem:four-fourpartite-3111-2211}, states \(\{2,3,4,5\}\) can be perfectly distinguished by LOCC. Therefore, five quadripartite OPSs in cases \((4\text{-}1)\)--\((4\text{-}4)\) can be perfectly distinguished by LOCC.
	
	(2) In cases \((4\text{-}5)\)--\((4\text{-}19)\), state \(1\) is orthogonal to states \(\{2,3,4\}\) on the first subsystem and is orthogonal to state \(5\) on the second subsystem. Hence the degree of state \(1\) is \(3\) on the first subsystem and \(1\) on the second
	subsystem. According to Lemma~\ref{lem:31-removal}, state \(1\) can be identified by LOCC. The result is that four states \(\{2,3,4,5\}\) are left to be distinguished.
	
	For these cases, the vector of the numbers of pairwise orthogonal relations of states \(\{2,3,4,5\}\) is \((2,2,1,1)\), up to a permutation of parties. By Lemma~\ref{lem:four-fourpartite-3111-2211}, states \(\{2,3,4,5\}\) can be perfectly distinguished by LOCC. Therefore, five quadripartite OPSs in cases \((4\text{-}5)\)--\((4\text{-}19)\) can be
	perfectly distinguished by LOCC.
	(3) In cases \((4\text{-}20)\)--\((4\text{-}24)\), state \(1\) is orthogonal to states \(\{2,3,4\}\) on the first subsystem and is orthogonal to state \(5\) on the third subsystem. Hence the degree of state \(1\) is \(3\) on the first subsystem and \(1\) on the third subsystem. According to Lemma~\ref{lem:31-removal}, state \(1\) can be identified by LOCC. The result is that four states \(\{2,3,4,5\}\) are left to be distinguished.
	
	For states \(\{2,3,4,5\}\), the vector of the numbers of pairwise orthogonal relations is \((2,3,0,1)\). Since there is no orthogonal relation on the third subsystem, these four states can be regarded as four tripartite OPSs with the vector \((3,2,1)\), up to a permutation of
	parties. By Lemma~\ref{lem:four-tripartite-411-321}, states \(\{2,3,4,5\}\) can be perfectly distinguished by LOCC. Therefore, five quadripartite OPSs in cases \((4\text{-}20)\)--\((4\text{-}24)\) can be perfectly distinguished by LOCC.
	
	(4) In cases \((4\text{-}25)\)--\((4\text{-}28)\), state \(1\) is orthogonal to states \(\{2,3,4\}\) on the first subsystem and is orthogonal to state \(5\) on the fourth subsystem. Hence the degree of state \(1\) is \(3\) on the first subsystem and \(1\) on the fourth
	subsystem. According to Lemma~\ref{lem:31-removal}, state \(1\) can be identified by LOCC. The result is that four states \(\{2,3,4,5\}\) are left to be distinguished.
	
	For states \(\{2,3,4,5\}\), the vector of the numbers of pairwise orthogonal relations is \((2,3,1,0)\). Since there is no orthogonal relation on the fourth subsystem, these four states can be regarded as four tripartite OPSs with the vector \((3,2,1)\), up to a permutation of
	parties. By Lemma~\ref{lem:four-tripartite-411-321}, states \(\{2,3,4,5\}\) can be perfectly distinguished by LOCC. Therefore, five quadripartite OPSs in cases \((4\text{-}25)\)--\((4\text{-}28)\) can be perfectly distinguished by LOCC.
	
	(5) In case \((4\text{-}29)\), state \(3\) is orthogonal to state \(4\) on the third subsystem, which can be used to identify states \(\{1,2,3,5\}\) and \(\{1,2,4,5\}\) by the third party.
	
	For states \(\{1,2,3,5\}\), the vector of the numbers of pairwise orthogonal relations is \((3,2,0,1)\). This means that there is no orthogonal relation among states \(\{1,2,3,5\}\) on the third subsystem. Thus, states \(\{1,2,3,5\}\) can be seen as a set of four tripartite OPSs
	with the vector \((3,2,1)\) when the third subsystem is omitted. By Lemma~\ref{lem:four-tripartite-411-321}, states \(\{1,2,3,5\}\) can be perfectly distinguished by LOCC.
	
	For states \(\{1,2,4,5\}\), the vector of the numbers of pairwise orthogonal relations is \((3,2,0,1)\). This means that there is no orthogonal relation among states \(\{1,2,4,5\}\) on the third subsystem. Thus, states \(\{1,2,4,5\}\) can be seen as a set of four tripartite OPSs
	with the vector \((3,2,1)\) when the third subsystem is omitted. By Lemma~\ref{lem:four-tripartite-411-321}, states \(\{1,2,4,5\}\) can be perfectly distinguished by LOCC. Therefore, five quadripartite OPSs in case \((4\text{-}29)\) can be perfectly distinguished by LOCC.

	(6) In case \((4\text{-}30)\), state \(3\) is orthogonal to states \(\{2,4\}\) on the second subsystem and is orthogonal to states \(\{1,5\}\) on the first subsystem. Moreover, the edge between states \(1\) and \(5\) is on the third subsystem, and hence it is not on the second subsystem. Therefore, by Theorem~\ref{thm:22-reduction}, the second party can identify state \(3\) from all the other states.
	
	The result is that four states \(\{1,2,4,5\}\) are left to be distinguished. For states \(\{1,2,4,5\}\), the vector of the numbers of pairwise orthogonal relations is \((3,1,1,1)\), up to a permutation of parties. By Lemma~\ref{lem:four-fourpartite-3111-2211}, states \(\{1,2,4,5\}\) can be perfectly distinguished by LOCC. Therefore, five quadripartite OPSs in case \((4\text{-}30)\) can be perfectly distinguished by LOCC.

	\subsection*{4. Category \((5,2,2,1)\)}
	
	The orthogonality graphs of five quadripartite OPSs with the vector of the numbers of pairwise orthogonal relations \((5,2,2,1)\) are shown in Fig.~\ref{fig:5221}. There are forty-two different cases in this category, labeled \((5\text{-}1)\)--\((5\text{-}42)\).
	
	(1) In cases \((5\text{-}1)\), \((5\text{-}2)\), \((5\text{-}3)\), \((5\text{-}4)\), \((5\text{-}5)\), and \((5\text{-}6)\), state \(1\) is orthogonal to states \(\{2,3,4,5\}\) on the first subsystem. This means that the degree of state \(1\) on the first subsystem is \(4\). According to Lemma~\ref{lem:degree4-removal}, the first party can identify state \(1\) from all the other states. The result is that four states \(\{2,3,4,5\}\) are left to be distinguished.
	
	For these cases, the vector of the numbers of pairwise orthogonal relations of states \(\{2,3,4,5\}\) is \((1,2,2,1)\), up to a permutation of parties. Hence these four states can be regarded as four quadripartite OPSs with the vector \((2,2,1,1)\). By Lemma~\ref{lem:four-fourpartite-3111-2211}, states \(\{2,3,4,5\}\) can be perfectly distinguished by LOCC. Therefore, five quadripartite OPSs in cases \((5\text{-}1)\), \((5\text{-}2)\), \((5\text{-}3)\), \((5\text{-}4)\), \((5\text{-}5)\), and \((5\text{-}6)\) can be perfectly distinguished by LOCC.
	
	(2) In cases \((5\text{-}7)\)--\((5\text{-}20)\), state \(1\) is orthogonal to states \(\{2,3,4\}\) on the first subsystem and is orthogonal to state \(5\) on the second subsystem. Hence the degree of state \(1\) is \(3\) on the first subsystem and \(1\) on the second
	subsystem. According to Lemma~\ref{lem:31-removal}, state \(1\) can be identified by LOCC. The result is that four states \(\{2,3,4,5\}\) are left to be distinguished.
	
	For these cases, the vector of the numbers of pairwise orthogonal relations of states \(\{2,3,4,5\}\) is \((2,1,2,1)\), up to a permutation of parties. Hence these four states can be regarded as four quadripartite OPSs with the vector \((2,2,1,1)\). By
	Lemma~\ref{lem:four-fourpartite-3111-2211}, states \(\{2,3,4,5\}\) can be perfectly distinguished by LOCC. Therefore, five quadripartite OPSs in cases \((5\text{-}7)\)--\((5\text{-}20)\) can be perfectly distinguished by LOCC.
	
	(3) In cases \((5\text{-}21)\)--\((5\text{-}31)\), state \(1\) is orthogonal to states \(\{2,3,4\}\) on the first subsystem and is orthogonal to state \(5\) on the third subsystem. Hence the degree of state \(1\) is \(3\) on the first subsystem and \(1\) on the third subsystem. According to Lemma~\ref{lem:31-removal}, state \(1\) can be identified by LOCC. The result is that four states \(\{2,3,4,5\}\) are left to be distinguished.
	
	For these cases, the vector of the numbers of pairwise orthogonal relations of states \(\{2,3,4,5\}\) is \((2,2,1,1)\), up to a permutation of parties. By Lemma~\ref{lem:four-fourpartite-3111-2211}, states \(\{2,3,4,5\}\) can be perfectly distinguished by LOCC. Therefore, five quadripartite OPSs in cases \((5\text{-}21)\)--\((5\text{-}31)\) can be
	perfectly distinguished by LOCC.
	
	(4) In cases \((5\text{-}32)\), \((5\text{-}33)\), \((5\text{-}34)\), \((5\text{-}35)\), \((5\text{-}36)\), \((5\text{-}38)\), and \((5\text{-}39)\), state \(1\) is orthogonal to states \(\{2,3,4\}\) on the first subsystem and is orthogonal to state \(5\) on the fourth
	subsystem. Hence the degree of state \(1\) is \(3\) on the first subsystem and \(1\) on the fourth subsystem. According to Lemma~\ref{lem:31-removal}, state \(1\) can be identified by LOCC. The result is that four states \(\{2,3,4,5\}\) are left to be distinguished.
	
	For these cases, the vector of the numbers of pairwise orthogonal relations of states \(\{2,3,4,5\}\) is \((2,2,2,0)\). This means that there is no orthogonal relation among states \(\{2,3,4,5\}\) on the fourth subsystem. Thus, states \(\{2,3,4,5\}\) can be regarded as four
	tripartite OPSs with the vector \((2,2,2)\) when the fourth subsystem is omitted. For the listed cases \((5\text{-}32)\), \((5\text{-}33)\), \((5\text{-}34)\), \((5\text{-}35)\), \((5\text{-}36)\), \((5\text{-}38)\), and \((5\text{-}39)\), the remaining four-state graphs correspond to the locally distinguishable graphs of type \((2,2,2)\). By Lemma~\ref{lem:four-tripartite-222-dist}, states \(\{2,3,4,5\}\) can be perfectly distinguished by LOCC. Therefore, five quadripartite OPSs in these cases can be perfectly distinguished by LOCC.
	
	(5) In case \((5\text{-}40)\), state \(5\) is orthogonal to states \(\{1,2\}\) on the third subsystem and is orthogonal to states \(\{3,4\}\) on the first subsystem. Moreover, the edge between states \(3\) and \(4\) is not on the third subsystem. Therefore, by Theorem~\ref{thm:22-reduction}, the third party can identify state \(5\) from all the other states.
	
	The result is that four states \(\{1,2,3,4\}\) are left to be distinguished. For states \(\{1,2,3,4\}\), the vector of the numbers of pairwise orthogonal relations is \((3,2,0,1)\). This means that there is no orthogonal relation among states \(\{1,2,3,4\}\) on the third
	subsystem. Thus, states \(\{1,2,3,4\}\) can be regarded as four tripartite OPSs with the vector \((3,2,1)\) when the third subsystem is omitted. By Lemma~\ref{lem:four-tripartite-411-321}, states \(\{1,2,3,4\}\) can be perfectly distinguished by LOCC. Therefore, five quadripartite OPSs in case \((5\text{-}40)\) can be perfectly distinguished by LOCC.
	
	(6) In case \((5\text{-}41)\), state \(2\) is orthogonal to state \(5\) on the fourth subsystem, which can be used to identify states \(\{1,2,3,4\}\) and \(\{1,3,4,5\}\) by the fourth party.
	
	For states \(\{1,2,3,4\}\), the vector of the numbers of pairwise orthogonal relations is \((3,2,1,0)\). This means that there is no orthogonal relation among states \(\{1,2,3,4\}\) on the fourth subsystem. Thus, states \(\{1,2,3,4\}\) can be regarded as four tripartite OPSs with
	the vector \((3,2,1)\) when the fourth subsystem is omitted. By Lemma~\ref{lem:four-tripartite-411-321}, states \(\{1,2,3,4\}\) can be perfectly distinguished by LOCC.
	
	For states \(\{1,3,4,5\}\), the vector of the numbers of pairwise orthogonal relations is \((3,1,2,0)\). This means that there is no orthogonal relation among states \(\{1,3,4,5\}\) on the fourth subsystem. Thus, states \(\{1,3,4,5\}\) can be regarded as four tripartite OPSs with
	the vector \((3,2,1)\), up to a permutation of parties, when the fourth subsystem is omitted. By Lemma~\ref{lem:four-tripartite-411-321}, states \(\{1,3,4,5\}\) can be perfectly distinguished by LOCC. Therefore, five quadripartite OPSs in case \((5\text{-}41)\) can be perfectly distinguished
	by LOCC.
	
	(7) In case \((5\text{-}42)\), state \(1\) is orthogonal to states \(\{4,5\}\) on the third subsystem and is orthogonal to states \(\{2,3\}\) on the first subsystem. Moreover, the edge between states \(2\) and \(3\) is not on the third subsystem. Therefore, by Theorem~\ref{thm:22-reduction}, the third party can identify state \(1\) from all the other states.
	
	The result is that four states \(\{2,3,4,5\}\) are left to be distinguished. For states \(\{2,3,4,5\}\), the vector of the numbers of pairwise orthogonal relations is \((3,2,0,1)\). This means that there is no orthogonal relation among states \(\{2,3,4,5\}\) on the third
	subsystem. Thus, states \(\{2,3,4,5\}\) can be regarded as four tripartite OPSs with the vector \((3,2,1)\) when the third subsystem is omitted. By Lemma~\ref{lem:four-tripartite-411-321}, states \(\{2,3,4,5\}\) can be perfectly distinguished by LOCC. Therefore, five quadripartite OPSs in case \((5\text{-}42)\) can be perfectly distinguished by LOCC.
	
	\subsection*{5. Category \((4,4,1,1)\)}
	
	The orthogonality graphs of five quadripartite OPSs with the vector of the numbers of pairwise orthogonal relations \((4,4,1,1)\) are shown in Fig.~\ref{fig:4411}. There are twenty different cases in this category, labeled \((6\text{-}1)\)--\((6\text{-}20)\).
	
	(1) In case \((6\text{-}1)\), state \(1\) is orthogonal to states \(\{2,3,4,5\}\) on the first subsystem. This means that the degree of state \(1\) on the first subsystem is \(4\). According to
	Lemma~\ref{lem:degree4-removal}, the first party can identify state \(1\) from all the other states. The result is that four states \(\{2,3,4,5\}\) are left to be distinguished.
	
	For states \(\{2,3,4,5\}\), the vector of the numbers of pairwise orthogonal relations is \((0,4,1,1)\). This means that there is no orthogonal relation among states \(\{2,3,4,5\}\) on the first subsystem. Thus, states \(\{2,3,4,5\}\) can be regarded as four tripartite OPSs with
	the vector \((4,1,1)\) when the first subsystem is omitted. By Lemma~\ref{lem:four-tripartite-411-321}, states \(\{2,3,4,5\}\) can be perfectly distinguished by LOCC. Therefore, five quadripartite OPSs in case \((6\text{-}1)\) can be perfectly distinguished by LOCC.
	
	(2) In cases \((6\text{-}2)\)--\((6\text{-}9)\), state \(1\) is orthogonal to states \(\{2,3,4\}\) on the first subsystem and is orthogonal to state \(5\) on the second subsystem. Hence the degree of state \(1\) is \(3\) on the first subsystem and \(1\) on the second
	subsystem. According to Lemma~\ref{lem:31-removal}, state \(1\) can be identified by LOCC. The result is that four states \(\{2,3,4,5\}\) are left to be distinguished.
	
	For these cases, the vector of the numbers of pairwise orthogonal relations of states \(\{2,3,4,5\}\) is \((1,3,1,1)\), up to a permutation of parties. Hence these four states can be regarded as four quadripartite OPSs with the vector \((3,1,1,1)\). By Lemma~\ref{lem:four-fourpartite-3111-2211}, states \(\{2,3,4,5\}\) can be perfectly distinguished by LOCC. Therefore, five quadripartite OPSs in cases \((6\text{-}2)\)--\((6\text{-}9)\) can be perfectly distinguished by LOCC.
	
	(3) In cases \((6\text{-}10)\) and \((6\text{-}11)\), state \(1\) is orthogonal to states \(\{2,3,4\}\) on the first subsystem and is orthogonal to state \(5\) on the third subsystem. Hence the degree of state \(1\) is \(3\) on the first subsystem and \(1\) on the third
	subsystem. According to Lemma~\ref{lem:31-removal}, state \(1\) can be identified by LOCC. The result is that four states \(\{2,3,4,5\}\) are left to be distinguished.
	
	For states \(\{2,3,4,5\}\), the vector of the numbers of pairwise orthogonal relations is \((1,4,0,1)\). This means that there is no orthogonal relation among states \(\{2,3,4,5\}\) on the third subsystem. Thus, states \(\{2,3,4,5\}\) can be regarded as four tripartite OPSs with
	the vector \((4,1,1)\) when the third subsystem is omitted. By Lemma~\ref{lem:four-tripartite-411-321}, states \(\{2,3,4,5\}\) can be perfectly distinguished by LOCC. Therefore, five quadripartite OPSs in cases \((6\text{-}10)\) and \((6\text{-}11)\) can be perfectly distinguished by LOCC.
	
	(4) In cases \((6\text{-}12)\) and \((6\text{-}13)\), state \(1\) is orthogonal to states \(\{2,3,4\}\) on the first subsystem and is orthogonal to state \(5\) on the fourth subsystem. Hence the degree of state \(1\) is \(3\) on the first subsystem and \(1\) on the fourth subsystem. According to Lemma~\ref{lem:31-removal}, state \(1\) can be identified by LOCC. The result is that four states \(\{2,3,4,5\}\) are left to be distinguished.
	
	For states \(\{2,3,4,5\}\), the vector of the numbers of pairwise orthogonal relations is \((1,4,1,0)\). This means that there is no orthogonal relation among states \(\{2,3,4,5\}\) on the fourth subsystem. Thus, states \(\{2,3,4,5\}\) can be regarded as four tripartite OPSs with the vector \((4,1,1)\) when the fourth subsystem is omitted. By Lemma~\ref{lem:four-tripartite-411-321}, states \(\{2,3,4,5\}\) can be perfectly distinguished by LOCC. Therefore, five quadripartite OPSs in cases \((6\text{-}12)\) and \((6\text{-}13)\) can be perfectly distinguished by LOCC.
	
	(5) In cases \((6\text{-}14)\), \((6\text{-}15)\), and \((6\text{-}16)\), state \(1\) is orthogonal to states \(\{2,3\}\) on the first subsystem and is orthogonal to states \(\{4,5\}\) on the second subsystem. Moreover, the edge between states \(2\) and \(3\) is not on the second subsystem. Therefore, by Theorem~\ref{thm:22-reduction}, the second party can identify state \(1\) from all the other states.
	
	The result is that four states \(\{2,3,4,5\}\) are left to be distinguished. For these cases, the vector of the numbers of pairwise orthogonal relations of states \(\{2,3,4,5\}\) is \((2,2,1,1)\), up to a permutation of parties. By Lemma~\ref{lem:four-fourpartite-3111-2211}, states \(\{2,3,4,5\}\) can be perfectly distinguished by LOCC. Therefore, five quadripartite OPSs in cases \((6\text{-}14)\), \((6\text{-}15)\), and \((6\text{-}16)\) can be perfectly distinguished by LOCC.
	
	(6) In cases \((6\text{-}17)\) and \((6\text{-}18)\), state \(1\) is orthogonal to states \(\{2,3\}\) on the first subsystem and is orthogonal to states \(\{4,5\}\) on the second subsystem. Moreover, the edge between states \(4\) and \(5\) is not on the first subsystem. Therefore, by Theorem~\ref{thm:22-reduction}, the first party can identify state \(1\) from all the other states.
	
	The result is that four states \(\{2,3,4,5\}\) are left to be distinguished. For these cases, the vector of the numbers of pairwise orthogonal relations of states \(\{2,3,4,5\}\) is \((2,2,1,1)\), up to a permutation of parties. By Lemma~\ref{lem:four-fourpartite-3111-2211}, states \(\{2,3,4,5\}\) can be perfectly distinguished by LOCC. Therefore, five quadripartite OPSs in cases \((6\text{-}17)\) and \((6\text{-}18)\) can be perfectly distinguished by LOCC.
	
	(7) In case \((6\text{-}19)\), state \(5\) is orthogonal to states \(\{1,3\}\) on the second subsystem, to state \(4\) on the first subsystem, and to state \(2\) on the fourth subsystem. Moreover, the edge between states \(4\) and \(2\) is not on the second subsystem. Therefore, by Theorem~\ref{thm:211-reduction}, the second party can identify state \(5\) from all the other states.
	
	The result is that four states \(\{1,2,3,4\}\) are left to be distinguished. For states \(\{1,2,3,4\}\), the vector of the numbers of pairwise orthogonal relations is \((3,2,1,0)\). This means that there is no orthogonal relation among states \(\{1,2,3,4\}\) on the fourth subsystem. Thus, states \(\{1,2,3,4\}\) can be regarded as four tripartite OPSs with the vector \((3,2,1)\) when the fourth subsystem is omitted. By Lemma~\ref{lem:four-tripartite-411-321}, states \(\{1,2,3,4\}\) can be perfectly distinguished by LOCC. Therefore, five quadripartite OPSs in case \((6\text{-}19)\) can be perfectly distinguished by LOCC.
	
	(8) In case \((6\text{-}20)\), state \(2\) is orthogonal to states \(\{1,5\}\) on the first subsystem and is orthogonal to states \(\{3,4\}\) on the second subsystem. Moreover, the edge between states \(3\) and \(4\) is not on the first subsystem. Therefore, by Theorem~\ref{thm:22-reduction}, the first party can identify state \(2\) from all the other states.
	
	The result is that four states \(\{1,3,4,5\}\) are left to be distinguished. For states \(\{1,3,4,5\}\), the vector of the numbers of pairwise orthogonal relations is \((2,2,1,1)\), up to a permutation of parties. By Lemma~\ref{lem:four-fourpartite-3111-2211}, states \(\{1,3,4,5\}\) can be perfectly distinguished by LOCC. Therefore, five quadripartite OPSs in case \((6\text{-}20)\) can be perfectly distinguished by LOCC.
	
	\subsection*{6. Category \((4,3,2,1)\)}
	The orthogonality graphs of five quadripartite OPSs with the vector of the numbers of pairwise orthogonal relations \((4,3,2,1)\) are shown in Fig.~\ref{fig:4321-part2}. There are one hundred and eighteen different cases in this category, labeled \((7\text{-}1)\)--\((7\text{-}118)\).
	
	(1) In cases \((7\text{-}1)\)--\((7\text{-}4)\), state \(1\) is orthogonal to states \(\{2,3,4,5\}\) on the first subsystem. This means that the degree of state \(1\) on the first subsystem is \(4\). According to Lemma~\ref{lem:degree4-removal}, the first party can identify state \(1\) from all the other states. The result is that four states \(\{2,3,4,5\}\) are left to be distinguished.
	
	For states \(\{2,3,4,5\}\), the vector of the numbers of pairwise orthogonal relations is \((0,3,2,1)\). This means that there is no orthogonal relation among states \(\{2,3,4,5\}\) on the first subsystem. Thus, states \(\{2,3,4,5\}\) can be regarded as four tripartite OPSs with the vector \((3,2,1)\) when the first subsystem is omitted. By Lemma~\ref{lem:four-tripartite-411-321}, states \(\{2,3,4,5\}\) can be perfectly distinguished by LOCC. Therefore, five quadripartite OPSs in cases \((7\text{-}1)\)--\((7\text{-}4)\) can be perfectly distinguished by LOCC.
	
	(2) In cases \((7\text{-}5)\)--\((7\text{-}34)\), state \(1\) is orthogonal to states \(\{2,3,4\}\) on the first subsystem and is orthogonal to state \(5\) on the second subsystem. Hence the degree of state \(1\) is \(3\) on the first subsystem and \(1\) on the second subsystem. According to Lemma~\ref{lem:31-removal}, state \(1\) can be identified by LOCC. The result is that four states \(\{2,3,4,5\}\) are left to be distinguished.
	
	For these cases, the vector of the numbers of pairwise orthogonal relations of states \(\{2,3,4,5\}\) is \((1,2,2,1)\). Hence these four states belong to the quadripartite type \((2,2,1,1)\). By Lemma~\ref{lem:four-fourpartite-3111-2211}, states \(\{2,3,4,5\}\) can be perfectly distinguished by LOCC. Therefore, five quadripartite OPSs in cases \((7\text{-}5)\)--\((7\text{-}34)\) can be perfectly distinguished by LOCC.
	
	(3) In cases \((7\text{-}35)\)--\((7\text{-}54)\), state \(1\) is orthogonal to states \(\{2,3,4\}\) on the first subsystem and is orthogonal to state \(5\) on the third subsystem. Hence the degree of state \(1\) is \(3\) on the first subsystem and \(1\) on the third subsystem. According to Lemma~\ref{lem:31-removal}, state \(1\) can be identified by LOCC. The result is that four states \(\{2,3,4,5\}\) are left to be distinguished.
	
	For these cases, the vector of the numbers of pairwise orthogonal relations of states \(\{2,3,4,5\}\) is \((1,3,1,1)\). Hence these four states belong to the quadripartite type \((3,1,1,1)\). By Lemma~\ref{lem:four-fourpartite-3111-2211}, states \(\{2,3,4,5\}\) can be perfectly distinguished by LOCC. Therefore, five quadripartite OPSs in cases \((7\text{-}35)\)--\((7\text{-}54)\) can be perfectly distinguished by LOCC.
	
	(4) In cases \((7\text{-}55)\)--\((7\text{-}66)\), state \(1\) is orthogonal to states \(\{2,3,4\}\) on the first subsystem and is orthogonal to state \(5\) on the fourth subsystem. Hence the degree of state \(1\) is \(3\) on the first subsystem and \(1\) on the fourth subsystem. According to Lemma~\ref{lem:31-removal}, state \(1\) can be identified by LOCC. The result is that four states \(\{2,3,4,5\}\) are left to be distinguished.
	
	For these cases, the vector of the numbers of pairwise orthogonal relations of states \(\{2,3,4,5\}\) is \((1,3,2,0)\). This means that there is no orthogonal relation among states \(\{2,3,4,5\}\) on the fourth subsystem. Thus, states \(\{2,3,4,5\}\) can be regarded as four tripartite OPSs with the vector \((3,2,1)\) when the fourth subsystem is omitted. By Lemma~\ref{lem:four-tripartite-411-321}, states \(\{2,3,4,5\}\) can be perfectly distinguished by LOCC. Therefore, five quadripartite OPSs in cases \((7\text{-}55)\)--\((7\text{-}66)\) can be perfectly distinguished by LOCC.
	
	(5) In cases \((7\text{-}67)\), \((7\text{-}68)\), and \((7\text{-}69)\), state \(1\) is orthogonal to states \(\{4,5\}\) on the second subsystem and is orthogonal to states \(\{2,3\}\) on the first subsystem. Moreover, the edge between states \(2\) and \(3\) is not on the second subsystem. Therefore, by Theorem~\ref{thm:22-reduction}, the second party can identify state \(1\) from all the other states.
	
	The result is that four states \(\{2,3,4,5\}\) are left to be distinguished. For states \(\{2,3,4,5\}\), the vector of the numbers of pairwise orthogonal relations is \((2,1,2,1)\). Hence these four states belong to the quadripartite type \((2,2,1,1)\). By Lemma~\ref{lem:four-fourpartite-3111-2211}, states \(\{2,3,4,5\}\) can be perfectly distinguished by LOCC. Therefore, five quadripartite OPSs in cases \((7\text{-}67)\), \((7\text{-}68)\), and \((7\text{-}69)\) can be perfectly distinguished by LOCC.
	
	(6) In cases \((7\text{-}70)\), \((7\text{-}71)\), \((7\text{-}72)\), and \((7\text{-}75)\), state \(1\) is orthogonal to states \(\{2,3\}\) on the first subsystem and is orthogonal to states \(\{4,5\}\) on the second subsystem. Moreover, the edge between states \(4\) and \(5\) is not on the first subsystem. Therefore, by Theorem~\ref{thm:22-reduction}, the first party can identify state \(1\) from all the other states.
	
	The result is that four states \(\{2,3,4,5\}\) are left to be distinguished. For states \(\{2,3,4,5\}\), the vector of the numbers of pairwise orthogonal relations is \((2,1,2,1)\). Hence these four states belong to the quadripartite type \((2,2,1,1)\). By Lemma~\ref{lem:four-fourpartite-3111-2211}, states \(\{2,3,4,5\}\) can be perfectly distinguished by LOCC. Therefore, five quadripartite OPSs in cases \((7\text{-}70)\), \((7\text{-}71)\), \((7\text{-}72)\), and \((7\text{-}75)\) can be perfectly distinguished by LOCC.
	
	(7) In cases \((7\text{-}73)\), \((7\text{-}74)\), and \((7\text{-}115)\), state \(2\) is orthogonal to states \(\{1,4\}\) on the first subsystem, is orthogonal to state \(3\) on the second subsystem, and is orthogonal to state \(5\) on the fourth subsystem. Moreover, the edge between states \(3\) and \(5\) is not on the first subsystem. Therefore, by Theorem~\ref{thm:211-reduction}, the first party can identify state \(2\) from all the other states. 
	
	The result is that four states \(\{1,3,4,5\}\) are left to be distinguished. For states \(\{1,3,4,5\}\), the vector of the numbers of pairwise orthogonal relations is \((2,2,2,0)\). This means that there is no orthogonal relation among states \(\{1,3,4,5\}\) on the fourth subsystem. Thus, states \(\{1,3,4,5\}\) can be regarded as four tripartite OPSs with the vector \((2,2,2)\) when the fourth subsystem is omitted. By Lemma~\ref{lem:four-tripartite-222-dist}, states \(\{1,3,4,5\}\) can be perfectly distinguished by LOCC. Therefore, five quadripartite OPSs in cases \((7\text{-}73)\), \((7\text{-}74)\), and \((7\text{-}115)\) can be perfectly distinguished by LOCC.
	
	(8) In cases \((7\text{-}76)\) and \((7\text{-}83)\), state \(2\) is orthogonal to states \(\{1,4\}\) on the first subsystem, is orthogonal to state \(3\) on the second subsystem, and is orthogonal to state \(5\) on the fourth subsystem. Moreover, the edge between states \(3\) and \(5\) is not on the first subsystem. Therefore, by Theorem~\ref{thm:211-reduction}, the first party can identify state \(2\) from all the other states. 
	
	The result is that four states \(\{1,3,4,5\}\) are left to be distinguished. For states \(\{1,3,4,5\}\), the vector of the numbers of pairwise orthogonal relations is \((2,2,2,0)\). This means that there is no orthogonal relation among states \(\{1,3,4,5\}\) on the fourth subsystem. Thus, states \(\{1,3,4,5\}\) can be regarded as four tripartite OPSs with the vector \((2,2,2)\) when the fourth subsystem is omitted. By Lemma~\ref{lem:four-tripartite-222-dist}, states \(\{1,3,4,5\}\) can be perfectly distinguished by LOCC. Therefore, five quadripartite OPSs in cases \((7\text{-}76)\) and \((7\text{-}83)\) can be perfectly distinguished by LOCC.
	
	(9) In case \((7\text{-}77)\), state \(5\) is orthogonal to states \(\{1,2\}\) on the third subsystem, is orthogonal to state \(4\) on the first subsystem, and is orthogonal to state \(3\) on the fourth subsystem. Moreover, the edge between states \(3\) and \(4\) is not on the third subsystem. Therefore, by Theorem~\ref{thm:211-reduction}, the third party can identify state \(5\) from all the other states.
	
	The result is that four states \(\{1,2,3,4\}\) are left to be distinguished. For states \(\{1,2,3,4\}\), the vector of the numbers of pairwise orthogonal relations is \((3,3,0,0)\). This means that there is no orthogonal relation among states \(\{1,2,3,4\}\) on the third and fourth subsystems. Thus, states \(\{1,2,3,4\}\) can be regarded as four bipartite OPSs when the third and fourth subsystems are omitted. By Lemma~\ref{lem:four-bipartite-ops}, states \(\{1,2,3,4\}\) can be perfectly distinguished by LOCC. Therefore, five quadripartite OPSs in case \((7\text{-}77)\) can be perfectly distinguished by LOCC. 
	
	(10) In case \((7\text{-}78)\), state \(5\) is orthogonal to states \(\{1,2\}\) on the third subsystem, is orthogonal to state \(3\) on the second subsystem, and is orthogonal to state \(4\) on the first subsystem. Moreover, the edge between states \(3\) and \(4\) is not on the third subsystem. Therefore, by Theorem~\ref{thm:211-reduction}, the third party can identify state \(5\) from all the other states.
	
	The result is that four states \(\{1,2,3,4\}\) are left to be distinguished. For states \(\{1,2,3,4\}\), the vector of the numbers of pairwise orthogonal relations is \((3,2,0,1)\). This means that there is no orthogonal relation among states \(\{1,2,3,4\}\) on the third subsystem. Thus, states \(\{1,2,3,4\}\) can be regarded as four tripartite OPSs with the vector \((3,2,1)\) when the third subsystem is omitted. By Lemma~\ref{lem:four-tripartite-411-321}, states \(\{1,2,3,4\}\) can be perfectly distinguished by LOCC. Therefore, five quadripartite OPSs in case \((7\text{-}78)\) can be perfectly distinguished by LOCC.
	
	(11) In cases \((7\text{-}80)\), \((7\text{-}81)\), \((7\text{-}82)\), \((7\text{-}84)\), \((7\text{-}86)\), \((7\text{-}92)\), and \((7\text{-}95)\), state \(1\) is orthogonal to states \(\{2,3\}\) on the first subsystem, is orthogonal to state \(4\) on the second subsystem, and is orthogonal to state \(5\) on the third subsystem. Moreover, the edge between states \(4\) and \(5\) is not on the first subsystem. Therefore, by Theorem~\ref{thm:211-reduction}, the first party can identify state \(1\) from all the other states.
	
	The result is that four states \(\{2,3,4,5\}\) are left to be distinguished. For states \(\{2,3,4,5\}\), the vector of the numbers of pairwise orthogonal relations is \((2,2,1,1)\). By Lemma~\ref{lem:four-fourpartite-3111-2211}, states \(\{2,3,4,5\}\) can be perfectly distinguished by LOCC. Therefore, five quadripartite OPSs in cases \((7\text{-}80)\), \((7\text{-}81)\), \((7\text{-}82)\), \((7\text{-}84)\), \((7\text{-}86)\), \((7\text{-}92)\), and \((7\text{-}95)\) can be perfectly distinguished by LOCC.
	
	(12) In cases \((7\text{-}87)\), \((7\text{-}88)\), \((7\text{-}89)\), \((7\text{-}97)\), \((7\text{-}98)\), and \((7\text{-}117)\), state \(2\) is orthogonal to states \(\{1,5\}\) on the first subsystem and is orthogonal to states \(\{3,4\}\) on the second subsystem. Moreover, the edge between states \(3\) and \(4\) is not on the first subsystem. Therefore, by Theorem~\ref{thm:22-reduction}, the first party can identify state \(2\) from all the other states.
	
	The result is that four states \(\{1,3,4,5\}\) are left to be distinguished. For states \(\{1,3,4,5\}\), the vector of the numbers of pairwise orthogonal relations is \((2,1,2,1)\). Hence these four states belong to the quadripartite type \((2,2,1,1)\). By Lemma~\ref{lem:four-fourpartite-3111-2211}, states \(\{1,3,4,5\}\) can be perfectly distinguished by LOCC. Therefore, five quadripartite OPSs in these cases can be perfectly distinguished by LOCC.
	
	(13) In case \((7\text{-}90)\), state \(5\) is orthogonal to states \(\{1,2\}\) on the third subsystem and is orthogonal to states \(\{3,4\}\) on the first subsystem. Moreover, the edge between states \(3\) and \(4\) is not on the third subsystem. Therefore, by Theorem~\ref{thm:22-reduction}, the third party can identify state \(5\) from all the other states. 
	
	The result is that four states \(\{1,2,3,4\}\) are left to be distinguished. For states \(\{1,2,3,4\}\), the vector of the numbers of pairwise orthogonal relations is \((2,3,0,1)\). This means that there is no orthogonal relation among states \(\{1,2,3,4\}\) on the third subsystem. Thus, states \(\{1,2,3,4\}\) can be regarded as four tripartite OPSs with the vector \((3,2,1)\) when the third subsystem is omitted. By Lemma~\ref{lem:four-tripartite-411-321}, states \(\{1,2,3,4\}\) can be perfectly distinguished by LOCC. Therefore, five quadripartite OPSs in case \((7\text{-}90)\) can be perfectly distinguished by LOCC.
	
	(14) In cases \((7\text{-}91)\), \((7\text{-}99)\), and \((7\text{-}104)\), state \(4\) is orthogonal to states \(\{1,2\}\) on the second subsystem, is orthogonal to state \(5\) on the first subsystem, and is orthogonal to state \(3\) on the third subsystem. Moreover, the edge between states \(3\) and \(5\) is not on the second subsystem. Therefore, by Theorem~\ref{thm:211-reduction}, the second party can identify state \(4\) from all the other states. 
	
	The result is that four states \(\{1,2,3,5\}\) are left to be distinguished. For states \(\{1,2,3,5\}\), the vector of the numbers of pairwise orthogonal relations is \((3,1,1,1)\). By Lemma~\ref{lem:four-fourpartite-3111-2211}, states \(\{1,2,3,5\}\) can be perfectly distinguished by LOCC. Therefore, five quadripartite OPSs in these cases can be perfectly distinguished by LOCC.
	
	(15) In case \((7\text{-}93)\), state \(2\) is orthogonal to states \(\{1,5\}\) on the first subsystem, is orthogonal to state \(4\) on the third subsystem, and is orthogonal to state \(3\) on the second subsystem. Moreover, the edge between states \(3\) and \(4\) is not on the first subsystem. Therefore, by Theorem~\ref{thm:211-reduction}, the first party can identify state \(2\) from all the other states. 
	
	The result is that four states \(\{1,3,4,5\}\) are left to be distinguished. For states \(\{1,3,4,5\}\), the vector of the numbers of pairwise orthogonal relations is \((2,2,1,1)\). By Lemma~\ref{lem:four-fourpartite-3111-2211}, states \(\{1,3,4,5\}\) can be perfectly distinguished by LOCC. Therefore, five quadripartite OPSs in case \((7\text{-}93)\) can be perfectly distinguished by LOCC. 
	
	(16) In case \((7\text{-}96)\), state \(2\) is orthogonal to states \(\{4,5\}\) on the second subsystem, is orthogonal to state \(3\) on the third subsystem, and is orthogonal to state \(1\) on the first subsystem. Moreover, the edge between states \(1\) and \(3\) is not on the second subsystem. Therefore, by Theorem~\ref{thm:211-reduction}, the second party can identify state \(2\) from all the other states. 
	
	The result is that four states \(\{1,3,4,5\}\) are left to be distinguished. For states \(\{1,3,4,5\}\), the vector of the numbers of pairwise orthogonal relations is \((3,1,1,1)\). By Lemma~\ref{lem:four-fourpartite-3111-2211}, states \(\{1,3,4,5\}\) can be perfectly distinguished by LOCC. Therefore, five quadripartite OPSs in case \((7\text{-}96)\) can be perfectly distinguished by LOCC. 
	
	(17) In cases \((7\text{-}100)\) and \((7\text{-}101)\), state \(2\) is orthogonal to states \(\{1,5\}\) on the first subsystem, is orthogonal to state \(4\) on the third subsystem, and is orthogonal to state \(3\) on the second subsystem. Moreover, the edge between states \(3\) and \(4\) is not on the first subsystem. Therefore, by Theorem~\ref{thm:211-reduction}, the first party can identify state \(2\) from all the other states. 
	
	The result is that four states \(\{1,3,4,5\}\) are left to be distinguished. For states \(\{1,3,4,5\}\), the vector of the numbers of pairwise orthogonal relations is \((2,2,1,1)\). By Lemma~\ref{lem:four-fourpartite-3111-2211}, states \(\{1,3,4,5\}\) can be perfectly distinguished by LOCC. Therefore, five quadripartite OPSs in cases \((7\text{-}100)\) and \((7\text{-}101)\) can be perfectly distinguished by LOCC. 
	
	(18) In cases \((7\text{-}102)\) and \((7\text{-}103)\), state \(2\) is orthogonal to states \(\{1,5\}\) on the first subsystem, is orthogonal to state \(4\) on the second subsystem, and is orthogonal to state \(3\) on the third subsystem. Moreover, the edge between states \(3\) and \(4\) is not on the first subsystem. Therefore, by Theorem~\ref{thm:211-reduction}, the first party can identify state \(2\) from all the other states. 
	
	The result is that four states \(\{1,3,4,5\}\) are left to be distinguished. For states \(\{1,3,4,5\}\), the vector of the numbers of pairwise orthogonal relations is \((2,2,1,1)\). By Lemma~\ref{lem:four-fourpartite-3111-2211}, states \(\{1,3,4,5\}\) can be perfectly distinguished by LOCC. Therefore, five quadripartite OPSs in cases \((7\text{-}102)\) and \((7\text{-}103)\) can be perfectly distinguished by LOCC.
	
	(19) In cases \((7\text{-}105)\)--\((7\text{-}109)\), state \(1\) is orthogonal to states \(\{4,5\}\) on the third subsystem and is orthogonal to states \(\{2,3\}\) on the first subsystem. Moreover, the edge between states \(2\) and \(3\) is not on the third subsystem. Therefore, by Theorem~\ref{thm:22-reduction}, the third party can identify state \(1\) from all the other states.
	
	The result is that four states \(\{2,3,4,5\}\) are left to be distinguished. For states \(\{2,3,4,5\}\), the vector of the numbers of pairwise orthogonal relations is \((2,3,0,1)\). This means that there is no orthogonal relation among states \(\{2,3,4,5\}\) on the third subsystem. Thus, states \(\{2,3,4,5\}\) can be regarded as four tripartite OPSs with the vector \((3,2,1)\) when the third subsystem is omitted. By Lemma~\ref{lem:four-tripartite-411-321}, states \(\{2,3,4,5\}\) can be perfectly distinguished by LOCC. Therefore, five quadripartite OPSs in cases \((7\text{-}105)\)--\((7\text{-}109)\) can be perfectly distinguished by LOCC.
	
	(20) In cases \((7\text{-}110)\), \((7\text{-}111)\), \((7\text{-}112)\), \((7\text{-}114)\), and \((7\text{-}116)\), state \(1\) is orthogonal to states \(\{2,3\}\) on the first subsystem, is orthogonal to state \(4\) on the third subsystem, and is orthogonal to state \(5\) on the fourth subsystem. Moreover, the edge between states \(4\) and \(5\) is not on the first subsystem. Therefore, by Theorem~\ref{thm:211-reduction}, the first party can identify state \(1\) from all the other states.
	
	The result is that four states \(\{2,3,4,5\}\) are left to be distinguished. For states \(\{2,3,4,5\}\), the vector of the numbers of pairwise orthogonal relations is \((2,3,1,0)\). This means that there is no orthogonal relation among states \(\{2,3,4,5\}\) on the fourth subsystem. Thus, states \(\{2,3,4,5\}\) can be regarded as four tripartite OPSs with the vector \((3,2,1)\) when the fourth subsystem is omitted. By Lemma~\ref{lem:four-tripartite-411-321}, states \(\{2,3,4,5\}\) can be perfectly distinguished by LOCC. Therefore, five quadripartite OPSs in these cases can be perfectly distinguished by LOCC.
	
	(21) In case \((7\text{-}113)\), state \(5\) is orthogonal to states \(\{2,3\}\) on the second subsystem, is orthogonal to state \(4\) on the first subsystem, and is orthogonal to state \(1\) on the fourth subsystem. Moreover, the edge between states \(1\) and \(4\) is not on the second subsystem. Therefore, by Theorem~\ref{thm:211-reduction}, the second party can identify state \(5\) from all the other states.
	
	The result is that four states \(\{1,2,3,4\}\) are left to be distinguished. For states \(\{1,2,3,4\}\), the vector of the numbers of pairwise orthogonal relations is \((3,1,2,0)\). This means that there is no orthogonal relation among states \(\{1,2,3,4\}\) on the fourth subsystem. Thus, states \(\{1,2,3,4\}\) can be regarded as four tripartite OPSs with the vector \((3,2,1)\) when the fourth subsystem is omitted. By Lemma~\ref{lem:four-tripartite-411-321}, states \(\{1,2,3,4\}\) can be perfectly distinguished by LOCC. Therefore, five quadripartite OPSs in case \((7\text{-}113)\) can be perfectly distinguished by LOCC.
	
	(22) In case \((7\text{-}118)\), state \(3\) is orthogonal to states \(\{1,5\}\) on the first subsystem, is orthogonal to state \(2\) on the second subsystem, and is orthogonal to state \(4\) on the third subsystem. Moreover, the edge between states \(2\) and \(4\) is not on the first subsystem. Therefore, by Theorem~\ref{thm:211-reduction}, the first party can identify state \(3\) from all the other states. 
	
	The result is that four states \(\{1,2,4,5\}\) are left to be distinguished. For states \(\{1,2,4,5\}\), the vector of the numbers of pairwise orthogonal relations is \((2,2,1,1)\). By Lemma~\ref{lem:four-fourpartite-3111-2211}, states \(\{1,2,4,5\}\) can be perfectly distinguished by LOCC. Therefore, five quadripartite OPSs in case \((7\text{-}118)\) can be perfectly distinguished by LOCC.	

	\subsection*{7. Category \((4,2,2,2)\)}
	
	The orthogonality graphs of five quadripartite OPSs with the vector of the numbers of pairwise orthogonal relations \((4,2,2,2)\) are shown in Fig.~\ref{fig:4222}. There are forty different cases in this category, labeled \((8\text{-}1)\)--\((8\text{-}40)\).
	
	(1) In cases \((8\text{-}1)\) and \((8\text{-}2)\), state \(1\) is orthogonal to states \(\{2,3,4,5\}\) on the first subsystem. This means that the degree of state \(1\) on the first subsystem is \(4\). According to Lemma~\ref{lem:degree4-removal}, the first party can identify state \(1\) from all the other states. The result is that four states \(\{2,3,4,5\}\) are left to be distinguished.
	
	For states \(\{2,3,4,5\}\), the vector of the numbers of pairwise orthogonal relations is \((0,2,2,2)\). This means that there is no orthogonal relation among states \(\{2,3,4,5\}\) on the first subsystem. Thus, these four states can be regarded as four tripartite OPSs with the vector \((2,2,2)\) when the first subsystem is omitted. By Lemma~\ref{lem:four-tripartite-222-dist}, states \(\{2,3,4,5\}\) can be perfectly distinguished by LOCC. Therefore, five quadripartite OPSs in cases \((8\text{-}1)\) and \((8\text{-}2)\) can be perfectly distinguished by LOCC.
	
	(2) In cases \((8\text{-}4)\), \((8\text{-}5)\), \((8\text{-}6)\), \((8\text{-}7)\), \((8\text{-}8)\), \((8\text{-}9)\), and \((8\text{-}10)\), state \(1\) is orthogonal to states \(\{2,3,4\}\) on the first subsystem and is orthogonal to state \(5\) on the second subsystem. Hence the degree of state \(1\) is \(3\) on the first subsystem and \(1\) on the second subsystem. According to Lemma~\ref{lem:31-removal}, state \(1\) can be identified by LOCC. The result is that four states \(\{2,3,4,5\}\) are left to be distinguished.
	
	For these cases, the vector of the numbers of pairwise orthogonal relations of states \(\{2,3,4,5\}\) is \((1,1,2,2)\), up to a permutation of parties. Hence these four states can be regarded as four quadripartite OPSs with the vector \((2,2,1,1)\). By Lemma~\ref{lem:four-fourpartite-3111-2211}, states \(\{2,3,4,5\}\) can be perfectly distinguished by LOCC. Therefore, five quadripartite OPSs in cases \((8\text{-}4)\)--\((8\text{-}10)\) can be perfectly distinguished by LOCC.
	
	(3) In cases \((8\text{-}11)\), \((8\text{-}12)\), \((8\text{-}13)\), and \((8\text{-}14)\), state \(1\) is orthogonal to states \(\{2,3,4\}\) on the first subsystem and is orthogonal to state \(5\) on the third subsystem. Hence the degree of state \(1\) is \(3\) on the first subsystem and \(1\) on the third subsystem. According to Lemma~\ref{lem:31-removal}, state \(1\) can be identified by LOCC. The result is that four states \(\{2,3,4,5\}\) are left to be distinguished.
	
	For these cases, the vector of the numbers of pairwise orthogonal relations of states \(\{2,3,4,5\}\) is \((1,2,1,2)\), up to a permutation of parties. Hence these four states can be regarded as four quadripartite OPSs with the vector \((2,2,1,1)\). By Lemma~\ref{lem:four-fourpartite-3111-2211}, states \(\{2,3,4,5\}\) can be perfectly distinguished by LOCC. Therefore, five quadripartite OPSs in cases \((8\text{-}11)\)--\((8\text{-}14)\) can be perfectly distinguished by LOCC.
	
	(4) In cases \((8\text{-}15)\), \((8\text{-}16)\), \((8\text{-}17)\), \((8\text{-}18)\), \((8\text{-}19)\), \((8\text{-}20)\), and \((8\text{-}21)\), state \(1\) is orthogonal to states \(\{2,3,4\}\) on the first subsystem and is orthogonal to state \(5\) on the fourth subsystem. Hence the degree of state \(1\) is \(3\) on the first subsystem and \(1\) on the fourth subsystem. According to Lemma~\ref{lem:31-removal}, state \(1\) can be identified by LOCC. The result is that four states \(\{2,3,4,5\}\) are left to be distinguished.
	
	For these cases, the vector of the numbers of pairwise orthogonal relations of states \(\{2,3,4,5\}\) is \((1,2,2,1)\), up to a permutation of parties. Hence these four states can be regarded as four quadripartite OPSs with the vector \((2,2,1,1)\). By Lemma~\ref{lem:four-fourpartite-3111-2211}, states \(\{2,3,4,5\}\) can be perfectly distinguished by LOCC. Therefore, five quadripartite OPSs in cases \((8\text{-}15)\)--\((8\text{-}21)\) can be perfectly distinguished by LOCC.
	
	(5) In case \((8\text{-}22)\), state \(1\) is orthogonal to states \(\{4,5\}\) on the second subsystem and is orthogonal to states \(\{2,3\}\) on the first subsystem. Moreover, the edge between states \(2\) and \(3\) is not on the second subsystem. Therefore, by Theorem~\ref{thm:22-reduction}, the second party can identify state \(1\) from all the other states.
	
	The result is that four states \(\{2,3,4,5\}\) are left to be distinguished. For states \(\{2,3,4,5\}\), the vector of the numbers of pairwise orthogonal relations is \((2,0,2,2)\). Thus, these four states can be regarded as four tripartite OPSs with the vector \((2,2,2)\) when the second subsystem is omitted. By Lemma~\ref{lem:four-tripartite-222-dist}, states \(\{2,3,4,5\}\) can be perfectly distinguished by LOCC. Therefore, five quadripartite OPSs in case \((8\text{-}22)\) can be perfectly distinguished by LOCC.
	
	(6) In cases \((8\text{-}24)\), \((8\text{-}29)\), and \((8\text{-}38)\), state \(5\) is orthogonal to states \(\{2,3\}\) on the fourth subsystem, is orthogonal to state \(4\) on the first subsystem, and is orthogonal to state \(1\) on the third subsystem. Moreover, the edge between states \(1\) and \(4\) is not on the fourth subsystem. Therefore, according to Theorem~\ref{thm:211-reduction}, the fourth party can identify state \(5\) from all the other states.
	
	The result is that four states \(\{1,2,3,4\}\) are left to be distinguished. For states \(\{1,2,3,4\}\), the vector of the numbers of pairwise orthogonal relations is \((3,2,1,0)\). This means that there is no orthogonal relation among states \(\{1,2,3,4\}\) on the fourth subsystem. Thus, states \(\{1,2,3,4\}\) can be regarded as four tripartite OPSs with the vector \((3,2,1)\) when the fourth subsystem is omitted. By Lemma~\ref{lem:four-tripartite-411-321}, states \(\{1,2,3,4\}\) can be perfectly distinguished by LOCC. Therefore, five quadripartite OPSs in cases \((8\text{-}24)\), \((8\text{-}29)\), and \((8\text{-}38)\) can be perfectly distinguished by LOCC.
	
	(7) In cases \((8\text{-}26)\), \((8\text{-}27)\), \((8\text{-}28)\), and \((8\text{-}30)\), state \(1\) is orthogonal to states \(\{2,3\}\) on the first subsystem, is orthogonal to state \(4\) on the second subsystem, and is orthogonal to state \(5\) on the third subsystem. Moreover, the edge between states \(4\) and \(5\) is not on the first subsystem. Therefore, according to Theorem~\ref{thm:211-reduction}, the first party can identify state \(1\) from all the other states.
	
	The result is that four states \(\{2,3,4,5\}\) are left to be distinguished. For these cases, the vector of the numbers of pairwise orthogonal relations of states \(\{2,3,4,5\}\) is \((2,1,1,2)\), up to a permutation of parties. Hence these four states can be regarded as four quadripartite OPSs with the vector \((2,2,1,1)\). By Lemma~\ref{lem:four-fourpartite-3111-2211}, states \(\{2,3,4,5\}\) can be perfectly distinguished by LOCC. Therefore, five quadripartite OPSs in cases \((8\text{-}26)\), \((8\text{-}27)\), \((8\text{-}28)\), and \((8\text{-}30)\) can be perfectly distinguished by LOCC.
	
	(8) In case \((8\text{-}31)\), state \(3\) is orthogonal to states \(\{1,5\}\) on the first subsystem, is orthogonal to state \(2\) on the second subsystem, and is orthogonal to state \(4\) on the fourth subsystem. Moreover, the edge between states \(2\) and \(4\) is not on the first subsystem. Therefore, according to Theorem~\ref{thm:211-reduction}, the first party can identify state \(3\) from all the other states.
	
	The result is that four states \(\{1,2,4,5\}\) are left to be distinguished. For states \(\{1,2,4,5\}\), the vector of the numbers of pairwise orthogonal relations is \((2,1,2,1)\), up to a permutation of parties. Hence these four states can be regarded as four quadripartite OPSs with the vector \((2,2,1,1)\). By Lemma~\ref{lem:four-fourpartite-3111-2211}, states \(\{1,2,4,5\}\) can be perfectly distinguished by LOCC. Therefore, five quadripartite OPSs in case \((8\text{-}31)\) can be perfectly distinguished by LOCC.
	
	(9) In cases \((8\text{-}32)\) and \((8\text{-}40)\), state \(5\) is orthogonal to states \(\{1,2\}\) on the fourth subsystem, is orthogonal to state \(4\) on the first subsystem, and is orthogonal to state \(3\) on the third subsystem. Moreover, the edge between states \(3\) and \(4\) is not on the fourth subsystem. Therefore, according to Theorem~\ref{thm:211-reduction}, the fourth party can identify state \(5\) from all the other states.
	
	The result is that four states \(\{1,2,3,4\}\) are left to be distinguished. For states \(\{1,2,3,4\}\), the vector of the numbers of pairwise orthogonal relations is \((3,2,1,0)\). This means that there is no orthogonal relation among states \(\{1,2,3,4\}\) on the fourth subsystem. Thus, states \(\{1,2,3,4\}\) can be regarded as four tripartite OPSs with the vector \((3,2,1)\) when the fourth subsystem is omitted. By Lemma~\ref{lem:four-tripartite-411-321}, states \(\{1,2,3,4\}\) can be perfectly distinguished by LOCC. Therefore, five quadripartite OPSs in cases \((8\text{-}32)\) and \((8\text{-}40)\) can be perfectly distinguished by LOCC.
	
	(10) In cases \((8\text{-}33)\) and \((8\text{-}34)\), state \(1\) is orthogonal to states \(\{2,3\}\) on the first subsystem, is orthogonal to state \(4\) on the second subsystem, and is orthogonal to state \(5\) on the fourth subsystem. Moreover, the edge between states \(4\) and \(5\) is not on the first subsystem. Therefore, according to Theorem~\ref{thm:211-reduction}, the first party can identify state \(1\) from all the other states.
	
	The result is that four states \(\{2,3,4,5\}\) are left to be distinguished. For states \(\{2,3,4,5\}\), the vector of the numbers of pairwise orthogonal relations is \((2,1,2,1)\), up to a permutation of parties. Hence these four states can be regarded as four quadripartite OPSs with the vector \((2,2,1,1)\). By Lemma~\ref{lem:four-fourpartite-3111-2211}, states \(\{2,3,4,5\}\) can be perfectly distinguished by LOCC. Therefore, five quadripartite OPSs in cases \((8\text{-}33)\) and \((8\text{-}34)\) can be perfectly distinguished by LOCC.
	
	(11) In case \((8\text{-}35)\), state \(5\) is orthogonal to states \(\{1,3\}\) on the fourth subsystem and is orthogonal to states \(\{2,4\}\) on the first subsystem. Moreover, the edge between states \(2\) and \(4\) is not on the fourth subsystem. Therefore, according to Theorem~\ref{thm:22-reduction}, the fourth party can identify state \(5\) from all the other states.
	
	The result is that four states \(\{1,2,3,4\}\) are left to be distinguished. For states \(\{1,2,3,4\}\), the vector of the numbers of pairwise orthogonal relations is \((2,2,2,0)\). This means that there is no orthogonal relation among states \(\{1,2,3,4\}\) on the fourth subsystem. Thus, states \(\{1,2,3,4\}\) can be regarded as four tripartite OPSs with the vector \((2,2,2)\) when the fourth subsystem is omitted. By Lemma~\ref{lem:four-tripartite-222-dist}, states \(\{1,2,3,4\}\) can be perfectly distinguished by LOCC. Therefore, five quadripartite OPSs in case \((8\text{-}35)\) can be perfectly distinguished by LOCC.
	
	(12) In cases \((8\text{-}36)\) and \((8\text{-}37)\), state \(1\) is orthogonal to states \(\{2,3\}\) on the first subsystem and is orthogonal to states \(\{4,5\}\) on the third subsystem. Moreover, the edge between states \(4\) and \(5\) is not on the first subsystem. Therefore, according to Theorem~\ref{thm:22-reduction}, the first party can identify state \(1\) from all the other states.
	
	The result is that four states \(\{2,3,4,5\}\) are left to be distinguished. For these two cases, the vector of the numbers of pairwise orthogonal relations of states \(\{2,3,4,5\}\) is \((2,2,0,2)\). This means that there is no orthogonal relation among states \(\{2,3,4,5\}\) on the third subsystem. Thus, states \(\{2,3,4,5\}\) can be regarded as four tripartite OPSs with the vector \((2,2,2)\) when the third subsystem is omitted. By Lemma~\ref{lem:four-tripartite-222-dist}, states \(\{2,3,4,5\}\) can be perfectly distinguished by LOCC. Therefore, five quadripartite OPSs in cases \((8\text{-}36)\) and \((8\text{-}37)\) can be perfectly distinguished by LOCC.
	
	\subsection*{8. Category \((3,3,3,1)\)}
	
	The orthogonality graphs of five quadripartite OPSs with the vector of the numbers of pairwise orthogonal relations \((3,3,3,1)\) are shown in Fig.~\ref{fig:3331}. There are thirty-three different cases in this category, labeled \((9\text{-}1)\)--\((9\text{-}33)\).
	
	(1) In cases \((9\text{-}1)\), \((9\text{-}2)\), \((9\text{-}3)\), and \((9\text{-}4)\), state \(1\) is orthogonal to states \(\{2,3,4\}\) on the first subsystem and is orthogonal to state \(5\) on the second subsystem. Hence the degree of state \(1\) is \(3\) on the first subsystem and \(1\) on the second subsystem. According to Lemma~\ref{lem:31-removal}, state \(1\) can be identified by LOCC. The result is that four states \(\{2,3,4,5\}\) are left to be distinguished.
	
	For these cases, the vector of the numbers of pairwise orthogonal relations of states \(\{2,3,4,5\}\) is \((0,2,3,1)\). This means that there is no orthogonal relation among states \(\{2,3,4,5\}\) on the first subsystem. Thus, states \(\{2,3,4,5\}\) can be regarded as four tripartite OPSs with the vector \((3,2,1)\), up to a permutation of parties, when the first subsystem is omitted. By Lemma~\ref{lem:four-tripartite-411-321}, states \(\{2,3,4,5\}\) can be perfectly distinguished by LOCC. Therefore, five quadripartite OPSs in cases \((9\text{-}1)\)--\((9\text{-}4)\) can be perfectly distinguished by LOCC.
	
	(2) In cases \((9\text{-}5)\), \((9\text{-}6)\), \((9\text{-}7)\), and \((9\text{-}8)\), state \(1\) is orthogonal to states \(\{2,3,4\}\) on the first subsystem and is orthogonal to state \(5\) on the third subsystem. Hence the degree of state \(1\) is \(3\) on the first subsystem and \(1\) on the third subsystem. According to Lemma~\ref{lem:31-removal}, state \(1\) can be identified by LOCC. The result is that four states \(\{2,3,4,5\}\) are left to be distinguished.
	
	For these cases, the vector of the numbers of pairwise orthogonal relations of states \(\{2,3,4,5\}\) is \((0,3,2,1)\). This means that there is no orthogonal relation among states \(\{2,3,4,5\}\) on the first subsystem. Thus, states \(\{2,3,4,5\}\) can be regarded as four tripartite OPSs with the vector \((3,2,1)\), up to a permutation of parties, when the first subsystem is omitted. By Lemma~\ref{lem:four-tripartite-411-321}, states \(\{2,3,4,5\}\) can be perfectly distinguished by LOCC. Therefore, five quadripartite OPSs in cases \((9\text{-}5)\)--\((9\text{-}8)\) can be perfectly distinguished by LOCC.
	
	(3) In case \((9\text{-}9)\), state \(1\) is orthogonal to states \(\{2,3,4\}\) on the first subsystem and is orthogonal to state \(5\) on the fourth subsystem. Hence the degree of state \(1\) is \(3\) on the first subsystem and \(1\) on the fourth subsystem. According to Lemma~\ref{lem:31-removal}, state \(1\) can be identified by LOCC. The result is that four states \(\{2,3,4,5\}\) are left to be distinguished.
	
	For states \(\{2,3,4,5\}\), the vector of the numbers of pairwise orthogonal relations is \((0,3,3,0)\). This means that there is no orthogonal relation among states \(\{2,3,4,5\}\) on the first and fourth subsystems. Thus, states \(\{2,3,4,5\}\) can be regarded as four bipartite OPSs when the first and fourth subsystems are omitted. By Lemma~\ref{lem:four-bipartite-ops}, states \(\{2,3,4,5\}\) can be perfectly distinguished by LOCC. Therefore, five quadripartite OPSs in case \((9\text{-}9)\) can be perfectly distinguished by LOCC.
	
	(4) In cases \((9\text{-}10)\), \((9\text{-}11)\), \((9\text{-}12)\), \((9\text{-}13)\), and \((9\text{-}14)\), state \(1\) is orthogonal to states \(\{2,3\}\) on the first subsystem and is orthogonal to states \(\{4,5\}\) on the second subsystem. Moreover, the edge between states \(4\) and \(5\) is not on the first subsystem. Therefore, by Theorem~\ref{thm:22-reduction}, the first party can identify state \(1\) from all the other states.
	
	The result is that four states \(\{2,3,4,5\}\) are left to be distinguished. For these cases, the vector of the numbers of pairwise orthogonal relations of states \(\{2,3,4,5\}\) is \((1,1,3,1)\), up to a permutation of parties. Hence these four states can be regarded as four quadripartite OPSs with the vector \((3,1,1,1)\). By Lemma~\ref{lem:four-fourpartite-3111-2211}, states \(\{2,3,4,5\}\) can be perfectly distinguished by LOCC. Therefore, five quadripartite OPSs in cases \((9\text{-}10)\)--\((9\text{-}14)\) can be perfectly distinguished by LOCC.
	
	(5) In case \((9\text{-}15)\), state \(2\) is orthogonal to states \(\{4,5\}\) on the third subsystem, is orthogonal to state \(1\) on the first subsystem, and is orthogonal to state \(3\) on the second subsystem. Moreover, the edge between states \(1\) and \(3\) is not on the third subsystem. Therefore, by Theorem~\ref{thm:211-reduction}, the third party can identify state \(2\) from all the other states.
	
	The result is that four states \(\{1,3,4,5\}\) are left to be distinguished. For states \(\{1,3,4,5\}\), the vector of the numbers of pairwise orthogonal relations is \((2,2,1,1)\). By Lemma~\ref{lem:four-fourpartite-3111-2211}, states \(\{1,3,4,5\}\) can be perfectly distinguished by LOCC. Therefore, five quadripartite OPSs in case \((9\text{-}15)\) can be perfectly distinguished by LOCC.
	
	(6) In cases \((9\text{-}16)\)--\((9\text{-}29)\), state \(1\) is orthogonal to states \(\{2,3\}\) on the first subsystem, is orthogonal to state \(4\) on the second subsystem, and is orthogonal to state \(5\) on the third subsystem. Moreover, the edge between states \(4\) and \(5\) is not on the first subsystem. Therefore, by Theorem~\ref{thm:211-reduction}, the first party can identify state \(1\) from all the other states.
	
	The result is that four states \(\{2,3,4,5\}\) are left to be distinguished. For these cases, the vector of the numbers of pairwise orthogonal relations of states \(\{2,3,4,5\}\) is \((1,2,2,1)\), up to a permutation of parties. Hence these four states can be regarded as four quadripartite OPSs with the vector \((2,2,1,1)\). By Lemma~\ref{lem:four-fourpartite-3111-2211}, states \(\{2,3,4,5\}\) can be perfectly distinguished by LOCC. Therefore, five quadripartite OPSs in cases \((9\text{-}16)\)--\((9\text{-}29)\) can be perfectly distinguished by LOCC.
	
	(7) In case \((9\text{-}33)\), state \(1\) is orthogonal to states \(\{2,3\}\) on the first subsystem and is orthogonal to states \(\{4,5\}\) on the third subsystem. Moreover, the edge between states \(4\) and \(5\) is not on the first subsystem. Therefore, by Theorem~\ref{thm:22-reduction}, the first party can identify state \(1\) from all the other states.
	
	The result is that four states \(\{2,3,4,5\}\) are left to be distinguished. For states \(\{2,3,4,5\}\), the vector of the numbers of pairwise orthogonal relations is \((1,3,1,1)\), up to a permutation of parties. Hence these four states can be regarded as four quadripartite OPSs with the vector \((3,1,1,1)\). By Lemma~\ref{lem:four-fourpartite-3111-2211}, states \(\{2,3,4,5\}\) can be perfectly distinguished by LOCC. Therefore, five quadripartite OPSs in case \((9\text{-}33)\) can be perfectly distinguished by LOCC.
	
	Combining the above cases, the listed graphs in the category \((3,3,3,1)\) can be perfectly distinguished by LOCC.
	
	\subsection*{9. Category \((3,3,2,2)\)}
	
	The orthogonality graphs of five quadripartite OPSs with the vector of the numbers of pairwise orthogonal relations \((3,3,2,2)\) are shown in Fig.~\ref{fig:3322-part2}. There are sixty-seven different cases in this category, labeled \((10\text{-}1)\)--\((10\text{-}67)\).
	
	(1) In cases \((10\text{-}1)\)--\((10\text{-}7)\), state \(1\) is orthogonal to states \(\{2,3,4\}\) on the first subsystem and is orthogonal to state \(5\) on the second subsystem. Hence the degree of state \(1\) is \(3\) on the first subsystem and \(1\) on the second subsystem. According to Lemma~\ref{lem:31-removal}, state \(1\) can be identified by LOCC. The result is that four states \(\{2,3,4,5\}\) are left to be distinguished.
	
	For these cases, the vector of the numbers of pairwise orthogonal relations of states \(\{2,3,4,5\}\) is \((0,2,2,2)\). This means that there is no orthogonal relation among states \(\{2,3,4,5\}\) on the first subsystem. Thus, states \(\{2,3,4,5\}\) can be regarded as four tripartite OPSs with the vector \((2,2,2)\) when the first subsystem is omitted. By Lemma~\ref{lem:four-tripartite-222-dist}, states \(\{2,3,4,5\}\) can be perfectly distinguished by LOCC. Therefore, five quadripartite OPSs in cases \((10\text{-}1)\)--\((10\text{-}7)\) can be perfectly distinguished by LOCC.
	
	(2) In cases \((10\text{-}9)\)--\((10\text{-}12)\), state \(1\) is orthogonal to states \(\{2,3,4\}\) on the first subsystem and is orthogonal to state \(5\) on the third subsystem. Hence the degree of state \(1\) is \(3\) on the first subsystem and \(1\) on the third subsystem. According to Lemma~\ref{lem:31-removal}, state \(1\) can be identified by LOCC. The result is that four states \(\{2,3,4,5\}\) are left to be distinguished.
	
	For these cases, the vector of the numbers of pairwise orthogonal relations of states \(\{2,3,4,5\}\) is \((0,3,1,2)\). This means that there is no orthogonal relation among states \(\{2,3,4,5\}\) on the first subsystem. Thus, states \(\{2,3,4,5\}\) can be regarded as four tripartite OPSs with the vector \((3,1,2)\) when the first subsystem is omitted. By Lemma~\ref{lem:four-tripartite-411-321}, states \(\{2,3,4,5\}\) can be perfectly distinguished by LOCC. Therefore, five quadripartite OPSs in cases \((10\text{-}9)\)--\((10\text{-}12)\) can be perfectly distinguished by LOCC.
	
	(3) In cases \((10\text{-}13)\)--\((10\text{-}17)\), state \(1\) is orthogonal to states \(\{2,3,4\}\) on the first subsystem and is orthogonal to state \(5\) on the fourth subsystem. Hence the degree of state \(1\) is \(3\) on the first subsystem and \(1\) on the fourth subsystem. According to Lemma~\ref{lem:31-removal}, state \(1\) can be identified by LOCC. The result is that four states \(\{2,3,4,5\}\) are left to be distinguished.
	
	For these cases, the vector of the numbers of pairwise orthogonal relations of states \(\{2,3,4,5\}\) is \((0,3,2,1)\). This means that there is no orthogonal relation among states \(\{2,3,4,5\}\) on the first subsystem. Thus, states \(\{2,3,4,5\}\) can be regarded as four tripartite OPSs with the vector \((3,2,1)\) when the first subsystem is omitted. By Lemma~\ref{lem:four-tripartite-411-321}, states \(\{2,3,4,5\}\) can be perfectly distinguished by LOCC. Therefore, five quadripartite OPSs in cases \((10\text{-}13)\)--\((10\text{-}17)\) can be perfectly distinguished by LOCC.
	
	(4) In cases \((10\text{-}18)\)--\((10\text{-}25)\), state \(1\) is orthogonal to states \(\{2,3\}\) on the first subsystem and is orthogonal to states \(\{4,5\}\) on the second subsystem. Moreover, the edge between states \(4\) and \(5\) is not on the first subsystem. Therefore, by Theorem~\ref{thm:22-reduction}, the first party can identify state \(1\) from all the other states.
	
	The result is that four states \(\{2,3,4,5\}\) are left to be distinguished. For these cases, the vector of the numbers of pairwise orthogonal relations of states \(\{2,3,4,5\}\) is \((1,1,2,2)\). Hence these four states belong to the quadripartite type \((2,2,1,1)\). By Lemma~\ref{lem:four-fourpartite-3111-2211}, states \(\{2,3,4,5\}\) can be perfectly distinguished by LOCC. Therefore, five quadripartite OPSs in cases \((10\text{-}18)\)--\((10\text{-}25)\) can be perfectly distinguished by LOCC.
	
	(5) In cases \((10\text{-}26)\) and \((10\text{-}56)\), state \(4\) is orthogonal to states \(\{2,3\}\) on the third subsystem, is orthogonal to state \(1\) on the second subsystem, and is orthogonal to state \(5\) on the first subsystem. Moreover, the edge between states \(1\) and \(5\) is not on the third subsystem. Therefore, by Theorem~\ref{thm:211-reduction}, the third party can identify state \(4\) from all the other states. 
	
	The result is that four states \(\{1,2,3,5\}\) are left to be distinguished. For states \(\{1,2,3,5\}\), the vector of the numbers of pairwise orthogonal relations is \((2,2,0,2)\). This means that there is no orthogonal relation among states \(\{1,2,3,5\}\) on the third subsystem. Thus, states \(\{1,2,3,5\}\) can be regarded as four tripartite OPSs with the vector \((2,2,2)\) when the third subsystem is omitted. By Lemma~\ref{lem:four-tripartite-222-dist}, states \(\{1,2,3,5\}\) can be perfectly distinguished by LOCC. Therefore, five quadripartite OPSs in cases \((10\text{-}26)\) and \((10\text{-}56)\) can be perfectly distinguished by LOCC.
	
	(6) In cases \((10\text{-}28)\)--\((10\text{-}38)\), \((10\text{-}40)\), and \((10\text{-}42)\), state \(1\) is orthogonal to states \(\{2,3\}\) on the first subsystem, is orthogonal to state \(4\) on the second subsystem, and is orthogonal to state \(5\) on the third subsystem. Moreover, the edge between states \(4\) and \(5\) is not on the first subsystem. Therefore, by Theorem~\ref{thm:211-reduction}, the first party can identify state \(1\) from all the other states.
	
	The result is that four states \(\{2,3,4,5\}\) are left to be distinguished. For these cases, the vector of the numbers of pairwise orthogonal relations of states \(\{2,3,4,5\}\) is \((1,2,1,2)\). Hence these four states belong to the quadripartite type \((2,2,1,1)\). By Lemma~\ref{lem:four-fourpartite-3111-2211}, states \(\{2,3,4,5\}\) can be perfectly distinguished by LOCC. Therefore, five quadripartite OPSs in cases \((10\text{-}28)\)--\((10\text{-}38)\), \((10\text{-}40)\), and \((10\text{-}42)\) can be perfectly distinguished by LOCC.
	
	(7) In cases \((10\text{-}39)\), \((10\text{-}54)\), and \((10\text{-}67)\), state \(2\) is orthogonal to states \(\{3,4\}\) on the second subsystem, is orthogonal to state \(1\) on the first subsystem, and is orthogonal to state \(5\) on the fourth subsystem. Moreover, the edge between states \(1\) and \(5\) is not on the second subsystem. Therefore, by Theorem~\ref{thm:211-reduction}, the second party can identify state \(2\) from all the other states. 
	
	The result is that four states \(\{1,3,4,5\}\) are left to be distinguished. For these cases, the vector of the numbers of pairwise orthogonal relations of states \(\{1,3,4,5\}\) is \((2,1,2,1)\). Hence these four states belong to the quadripartite type \((2,2,1,1)\). By Lemma~\ref{lem:four-fourpartite-3111-2211}, states \(\{1,3,4,5\}\) can be perfectly distinguished by LOCC. Therefore, five quadripartite OPSs in cases \((10\text{-}39)\), \((10\text{-}54)\), and \((10\text{-}67)\) can be perfectly distinguished by LOCC.
	
	(8) In case \((10\text{-}44)\), state \(4\) is orthogonal to states \(\{2,3\}\) on the fourth subsystem, is orthogonal to state \(1\) on the second subsystem, and is orthogonal to state \(5\) on the first subsystem. Moreover, the edge between states \(1\) and \(5\) is not on the fourth subsystem. Therefore, by Theorem~\ref{thm:211-reduction}, the fourth party can identify state \(4\) from all the other states.
	
	The result is that four states \(\{1,2,3,5\}\) are left to be distinguished. For states \(\{1,2,3,5\}\), the vector of the numbers of pairwise orthogonal relations is \((2,2,2,0)\). This means that there is no orthogonal relation among states \(\{1,2,3,5\}\) on the fourth subsystem. Thus, states \(\{1,2,3,5\}\) can be regarded as four tripartite OPSs with the vector \((2,2,2)\) when the fourth subsystem is omitted. By Lemma~\ref{lem:four-tripartite-222-dist}, states \(\{1,2,3,5\}\) can be perfectly distinguished by LOCC. Therefore, five quadripartite OPSs in case \((10\text{-}44)\) can be perfectly distinguished by LOCC.
	
	(9) In cases \((10\text{-}45)\)--\((10\text{-}51)\), \((10\text{-}53)\), \((10\text{-}55)\), \((10\text{-}57)\), and \((10\text{-}58)\), state \(1\) is orthogonal to states \(\{2,3\}\) on the first subsystem, is orthogonal to state \(4\) on the second subsystem, and is orthogonal to state \(5\) on the fourth subsystem. Moreover, the edge between states \(4\) and \(5\) is not on the first subsystem. Therefore, by Theorem~\ref{thm:211-reduction}, the first party can identify state \(1\) from all the other states.
	
	The result is that four states \(\{2,3,4,5\}\) are left to be distinguished. For these cases, the vector of the numbers of pairwise orthogonal relations of states \(\{2,3,4,5\}\) is \((1,2,2,1)\). Hence these four states belong to the quadripartite type \((2,2,1,1)\). By Lemma~\ref{lem:four-fourpartite-3111-2211}, states \(\{2,3,4,5\}\) can be perfectly distinguished by LOCC. Therefore, five quadripartite OPSs in these cases can be perfectly distinguished by LOCC.
	
	(10) In cases \((10\text{-}52)\) and \((10\text{-}66)\), state \(2\) is orthogonal to states \(\{3,4\}\) on the second subsystem, is orthogonal to state \(1\) on the first subsystem, and is orthogonal to state \(5\) on the third subsystem. Moreover, the edge between states \(1\) and \(5\) is not on the second subsystem. Therefore, by Theorem~\ref{thm:211-reduction}, the second party can identify state \(2\) from all the other states. 
	
	The result is that four states \(\{1,3,4,5\}\) are left to be distinguished. For these cases, the vector of the numbers of pairwise orthogonal relations of states \(\{1,3,4,5\}\) is \((2,1,1,2)\). Hence these four states belong to the quadripartite type \((2,2,1,1)\). By Lemma~\ref{lem:four-fourpartite-3111-2211}, states \(\{1,3,4,5\}\) can be perfectly distinguished by LOCC. Therefore, five quadripartite OPSs in cases \((10\text{-}52)\) and \((10\text{-}66)\) can be perfectly distinguished by LOCC.
	
	(11) In cases \((10\text{-}60)\) and \((10\text{-}61)\), state \(1\) is orthogonal to states \(\{4,5\}\) on the third subsystem and is orthogonal to states \(\{2,3\}\) on the first subsystem. Moreover, the edge between states \(2\) and \(3\) is not on the third subsystem. Therefore, by Theorem~\ref{thm:22-reduction}, the third party can identify state \(1\) from all the other states.
	
	The result is that four states \(\{2,3,4,5\}\) are left to be distinguished. For these cases, the vector of the numbers of pairwise orthogonal relations of states \(\{2,3,4,5\}\) is \((1,3,0,2)\). This means that there is no orthogonal relation among states \(\{2,3,4,5\}\) on the third subsystem. Thus, states \(\{2,3,4,5\}\) can be regarded as four tripartite OPSs with the vector \((3,2,1)\) when the third subsystem is omitted. By Lemma~\ref{lem:four-tripartite-411-321}, states \(\{2,3,4,5\}\) can be perfectly distinguished by LOCC. Therefore, five quadripartite OPSs in cases \((10\text{-}60)\) and \((10\text{-}61)\) can be perfectly distinguished by LOCC.
	
	(12) In cases \((10\text{-}62)\)--\((10\text{-}65)\), state \(1\) is orthogonal to states \(\{2,3\}\) on the first subsystem, is orthogonal to state \(4\) on the third subsystem, and is orthogonal to state \(5\) on the fourth subsystem. Moreover, the edge between states \(4\) and \(5\) is not on the first subsystem. Therefore, by Theorem~\ref{thm:211-reduction}, the first party can identify state \(1\) from all the other states.
	
	The result is that four states \(\{2,3,4,5\}\) are left to be distinguished. For these cases, the vector of the numbers of pairwise orthogonal relations of states \(\{2,3,4,5\}\) is \((1,3,1,1)\). Hence these four states belong to the quadripartite type \((3,1,1,1)\). By Lemma~\ref{lem:four-fourpartite-3111-2211}, states \(\{2,3,4,5\}\) can be perfectly distinguished by LOCC. Therefore, five quadripartite OPSs in cases \((10\text{-}62)\)--\((10\text{-}65)\) can be perfectly distinguished by LOCC.

	\bibliography{references}
	
\end{document}